\documentclass[12pt,a4paper]{article}

\usepackage{fix-cm}

\usepackage[utf8]{inputenc}
\usepackage[T1]{fontenc}

\usepackage{algorithm,algpseudocode}
\usepackage[many]{tcolorbox}
\newcounter{algoBoxCounter}
\newtcolorbox[use counter=algoBoxCounter]{algoBox}[2]{
  colback=gray!5,          
  colframe=gray!80,        
  enhanced,
  drop shadow={black!50},
  boxrule=1pt,
  float=t!,
  title={Algorithm~\thealgoBoxCounter: #1},  
  width=\textwidth,
  before skip=0pt,
  after skip=0pt,
  label type=algoBoxCounter,
  label={#2}               
}
\usepackage{tikz}
\usetikzlibrary{arrows.meta, positioning, calc}

\usepackage[top=1in, bottom=1in, left=1in, right=1in]{geometry}
\usepackage{setspace}
\usepackage{titlesec}
\usepackage{pdflscape}
\usepackage{graphicx}
\usepackage{booktabs}
\titlespacing*{\section}{0pt}{1ex}{2ex}
\titlespacing*{\subsection}{0pt}{2ex}{1ex}
\titlespacing*{\subsubsection}{0pt}{1ex}{1ex}
\usepackage{afterpage}

\usepackage{amsthm,bm,amsmath,amsfonts,amssymb,newtxtext,newtxmath} 
\DeclareMathOperator{\plim}{plim}

\makeatletter
\g@addto@macro\normalsize{%
  \setlength\abovedisplayskip{5pt plus 2pt minus 2pt}%
  \setlength\belowdisplayskip{5pt plus 2pt minus 2pt}%
  \setlength\abovedisplayshortskip{6pt plus 2pt minus 2pt}%
  \setlength\belowdisplayshortskip{6pt plus 2pt minus 2pt}%
}
\makeatother
\allowdisplaybreaks
\theoremstyle{plain}
\newtheorem{theorem}{Theorem}
\newtheorem{proposition}{Proposition}
\newtheorem{corollary}{Corollary}
\newtheorem{assumption}{Assumption}
\newtheorem{lemma}{Lemma}

\theoremstyle{definition}

\theoremstyle{remark}
\makeatletter
\renewenvironment{proof}[1][\proofname]{\par
    \pushQED{\qed}%
    \normalfont 
    \topsep 12pt plus 4pt minus 3pt
    \trivlist
    \item\relax
          {\bfseries
      #1\@addpunct{.}}\hspace\labelsep\ignorespaces
  }{%
    \popQED\endtrivlist\@endpefalse
    \addvspace{13pt plus 4pt minus 3pt}
  }
\makeatother

\usepackage{graphicx}
\usepackage[section]{placeins}
\usepackage{threeparttable}
\usepackage{siunitx}

\usepackage{hyperref}
\hypersetup{
  colorlinks=true,
  linkcolor=[RGB]{139,0,0},     
  citecolor=[RGB]{0,0,139},     
  urlcolor=[RGB]{0,50,150}
}
\usepackage{appendix}

\usepackage[comma]{natbib}
\usepackage{cleveref}
\crefname{figure}{figure}{figures}
\creflabelformat{figure}{#2#1#3}
\crefname{equation}{equation}{equations}
\creflabelformat{equation}{#2#1#3}
\crefname{lemma}{lemma}{lemmas}
\creflabelformat{lemma}{#2#1#3}
\crefname{theorem}{theorem}{theorems}
\creflabelformat{theorem}{#2#1#3}
\crefname{condition}{condition}{conditions}
\creflabelformat{condition}{#2#1#3}
\crefname{assumption}{assumption}{assumptions}
\creflabelformat{assumption}{#2#1#3}
\crefname{appendix}{appendix}{appendices}
\creflabelformat{appendix}{#2#1#3}
\crefname{algoBox}{algorithm}{algorithms}
\creflabelformat{algorithm}{#2#1#3}
\crefname{enumi}{}{}
\creflabelformat{enumi}{(#2#1#3)}
\crefname{appsec}{Appendix}{Appendices}
\creflabelformat{appsec}{#2#1#3}

\usepackage{enumitem}

\begin{document}

\title{\Large Social Interactions Models with Latent Structures
}

\date{}

\author{Zhongjian Lin\footnote{University of Georgia, \href{mailto:zhongjian.lin@uga.edu}{zhongjian.lin@uga.edu}} \and Zhentao Shi\footnote{The Chinese University of Hong Kong, \href{mailto:zhentao.shi@cuhk.edu.hk}{zhentao.shi@cuhk.edu.hk}}\and Yapeng Zheng\footnote{The Chinese University of Hong Kong, \href{mailto:yapengzheng@link.cuhk.edu.hk}{yapengzheng@link.cuhk.edu.hk}}}

\onehalfspacing

\maketitle

\begin{abstract}
This paper studies estimation and inference of heterogeneous social interactions featuring group fixed effects and slope heterogeneity under latent structure. Theoretically, in a binary choice model with social interactions we characterize the asymptotic bias of the slope coefficient due to the presence of high dimensional incidental parameters. Computationally, we invoke parametric bootstrap method to debias and establish asymptotic validity. Furthermore, tailored algorithms are employed to explore the latent structures and determine the number of clusters. Monte Carlo simulations confirm strong finite sample performance of our methods. In an application to students' risky behaviors, the algorithm detects two latent clusters and finds that peer effects are significant within one of the clusters, demonstrating the practical applicability in uncovering heterogeneous social interactions.

\bigskip

\noindent {\bf Keywords:} Classification, C-Lasso, Heterogeneity, Penalized Profile Likelihood, Sequential Estimation, Social Interactions

\medskip

\noindent {\bf JEL Classifications}: C35, C38, C57, I21.
\end{abstract}

\newpage

\onehalfspacing

\section{Introduction}
\noindent Binary choice models with social interactions, or peer effects, are widely employed to study decision-making processes among people connected through social networks. A typical empirical setting involves individuals distributed across many distinct groups (e.g., students assigned to classrooms with different class sizes and teaching qualities \citealp{graham2008identifying,burke2013classroom,sacerdote2011peer}), where interactions occur within groups only. When estimating peer effects, a naive approach to utilizing all available data would assume homogeneous parameters (including the intercept and slopes) across groups \citep{brock2001discrete}.
However, the homogeneity assumption, if violated, may incur severe bias. Conversely, allowing complete heterogeneity often yields unreliable group-level estimates due to small sample sizes. 

To balance the trade-off between model misspecification risk and statistical efficiency, we propose a binary choice model featuring social interactions, group fixed effects, and latent structures. 
In our empirical application with the National Longitudinal Study of Adolescent to Adult Health (Add Health) dataset, this model enables us to analyze heterogeneous peer effects among students across different clusters of schools. Our proposed method simultaneously identifies the latent clusters and estimates the unknown parameters, revealing that students in one of the clusters of schools exhibit statistically significant peer effects.

The binary choice model with social interactions and a latent structure is analyzed through a simultaneous game framework, where agents interact on exogenously given social networks within groups.   The exogenous network assumption is standard in peer effects studies. The linear-in-means model adopts it \citep{manski1993identification,manski2000economic},
as does the discrete choice model \citep{brock2001discrete,lin2017estimation}. 
The nested pseudo likelihood (NPL, \citealp{aguirregabiria2007sequential,lin2017estimation}) algorithm is the off-the-shelf method for estimating the structural parameters in games. The NPL algorithm sequentially estimates unknown parameters and updates the conditional choice probabilities (CCPs), thereby circumventing the computational challenges related to the growing dimension of states and actions. 

The data structure under consideration is similar to panel data, which is ubiquitous in econometric analysis.
Because panel data track individuals over time, the rich information they contain can produce more efficient estimators than those based solely on cross-sectional or time-series approaches. While parameter homogeneity is typically assumed to leverage the advantages of panel data --- including enhanced cross-sectional averaging power --- empirical studies frequently reject this assumption, see \citet{phillips2007transition}, \citet{browning2007heterogeneity}, \citet{su2013testing}, and \citet{lu2017determining}. As \citet{hsiao2022analysis} notes in his Chapter 1, inferences based on invalid homogeneity assumptions may be misleading. 
On the other end of the spectrum of specifications is complete parameter heterogeneity across all groups. However, the limited sample sizes associated with each group yield less reliable estimators.

This dilemma motivates us to examine an intermediate case through a latent structure model, where parameters are heterogeneous across clusters but homogeneous within clusters, with neither the number of clusters nor cluster membership known \emph{a priori}. Many econometric papers on social interactions hinge on pre-specified cluster structures. For example, \cite*{lewbel2023social} analyze social effects among elementary school students in the Student/Teacher Achievement Ratio (STAR) Project in the U.S. State of Tennessee. They partition the classes (the groups) into two environments (clusters) according to class sizes and estimate peer effects for each environment, respectively (this small/large classification is also adopted by \citet{graham2008identifying} to identify peer effects). However, external variables for identifying the latent structure are not always available. To address this problem, we integrate the Classifier-Lasso (C-Lasso, \citealp*{su2016identifying}) algorithm with the NPL estimation. The C-Lasso procedure solves a penalized optimization problem where a multiplicative penalty term is appended to a primary loss function. We show that it can identify the latent structure, thereby enabling analysis of peer effects within each cluster. 

While the C-Lasso procedure constitutes a non-convex problem approximable by a sequence of convex problems, its computational demands remain substantial. Moreover, the stability of the NPL algorithm may suffer when CCP updating becomes complex. We therefore propose a multi-step estimation algorithm that decouples C-Lasso from the NPL steps: (i) apply NPL to obtain preliminary parameter and CCP estimates for each group; (ii) plug the estimated CCPs into C-Lasso to extract cluster classifications; (iii) implement NPL to obtain post-classification estimators for each cluster. We establish large-sample results for all steps, including classification consistency for the membership and asymptotic normality of the post-classification estimators. Since group fixed effects are jointly estimated, an incidental parameter bias \citep{neyman1948consistent,hahn2004jackknife} affects the estimators of the slope parameters.
To our knowledge, no paper has studied this problem in the panel data setting for the NPL estimators.
We formally characterize this bias in an asymptotic framework where both the number of groups and group size grow to infinity while their ratio remains bounded. The popular split-panel jackknife estimator \citep{dhaene2015split,li2025baggingnetwork,mei2026nickell} and other split-sample-based bias-correction methods are difficult to implement in our framework because CCPs are jointly determined by all individuals in a group, preventing parameter estimation using only a subset of observations. Moreover, due to the presence of social effects, the analytical bias is too complex to estimate via simple plug-in methods.  We employ the bootstrap inference method of \citet{higgins2024bootstrap} for panel fixed-effect models to construct confidence intervals with correct coverage levels.

Beyond inference on the slope parameters, we develop inference on the average partial effects (APEs), a policy-relevant summary of a binary choice model, defined as the average change in the equilibrium choice probability induced by a unit perturbation of a covariate. In our setting, APEs are not simply the average of the partial derivatives to the covariates. Through the equilibrium fixed point, the influence of any covariate is amplified by a \emph{social multiplier} that depends on the within-group network and the peer-effect strength, so the APE captures both the direct effect of a covariate and its network-mediated indirect effect. We provide cluster-level APEs as well as a population-weighted APE that aggregates across clusters, with confidence intervals delivered by the parametric bootstrap. 

To summarize, our contributions are twofold. First, in the classical setting, individuals across groups are homogeneous.
This paper allows for the presence of fixed effects to capture rich heterogeneity.
Our theoretical contribution lies in
revealing the bias due to the high-dimensional incidental parameters
in a model with a ``fixed point'', which was unknown to the literature of endogenous peer effects,
to the best of our knowledge.
Second, as engineering contributions, we adapt bootstrap methods from the panel
data literature to overcome the above-mentioned incidental parameter bias, and further
enhance the flexibility of panel data modeling of peer effects by accommodating cluster-level heterogeneity via latent structures.
From the practical side, 
we provide a pipeline of tailored algorithms to detect the latent structures, estimate the structural parameters, and conduct inference about objects of 
economic interest. 
This paper aligns with the ongoing movement in relaxing classical parsimonious econometric models to meet the challenges brought about by big data, large models, and real-world complexities.

\subsection{Related literature}
Our paper contributes to the literature on social interactions models. \citet{manski1993identification} introduced the famous linear-in-means model and demonstrated the non-identification of peer effects due to the ``reflection problem''. Subsequent studies have extended and refined this framework, including \cite*{lee2007identification,graham2008identifying,bramoulle2009identification,calvo2009peer,goldsmith2013social,ross2022measuring,lin2026social}, among others.
We focus on binary choice models, which were pioneered by \citet{brock2001discrete,brock2001interactions}. They established identification for binary choices in social interactions models where each group member's decision is influenced by a rational expectation of the average choice in the group. \cite*{lee2014binary} further advanced this literature by achieving identification and providing maximum likelihood estimation for a general network model with heterogeneous rational expectations. \cite*{lin2017estimation,lin2021uncovering} and \citet{lin2024binary} investigated the estimation and inference of social interaction effects in various scenarios and proposed using NPL to reduce the computational burden of estimating CCPs. \citet{lin2025endogenous} studied the endogenous treatment models with social interactions in both the treatment and outcome equations. We introduce the group fixed effects, and apply a parametric bootstrap to conduct statistical inference, which fills a gap between the literature on peer effects and panel data. Moreover, the use of latent structure modeling in this context represents a novel contribution to the strand of literature on social interactions. Heterogeneity is ubiquitous in empirical studies. In peer effects studies, researchers find that the dependence of peer effects on a variety of background characteristics \citep*[e.g.][find that the impact of peer weight is larger among females and adolescents with high BMI]{trogdon2008peer}. The latent structure helps us identify heterogeneous peer effects in different clusters without prior knowledge about memberships.

We also contribute to the study of fixed effects models with a latent structure. We introduce the latent structure to social interactions models, and study the classification consistency under an NPL estimation framework. The literature offers several approaches for determining an unknown cluster structure. The first approach relies on finite mixture models. For panel discrete choice models, \citet{kasahara2009nonparametric} and \citet{browning2014dynamic} investigated identification for a fixed number of clusters using nonparametric mixture distributions. In contrast, our model adopts a fully parametric approach and identifies the latent structure through the C-Lasso algorithm. This algorithm has been further applied in other settings by \citet{su2018identifying} and \cite*{su2019sieve}. The second approach is based on the K-means algorithm, commonly used in statistical analysis. \citet{lin2012estimation} and \citet{sarafidis2015partially} applied it to linear panel data models where the slope coefficients exhibit cluster structures. \cite*{bonhomme2015grouped} extended this method to linear panel data models with additive fixed effects having cluster structures and studied the asymptotic properties of the estimator. This approach was further generalized in \cite*{bonhomme2022discretizing} to analyze general fixed effects models. \citet{wang2021identifying} provided a comparison between the C-Lasso and K-means approaches in their introduction, highlighting the relative strengths and limitations of each method. 

This paper proceeds as follows. In Section \ref{sec:Model}, we introduce the heterogeneous binary choice model with latent cluster structures and social interactions and propose the estimation algorithm which combines NPL and C-Lasso. Section \ref{sec:large_sample_results} establishes asymptotic properties of the proposed method. We conduct Monte Carlo experiments in Section \ref{sec:MC} to illustrate the finite sample performance in classification consistency and the quality of the estimates.
Using the Add Health dataset, Section \ref{sec:empirical_application} presents the empirical results on the peer effects among students across different schools in their choices toward risky behaviors. The last section concludes. We present proofs in Appendices \ref{appendix} and \ref{appendix-lemmas}.
\medskip\medskip


\section{Social Interactions Models}\label{sec:Model}

A social interactions model with unobserved group heterogeneity naturally resembles a panel data model.
In standard panel data, the textbook default is ``large cross section and time dimension (usually denoted as $N$ and $T$, respectively)''.  
In the social interactions model, a set of $n$ individuals, for example the students in a school, forms a group, and there are $G$ such groups. Therefore, in our context $G$ is the counterpart of panel data's $N$, while $n$ is the counterpart of panel data's $T$. 
Let $[n]$ denote $\{1,\cdots,n\}$. For simplicity, we present the mathematical formulation as a balanced panel data structure: there are $G$ independent groups, and each group has $n$ individuals interacting with each other. An extension to unbalanced panels only requires a modification of notation.

The groups (e.g., schools) are observed by the researcher, whereas the latent clusters introduced in Section~\ref{sec:adjusted-algorithm} are unobserved and must be estimated from the data. 

A generic individual $i$ in the $g$th group, with a vector of observable characteristics $X_{ig}$, takes a binary action $Y_{ig} \in \{ 0,1\} $. His behavior is influenced by the social network in which he lives, denoted by the friendship adjacency matrix $F^g=\{F^g_{ij}\}_{i,j \in [n]}$. If $F^g_{ij}=1$, then $i$ is influenced by $j$, and $F^g_{ij} = 0$ otherwise. By convention, we set the self-links $F^g_{ii}=0$ for $i\in[n]$. Denote $F_i^g= \{j: F^g_{ij}=1\}$ as the set of $i$'s influencers (or friends), and $N_i^g=\#F_i^g$ as the number of influencers. 

Denote $\mathbb I_g = \{X_{ig},F_i^g\}_{i=1}^n$ as the public information set. We consider discrete games with incomplete information. 

Different individuals have different numbers of peers. To measure the influence of peers' actions on $i$, we collapse it into an average
$$\overline{P}_{ig0}=1\{N_i^g>0\} \cdot \frac{1}{N_i^g} \sum_{j \in F_i^g}P_{jg0},$$
where $P_{jg0}= \Pr(Y_{jg}=1|\mathbb I_g,\varepsilon_{ig})=\mathbb E(Y_{jg}|\mathbb I_g,\varepsilon_{ig})$ is the conditional belief of the influencer's choice. It can be simplified to $P_{jg0}= \Pr(Y_{jg}=1|\mathbb I_g)=\mathbb E(Y_{jg}|\mathbb I_g)$ with the independence assumption on the private information we make below. 
The observed binary outcome $Y_{ig}$ is generated from a random utility model of social interactions: 
\begin{equation}
\label{eq:model}
Y_{ig}=1 \left\{\overline{P}_{ig0}\overline\beta_{g0}+X_{ig}'\beta_{g0}+\mu_{g0}>\varepsilon_{ig} \right \},\quad i\in[n],g\in[G],
\end{equation}
where in the general case the slope coefficient of the peer effect $\overline\beta_{g0}$, the slope coefficient of the individual characteristics $\beta_{g0}$, and the group fixed effect $\mu_{g0}$ can all vary across groups. In dynamic games, \citet{aguirregabiria2007sequential} investigate the unobserved group/market heterogeneity and they treat it as a random effect and a part of the public information. \citet{arcidiacono2012estimating} study spillover effects across groups with fixed effects on continuous outcomes. Here we take $\mu$'s as fixed-effect parameters in the binary choice model. In particular, the fixed effect $\mu_{g0}$ enters the single index additively. 
Although fixed effects are standard in the panel data literature, to the best of our knowledge,
no paper considers fixed effects when estimating peer effects. We provide the first formal results about conducting valid statistical inference in this setting.

 Model \eqref{eq:model} adopts an incomplete information structure for the discrete game, where the belief about influencers' choices, i.e., the conditional expectation, enters the utility function of individuals.  The random error $\varepsilon_{ig}$ is independently and identically distributed (i.i.d.)~across $i\in[n]$ and $g\in[G]$, with a known distribution. In this paper, we assume $\varepsilon_{ig}$ follows a standard logistic distribution; there is no difficulty in allowing other link functions, for example, the probit. Another straightforward extension is to include the contextual effects into the single index, as in \cite*{lin2021uncovering}. We focus on the current simple model to fix ideas.

To set up the building blocks for the estimation scheme, denote the data as $w_{ig}=(Y_{ig},X_{ig}')'$ and the slope coefficient vector as $\theta_g=(\overline\beta_g,\beta_g')'$, with dimension $1+\mathrm{dim}(X_{ig})$. 
The \emph{negative} log-likelihood of an individual $i$ in group $g$ is 
\begin{align*}
   \psi(w_{ig},P_g;\theta_g,\mu_g) 
  =  &-Y_{ig}\log \Lambda(\overline{P}_{ig}\overline\beta_g+X_{ig}'\beta_g+\mu_g) \\
  & -(1-Y_{ig})\log [1-\Lambda(\overline{P}_{ig}\overline\beta_g+X_{ig}'\beta_g+\mu_g)],
\end{align*}
where $\Lambda(x)=(1+\mathrm{e}^{-x})^{-1}$ is the cumulative distribution function (CDF) of the standard logistic distribution.
Let $P_{g}=\{P_{ig}\}_{i\in[n]}$ be a choice probabilities profile, 
and then the average negative log-likelihood for the group $g$ can be written as
\begin{equation*}
    \Psi_{ng}(\mu_g,\theta_g;P_g)=\frac{1}{n}\sum_{i\in[n]}\psi(w_{ig},P_g;\theta_g,\mu_g).
\end{equation*}
Furthermore, the average negative log-likelihood over the $G$ groups is 
\begin{equation}\label{eq:likelihood}
    \Psi_{n}(\bm\mu,\bm\theta;\bm{P})=\frac{1}{G}\sum_{g\in[G]}\Psi_{ng}(\mu_g,\theta_g;P_g),
\end{equation}
where $\bm\mu=\{\mu_g\}_{g\in[G]}$, $\bm\theta=\{\theta_g\}_{g\in[G]}$, and $\bm{P}=\{P_g\}_{g\in[G]}$ concatenate the parameters over all groups.

\subsection{Nested Pseudo Likelihood Algorithm: Common Slope Parameter}

In empirical work using fixed effects models, most applications presume homogeneous slope coefficients to capture the commonality across groups while allowing the fixed effects to be unrestricted. We start with this convention by restricting $$\theta_1=\cdots=\theta_{G}=\theta=(\overline\beta,\beta')',$$ 
under which the negative log-likelihood function $\Psi_{n}(\bm\mu,\bm\theta;\bm{P})$ in \eqref{eq:likelihood} can be written as $\Psi_n(\bm{\mu},\theta;\bm{P})$, where a common $\theta$ is shared by all groups.
 
Given the model \eqref{eq:model}, the equilibrium CCP profile $\bm{P}(\bm\mu,\theta)=\{P_{ig}(\mu_g,\theta)\}_{i\in[n],g\in[G]}$ is determined as fixed points of the following equations
\begin{equation}
\label{eq:CCP-def}
P_{ig}=\Lambda(\overline{P}_{ig}\overline\beta+X_{ig}'\beta+\mu_g),\  i\in[n],g\in[G].
\end{equation} 
Let $\mathcal{A} \subset \mathbb R$ be a common parameter space for all fixed effects $\mu_g$
and $\Theta  \subset \mathbb{R}^{1+\mathrm{dim}(X_{ig})} $ be the common parameter space for $\theta$ --- both are assumed to be compact. 
Algorithm \ref{alg:NPL_alg_FE} presents a sequential estimation procedure for fixed effects and common parameters jointly via the NPL algorithm. In this algorithm, parameter estimation and CCP updates are executed sequentially, which simplifies the computation relative to nonparametric estimation of CCPs   \citep[see][]{aguirregabiria2002swapping,aguirregabiria2007sequential,lin2017estimation,lin2024binary}.

\begin{algoBox}{NPL with Fixed Effects}{alg:NPL_alg_FE}
\begin{algorithmic}[1]
\small
\State Initialize $P_{ig}^{(0)}$, $i\in[n],g\in[G]$, and set $t = 1$
\Repeat
    \State $(\bm\mu^{(t)},\theta^{(t)}) \gets \underset{(\bm\mu,\theta)\in\mathcal{A}^G\times\Theta}{\arg\min} \Psi_{n}(\bm\mu,\theta; \bm{P}^{(t-1)})$ \Comment{Minimize negative likelihood}
    \State $P_{ig}^{(t)} \gets \Lambda\Big(X_{ig}'\beta^{(t)}+\mu_g^{(t)}+\overline{P}_{ig}^{(t-1)}\overline\beta^{(t)}\Big),\ i\in[n],g\in[G]$ \Comment{Update CCPs}
    \State $t \gets t + 1$
\Until{$\max_{i\in[n],g\in[G]}|P_{ig}^{(t)} - P_{ig}^{(t-1)}| \leq 10^{-5}$}
\State \textbf{Output}: $(\widehat{\bm\mu},\widehat\theta)= (\bm\mu^{(t)},\theta^{(t)})$ and $\widehat{\bm{P}} = \widehat{\bm{P}}^{(t)}$
\end{algorithmic}
\end{algoBox}

We will establish the asymptotic normality of $\widehat\theta$ in Theorem \ref{thm:post_classification_estimation}\ref{thm:post_classification_estimation-a}. However, statistical inference based on this distributional result faces two difficulties. First, similar to fixed effects estimators in the panel data literature, $\widehat\theta$ suffers from an incidental parameter problem \citep{neyman1948consistent}. Theoretical analyses in Appendix \ref{appendix} show that the analytic form of the bias term depends not only on third derivatives of the likelihood function but also on the derivatives of the CCPs $\bm{P}(\bm\mu,\theta)$, indicating that the analytic bias-correction is too complicated to carry out. Moreover, as the CCPs are jointly determined through social interactions, split-sample-based bias-correction methods \citep{dhaene2015split} do not work. Second, direct estimation of asymptotic variance performs poorly in practice. Therefore, the literature on estimating peer effects relies on (nonparametric) bootstrap to conduct statistical inference \citep*[e.g.][]{lewbel2023social}. Such a method can only provide estimates of standard errors, while the bootstrap confidence intervals are invalid due to the asymptotic bias. 

While Algorithm \ref{alg:NPL_alg_FE} provides point estimates of the parameters, to conduct statistical inference about the slope coefficient $\theta$, we proceed with Algorithm \ref{alg:Bootstrap}, a version of the \emph{parametric} bootstrap method for fixed effects models proposed by \citet{higgins2024bootstrap} adapted to the current setting. 

\begin{algoBox}{Bootstrap Inference for $a'\theta$}{alg:Bootstrap}
\begin{algorithmic}[1]
\small
\State \textbf{Input:} A non-random vector $a$, a level $\alpha$, and the number of replications $B$
\State \textbf{Initialize:} Get initial estimator $\widehat{\theta}$ from Algorithm \ref{alg:NPL_alg_FE}
\For{$b = 1$ to $B$}
    \State Draw $\varepsilon_{ig}^b \sim \text{Logistic}(0,1)$ independently
    \State Generate $Y_{ig}^b = 1\{\widehat{\overline{P}}_{ig}\widehat{\overline{\beta}} + X_{ig}'\widehat{\beta} + \widehat{\mu}_g > \varepsilon_{ig}^b\}$
    \State Obtain $\widehat{\theta}^b$ via Algorithm \ref{alg:NPL_alg_FE} using $(Y^b, X)$ \Comment{Bootstrap}
\EndFor
\State Compute empirical quantile function $Q^*$ of $\{a'(\widehat{\theta}^b-\widehat\theta)\}_{b\in[B]}$
\State \textbf{Output:} 
\State \quad Debiased estimator: $a'\widehat{\theta}^* = a'\widehat{\theta} - Q^*(1/2)$ \Comment{Estimation}
\State \quad Confidence interval: $[a'\widehat{\theta} - Q^*(1-\alpha/2),\ a'\widehat{\theta} - Q^*(\alpha/2)]$ \Comment{Inference}
\end{algorithmic}
\end{algoBox}

In Algorithm~\ref{alg:Bootstrap}, $\widehat{\theta}^b$ is the NPL estimator from the $b$th bootstrap sample, $Q^*$ is the empirical quantile function of $\{a'(\widehat{\theta}^b - \widehat{\theta})\}_{b=1}^B$, and $a'\widehat{\theta}^* = a'\widehat{\theta} - Q^*(1/2)$ is the debiased scalar estimator.  The confidence interval inverts the bootstrap distribution by subtracting the appropriate quantiles from $a'\widehat{\theta}$.

\subsection{Classification and Estimation with Latent Structures}
The previous subsection assumes that the $\theta_{g0}$ are common for all groups, leading to the standard discrete game model without group-level slope heterogeneity. However, peers' influence may vary across different schools; for example, \cite*{lewbel2023social} document different levels of peer effects between large and small groups. Therefore, neglecting the slope heterogeneity is questionable for valid estimation and inference. 
To accommodate heterogeneity among individuals, we relax the common slope assumption by allowing  $\theta_{g0}$ to follow a cluster pattern of a general form
\begin{equation*}
\theta_{g0}=\sum_{k\in[K_0]}1\{g\in C_{k0} \}\vartheta_{k0} \text{ with }\vartheta_{k0}=(\overline{\beta}_{k0},\beta_{k0}')',
\end{equation*}
where the latent clusters $C_{k0}$, $k\in[K_0]$, consist of a partition of groups. The number of latent clusters, $K_0$, is finite.  Let $G_k=\#C_{k0}$  be the number of groups in each cluster $k$; we posit that $G_k/G\to c_k\in (0,1)$ as $G\to \infty$, i.e., all latent clusters are non-vanishing asymptotically. In practice, when $K_0$ is unknown, we will provide an information criterion (IC) in Section \ref{sec:post-classification} to determine it jointly with the cluster memberships. The latent structure here is different from the mixture data assumption considered in \citet{kasahara2009nonparametric}, which assumes that choice probability is a finite mixture of base distributions. 

A modified Penalized Profile Likelihood estimator is adapted from the NPL with fixed effects. Let the fixed effects estimator be 
$$
\widehat\mu_g(\theta_g,P_g)= \underset{\mu_g\in\mathcal{A}}{\arg\min}\ \Psi_{ng}(\mu_g,\theta_g;P_g).
$$
For each group $g$, define the profile negative log-likelihood
\begin{equation*}
Q_{ng}(\theta_g;P_g)=\frac{1}{n}\sum_{i\in[n]}\psi(w_{ig},P_g;\theta_g,\widehat\mu_g(\theta_g,P_g)).
\end{equation*} 
Given $G$ groups, the average negative profile log-likelihood of all groups is
\begin{equation*}
Q_{n}(\bm\theta;\bm{P})=\frac{1}{G}\sum_{g\in[G]}Q_{ng}(\theta_g;P_g).
\end{equation*}
Let $P_{ig}(\theta_g)$ be the fixed points that solve
$$
P_{ig}(\theta_g)=\Lambda(\overline{P}_{ig}\overline\beta_g+X_{ig}'\beta_g+\widehat{\mu}_g(\theta_g,P_g)),\ i\in[n],g\in[G],
$$
and denote $\bm{P}(\bm\theta)=\{P_{ig}(\theta_g)\}_{i\in[n],g\in[G]}$.

To sort the groups into clusters, we employ the C-Lasso method as a classifier. Define the NPL-Classifier-Lasso (NPL-C-Lasso) estimator as
\begin{equation}\label{eq:C-lasso_criterion}
    (\widetilde{\bm\theta},\widetilde{\bm\vartheta})= \underset{(\bm\theta,\vartheta)\in \Theta^{G+K_0}}{\arg\min}~Q_{n}(\bm\theta;\bm{P}(\bm\theta))+\frac{\rho}{G}\sum_{g\in[G]}\prod_{k\in[K_0]}\|\theta_g-\vartheta_k\|_2,
\end{equation}
where $\rho$ is a tuning parameter to determine the level of the multiplicative classification penalty. 
It encourages clustering among the group-specific estimates $\{\widetilde{\theta}_g\}_{g\in[G]}$ by shrinking $\theta_g$ toward the nearest cluster estimates $\{\widetilde\vartheta_k\}_{k\in[K_0]}$. The multiplicative structure $\prod_{k\in[K_0]} \|\theta_g-\vartheta_k\|_2$ ensures that $\theta_g$ is penalized more heavily when it is far from \emph{all} cluster estimates, effectively promoting exact equality $\widetilde{\theta}_g = \widetilde\vartheta_k$ for some $k$. 
The optimization problem yields the estimates $\{\widetilde{\theta}_g,\widetilde{\vartheta}_k\}$ for schools $g\in[G]$ and groups $k\in[K_0]$. A group $g$ is classified  into the cluster $\widetilde{C}_k$  if $\widetilde\theta_g$ is closest to $\widetilde{\vartheta}_k$ in $L_2$-distance, that is 
$$
\widetilde{C}_k=\{g\in[G]:\underset{l\in[K_0]}{\arg\min}\ \| \widetilde{\theta}_g-\widetilde{\vartheta}_l\|_2=k\}
$$ 
for $k\in[K_0]$. Asymptotically, this rule is equivalent to the equality rule $\widetilde{C}^e_k=\{g\in[G]:\widetilde{\theta}_g=\widetilde{\vartheta}_k\}$ as in K-means.\footnote{
In principle, any consistent classification method can determine the latent structure, for example, K-means \citep{bonhomme2015grouped}.
Numerically, the K-means algorithm relies heavily on the estimation quality of $\widehat\theta_g$s, and is thus computationally heavy. 
} In finite samples, C-Lasso is more lenient in its restrictions.

\subsection{Adjusted Algorithm}\label{sec:adjusted-algorithm}
The C-Lasso minimization problem  \eqref{eq:C-lasso_criterion} is non-convex and it requires estimating CCPs $\bm{P}(\bm\theta)$ in each evaluation, which is computationally demanding.
While the adjusted algorithm does not exactly solve \eqref{eq:C-lasso_criterion}, the resulting estimates are close in practice because $\widetilde{\bm{P}}$ is a consistent proxy for the unknown CCPs. Directly solving \eqref{eq:C-lasso_criterion} would require re-computing the CCP fixed point $\bm{P}(\bm\theta)$ at every evaluation of the C-Lasso objective, increasing computation time by orders of magnitude.
To make the estimation computationally tractable, we devise a new algorithm to decouple the classification step from the estimation process to enhance computational stability. It consists of three steps, as illustrated in the diagram Figure \ref{fig:NPL-C-lasso-estimator}: 

\medskip\medskip\medskip
\noindent\underline{1. First Step NPL:} apply the NPL method (Algorithm \ref{alg:NPL_alg_FE}) to each group $g$, obtaining estimators for the CCPs $P_g$, denoted $\widetilde{P}_g$, $g\in[G]$. 

\medskip\medskip
\noindent\underline{2. C-Lasso:} estimate the latent clusters via solving C-Lasso:
\begin{equation*}
     (\widetilde{\bm\theta},\widetilde{\bm\vartheta})= \underset{(\bm\theta,\vartheta)\in \Theta^{G+K_0}}{\arg\min}~Q_{n}(\bm\theta;\widetilde{\bm{P}})+\frac{\rho}{G}\sum_{g\in[G]}\prod_{k\in[K_0]}\|\theta_g-\vartheta_k\|_2,
\end{equation*}
where $\widetilde{\bm{P}}=\{\widetilde{P}_g\}_{g\in[G]}$ is independent of $\bm\theta$.
This allows us to classify groups into $K_0$ distinct clusters, $\widetilde{C}_1,\dots,\widetilde{C}_{K_0}$, \emph{without} sequentially updating CCPs. The number of latent clusters $\widehat{K}$ is determined by an IC, see Section \ref{sec:post-classification}. 

\medskip\medskip
\noindent\underline{3. Post-Classification Estimation:} re-estimate $\theta_k$ for each cluster $k$ through NPL with FE (Algorithm \ref{alg:NPL_alg_FE}) by pooling all groups' observations within the cluster. Post-classification estimators are denoted as $\widehat{\theta}_{\widetilde{C}_k}$, $k\in[K_0]$. 
Apply Algorithm \ref{alg:Bootstrap} to groups within each cluster to obtain the confidence intervals for common parameters.
\footnote{Step~3 jointly re-estimates the CCPs alongside the parameters, so $(\widehat{\theta}_{\widetilde{C}_k},\widehat{P}_{\widetilde{C}_k})$ satisfy the NPL fixed point at the cluster level. The original group-level fixed points from Step~1 need not be preserved, as the first-step CCPs serve only as inputs to C-Lasso and play no further role in post-classification inference.} 
\medskip\medskip\medskip

We will show that the C-Lasso step classifies groups into their true clusters with probability approaching one (w.p.a.1.)~in Section \ref{sec:classification}. Therefore, inference based on post-classification estimators is asymptotically valid, similar to the homogeneous model. The empirical application concerning risky behaviors among students reveals that, if we ignored the latent structure, we would severely underestimate the peer effects for the cluster with significant effects and draw erroneous conclusions about the intelligence variables. Therefore, we recommend the above algorithm to cope with possible slope heterogeneity in real-world applications.

\begin{figure}
\centering
\includegraphics[width=0.75\linewidth]{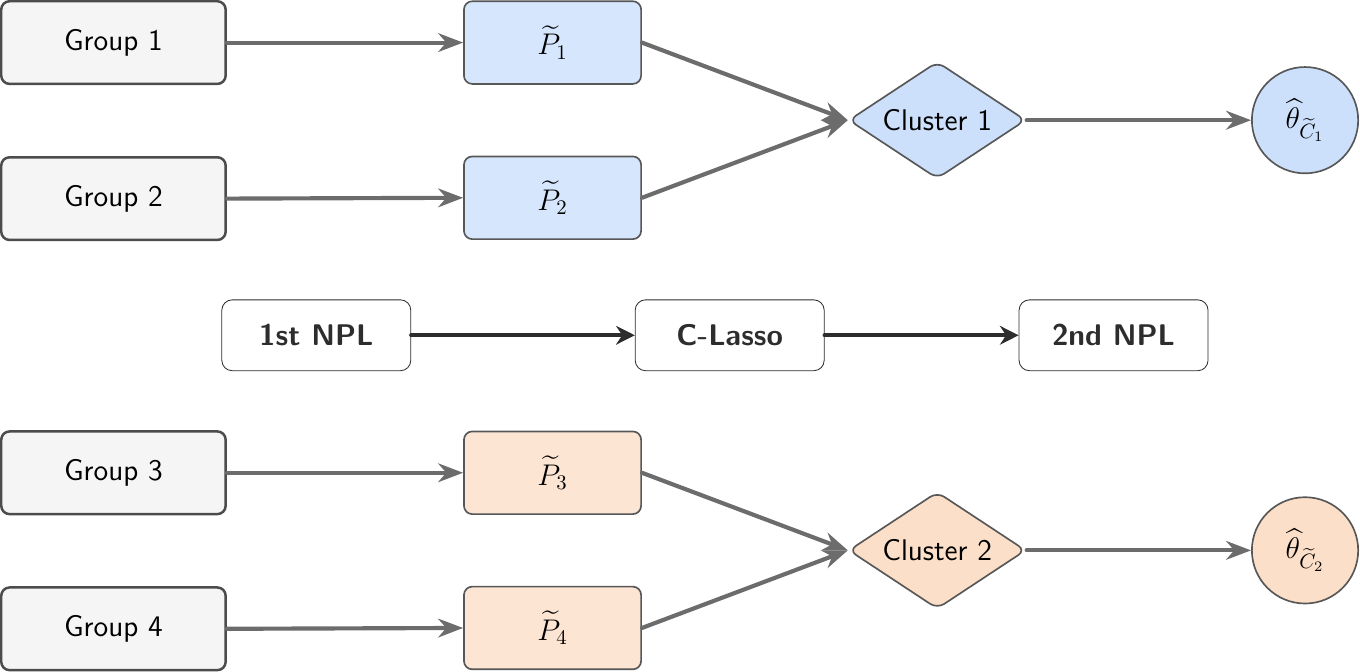}
\caption{Diagram of classification and estimation procedure.
It displays an example of the adjusted algorithm with four groups classified into two clusters. 
}\label{fig:NPL-C-lasso-estimator}
\end{figure}

\subsection{Average Partial Effects}\label{sec:APE}

Parameters $\beta_g$ in \eqref{eq:model} operate on the
latent utility scale and are embedded in the nonlinear functional forms. The
APEs are more straightforward and intuitive in connecting the covariates with the
choice probabilities.

The derivative of $P_{ig}$ with respect to the $p$th covariate $X_{ig,p}$, supposing it is a continuous variable, satisfies
\begin{equation}
\label{eq:APE-equation}
\frac{d P_{ig}}{dX_{ig,p}}=P_{ig}(1-P_{ig})\left(\overline\beta_g\frac{d\overline P_{ig}}{dX_{ig,p}} +\beta_{g,p}\right)\text{ for }i\in[n],g\in[G],
\end{equation}
where we use the definition of equilibrium CCP \eqref{eq:CCP-def} and the chain rule of differentiation.
Let
$D_{g} = \operatorname{diag}\big(\{P_{ig}\}_{i\in[n]}\big)$
and let $W^g$ be the row-normalized adjacency matrix with $W^g_{ij}=1\{N_i^g>0\}\cdot F^g_{ij}/N_i^g$.
Solving \eqref{eq:APE-equation} and taking the average across $n$ units,
the APE of the $p$th covariate in the group $g$ is
\begin{equation}\label{eq:ape}
  \delta_{g,p} =
  \frac{1}{n}\mathbf{1}_n'
  \big(I_n - \overline{\beta}_{g} D_gW^g\big)^{-1}
  D_{g} \beta_{g,p}\mathbf{1}_n,
\end{equation}
where $\mathbf{1}_n=(1,\dots,1)'\in\mathbb{R}^n$.
The population APE for the $k$th cluster under the true parameter values is
\begin{equation}
    \label{eq:cluster-APE}
    \delta^0_{k,p} = \frac{1}{G_k}\sum_{g\in C_{k0}} \delta^0_{g,p}.
\end{equation}
Throughout, $\delta^0_{g,p}$ and $\delta^0_{k,p}$ denote the population group-$g$ and cluster-$k$ APEs, i.e., the functional in~\eqref{eq:ape} evaluated at the true parameters $(\theta_{g0}, P_{g0})$ and $(\theta_{k0}, P_{k0})$ respectively; estimators carry hats.
A summary measure of the overall average partial effect across all clusters is the population-level weighted APE
\begin{equation}
    \label{eq:weighted-APE}
    \overline{\delta}^0_p = \sum_{k\in[K_0]} \frac{G_{k}}{G} \delta^0_{k,p},
\end{equation}
where the weights are the cluster proportions $G_{k}/G$.
We do not define the APE for the peer effect due to its interactive nature. From \eqref{eq:ape}, the peer effect behaves as a social multiplier, which enlarges the covariates' influence when $\overline\beta_g>0$. 
For covariates with discrete supports, the APE is the average of the differences of the equilibrium CCPs at two different levels. In this case, the APEs lack closed-form representations. 

\medskip

The algorithms we have laid out so far provide a practical way for economists to process their data to evaluate the peer effects by accommodating rich heterogeneity across groups as well as latent structures over clusters. 
We stack multiple modules of the algorithms and tailor the implementation to speed up the computation, making a pipeline that is ready for complex data collected via large-scale surveys or administrative records.
Next, we establish the large sample theory to support the procedures we have proposed.

\section{Large Sample Theory}\label{sec:large_sample_results}

In our asymptotic analysis, the group size $n$ is sent to infinity. The number of groups can stay finite, or diverge to infinity.
For the latter case, we adopt a triangular limit framework, viewing $G$ as a deterministic function of $n$.
In asymptotic statements, we explicitly send $n\to \infty$, understanding that $G=G(n)$ and $G(n)/n\to c\in (0,\infty)$ as $n\to \infty$.

We first set up several basic assumptions. 
Given these assumptions,  we analyze in Section \ref{sec:firstNPL} the NPL estimators $\widetilde\zeta_g=(\widetilde{\mu}_g,\widetilde{\theta}_g)$ and CCPs $\widetilde{P}_g$ for each group $g$, which are the building blocks of our adjusted algorithm. Section \ref{sec:asymptotic_normality} establishes asymptotic normality of the NPL fixed effects estimator $\widehat\theta$ and the parametric bootstrap debiased estimator $\widehat\theta^*$. We present the classification consistency and an IC for selecting the number of clusters in Sections \ref{sec:classification} and \ref{sec:post-classification}, respectively. Let $C<\infty$ denote a finite positive constant.
\begin{assumption}
\label{npl}
\begin{enumerate}[label=(\roman*),font=\upshape]
\item \label{npl-i}
   The covariate $X_{ig}$ satisfies $\max_{i,g}\| X_{ig}\|_\infty<C$.
\item\label{npl-ii} There exists open sets $\mathcal{A}_0\subset\mathcal{A}$ and $\Theta_0\subset\Theta$, such that the true values $(\mu_{g0},\theta_{g0}')'$ are interior points of $\mathcal{A}_0\times\Theta_0$ for all $g$. There exists a constant $\underline{c}\in(0,1/2)$ such that the true CCPs $P_{g0}$'s lie in the interior of $(\underline{c},1-\underline{c})^n$ for all $g$. 
\item\label{npl-iii} The peer effects satisfy $0<\max_{k\in[K_0]}|\overline{\beta}_{k0}|<4$.
\item\label{npl-iv} $(\zeta_{g0},P_{g0})$ is an isolated population NPL fixed point to \eqref{eq:NPL_fixed_point_population} for each $g\in[G]$.\footnote{An isolated fixed point means there exists a neighborhood of $(\zeta_{g0}, P_{g0})$ containing no other NPL fixed point. Theoretically, uniqueness is generically satisfied under the moderate social interactions condition in Assumption~\ref{npl}\ref{npl-iii} and leads to the isolation.} 
\item\label{npl-v} $\max_{j\in[n]}\sum_{i\in[n]}(N_i^g)^{-1}F^g_{ij}<C$ for each $g\in[G]$.
\end{enumerate}
\end{assumption}

Assumption \ref{npl}\ref{npl-i} and \ref{npl-ii} guarantee a finite single index, and hence $P_{ig0}\in(\underline{c},1-\underline{c})$ for some finite constant $\underline{c}\in(0,1/2)$.
Although technically we can relax \ref{npl-i} to allow light-tailed covariates, such an extension complicates notation but brings no theoretical insights. Part \ref{npl-iii} of Assumption \ref{npl} is called the ``moderate social interactions'' condition in the literature \citep[see][]{glaeser2003nonmarket,horst2006interactions,xu2018social,lin2017estimation}. This condition is sufficient for the existence and uniqueness of the Bayesian Nash equilibrium. Similar conditions to Part \ref{npl-iv} of Assumption \ref{npl} are imposed in \cite*{lin2021uncovering} and \citet{lin2024binary}. This condition ensures that the NPL estimation of unknown parameters and CCPs is well-defined and consistent for all groups.  Note
$$
\sum_{i\in[n]}(N_i^g)^{-1}F^g_{ij}\leq \sum_{i\in[n]}F^g_{ij}=N_j^g,
$$
from which a sufficient condition for part \ref{npl-v} is that students in each group have only a finite number of friends, which usually holds in practice. This condition ensures that the estimation errors of $\widetilde{P}_{ig}$'s do not significantly affect the criterion function through social interactions.
Therefore, we can replace $P_0$ with the estimate $\widetilde{P}$  in the C-Lasso step.

\subsection{First-step NPL Estimator}\label{sec:firstNPL}
Define $\zeta_g=(\mu_g,\theta_g)$ and $\zeta_{g0}=(\mu_{g0},\theta_{g0})$. By applying NPL to each group $g$, we obtain the estimators $\widetilde\zeta_g$ and $\widetilde{P}_g$.  Our first theoretical result, Proposition \ref{prop:first_step_npl}, is an extension of Proposition 2 in \citet{aguirregabiria2007sequential} and Theorem 1 in \citet{lin2024binary}. We focus on the consistency of $\widetilde\zeta_g$ and the convergence rate of $\widetilde{P}_g$. 

\begin{proposition}
\label{prop:first_step_npl}Suppose Assumption \ref{npl} and Assumption \ref{assu:invertibility_assumption1} in Appendix \ref{appendix:first-NPL} hold.\footnote{
To study the global consistency of $\widetilde{P}_g$ over $g\in[G]$,  Assumption \ref{assu:invertibility_assumption1}  is needed to guarantee the invertibility of the Jacobian matrices with respect to the unknown parameters. However, due to its technical nature, we will present it in Appendix \ref{appendix:first-NPL}.
} Then 
\begin{enumerate}[label=(\alph*),font=\upshape]
    \item\label{first-step-consistency} $\max_{g\in[G]}\|\widetilde\zeta_g-\zeta_{g0}\|=o_p(1)$;
    \item\label{first-step-probability-vector} there exists a positive constant $C$ such that the first-step NPL estimator for $\widetilde{P}$ satisfies
    $$\max_{g \in [G]}\left\|\widetilde{P}_g-P_{g0}\right\|_2\leq C\sqrt{\log(Gn)}$$
with probability at least $1-2G^{-1}n^{-2}$. 
\end{enumerate} 
\end{proposition}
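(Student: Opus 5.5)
The plan follows the NPL consistency argument of \citet{aguirregabiria2007sequential} and \citet{lin2024binary}, run group by group. The new ingredient is uniformity over $g\in[G]$, which comes from exponential concentration plus a union bound over the $G$ groups; since $G/n\to c$, this costs only logarithmic factors.

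\textbf{Part (a).} For a fixed group $g$, define the population pseudo-likelihood
\begin{equation*}
\Psi_{g}^0(\zeta;P)=\mathbb{E}\big[\Psi_{ng}(\mu,\theta;P)\mid \mathbb I_g\big],
\end{equation*}
and the population NPL mapping $\phi_g(P)=\Lambda(\overline P\,\overline\beta^*(P)+X'\beta^*(P)+\mu^*(P))$, where $\zeta^*(P)$ minimizes $\Psi^0_g(\cdot;P)$.

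First, I would show a uniform law of large numbers:
\begin{equation*}
\max_{g}\sup_{\zeta,P}|\Psi_{ng}-\Psi^0_g|=O_p\big(\sqrt{\log(Gn)/n}\big).
\end{equation*}
Given $\mathbb I_g$, each $\psi(w_{ig},P;\zeta)$ is a bounded function of the independent Bernoulli $Y_{ig}$. Boundedness uses Assumption \ref{npl}\ref{npl-i}, compactness of $\mathcal A\times\Theta$, and $\overline P_{ig}\in[0,1]$. The function is also Lipschitz in $(\zeta,P)$, where I use $\|P\|_\infty$ for $P$. So Hoeffding's inequality on a finite grid, plus Lipschitz continuity, plus a union bound over $g$, delivers the rate. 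Dependence on the whole vector $P\in[0,1]^n$ enters only through the $\overline P_{ig}$.

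Second, I would use Assumption \ref{npl}\ref{npl-iv} together with the invertibility condition of Assumption \ref{assu:invertibility_assumption1}. These give identifiable uniqueness of $(\zeta_{g0},P_{g0})$ as an isolated NPL fixed point, with constants uniform in $g$. The uniform convergence of the sample NPL operator to $\phi_g$ then yields $\max_g\|\widetilde\zeta_g-\zeta_{g0}\|=o_p(1)$, by the standard argument for consistency of NPL fixed points.

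\textbf{Part (b).} I would linearize the fixed point equation $\widetilde P_g=\Lambda(\widetilde{\overline P}_g\widetilde{\overline\beta}_g+X_g\widetilde\beta_g+\widetilde\mu_g)$ around $P_{g0}$. Subtracting $P_{g0}=\Lambda(\cdot)$ at the truth and applying the mean value theorem gives
\begin{equation*}
\big(I_n-\overline\beta\, D^\dagger W^g\big)(\widetilde P_g-P_{g0})=D^\dagger\big(\overline P_{g0}\,,\,X_g\,,\,\mathbf 1_n\big)(\widetilde\zeta_g-\zeta_{g0}),
\end{equation*}
where $D^\dagger$ is diagonal with entries $\Lambda'\le 1/4$. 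Assumption \ref{npl}\ref{npl-iii} then gives $\|\overline\beta D^\dagger W^g\|_\infty<1$, since $W^g$ is row-normalized. Assumption \ref{npl}\ref{npl-v} bounds $\|W^g\|_1$. Interpolating, $\|W^g\|_2\le C$, so the inverse is bounded in operator $2$-norm uniformly in $g$, at least in the regime $|\overline\beta|\,\|W^g\|_2/4<1$. Otherwise I would iterate the contraction in $\|\cdot\|_\infty$ and then pass to $\|\cdot\|_2$ via a Neumann series.

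The right-hand side has $2$-norm at most $C\sqrt n\,\|\widetilde\zeta_g-\zeta_{g0}\|$, because each column has bounded entries. It then remains to show
\begin{equation*}
\max_g\|\widetilde\zeta_g-\zeta_{g0}\|\le C\sqrt{\log(Gn)/n}
\end{equation*}
with probability at least $1-2G^{-1}n^{-2}$. For this I would Taylor-expand the first-order condition in $(\zeta,P)$ jointly at the NPL fixed point. By Assumption \ref{assu:invertibility_assumption1}, the Jacobian of the NPL map with respect to $(\zeta,P)$ has a uniformly bounded inverse, so the error is driven by the score $n^{-1}\sum_i(Y_{ig}-P_{ig0})\,(\overline P_{ig0},X_{ig}',1)'$. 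By Hoeffding's inequality, this score exceeds $C\sqrt{\log(Gn)/n}$ with probability at most $2(Gn)^{-3}$ per coordinate and group. Taking a union bound over the $G$ groups gives the stated probability $1-2G^{-1}n^{-2}$. Combining the two displays, $\max_g\|\widetilde P_g-P_{g0}\|_2\le C\sqrt n\cdot\sqrt{\log(Gn)/n}=C\sqrt{\log(Gn)}$.

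\textbf{Main obstacle.} The hardest step is the joint linearization in part (b). There $\widetilde P_g$ enters both the likelihood regressors $\overline P_{ig}$ and the fixed point, so the feedback term $\partial_P\zeta^*(P)\,(\widetilde P_g-P_{g0})$ must be controlled without circularity. My first approach is to treat $(\zeta,P)$ as a single unknown solving an $(n+\dim\zeta)$-dimensional system. The block-inverse formula, with the Schur complement handled by Assumption \ref{assu:invertibility_assumption1}, gives the bound directly. The remainder terms are quadratic and are absorbed using part (a).
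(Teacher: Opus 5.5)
Your architecture matches the paper's. For (a) it is a uniform law of large numbers plus the standard NPL fixed-point consistency argument. For (b) it is Hoeffding on the score at the truth, a union bound over $g$, inversion of the matrix in Assumption~\ref{assu:invertibility_assumption1}, and a $\sqrt n$ factor from the equilibrium map.

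Your joint $(n+\dim\zeta)$-dimensional linearization is the paper's argument in disguise. Because $\Gamma_g(\zeta;\cdot)$ is a $\|\cdot\|_\infty$-contraction (Lemma~\ref{lem:contraction_mapping}), every NPL fixed point has $\widetilde P_g=P_g(\widetilde\zeta_g)$. The first-order condition is therefore an equation in $\zeta$ alone, and its Jacobian is exactly the matrix of Assumption~\ref{assu:invertibility_assumption1}, i.e.\ your Schur complement. The feedback term $\partial_P\zeta^*(P)$ you flag as the main obstacle never appears.

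\textbf{The genuine gap is the uniform law of large numbers in (a).} A $\delta$-grid of $[0,1]^n$ in $\|\cdot\|_\infty$ has $\delta^{-n}$ points, so Hoeffding plus a union bound cannot give the rate $\sqrt{\log(Gn)/n}$, or even $o_p(1)$. Worse, the bound is false over all $P$. Since $\psi-\mathbb E\psi=-(Y_{ig}-P_{ig0})(\overline\beta\,\overline P_{ig}+X_{ig}'\beta+\mu)$, take $W^g_{ij}=1\{j=\sigma(i)\}$ for a permutation $\sigma$; every agent has one friend and is named once, so Assumption~\ref{npl}\ref{npl-v} holds. Fix $\zeta$ with $\overline\beta\neq0$, and set each $P_{\sigma(i)}$ near $\underline c$ or $1-\underline c$ according to the sign of $\overline\beta(Y_{ig}-P_{ig0})$. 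This gives $\sup_P|\Psi_{ng}-\Psi_g|\ge|\overline\beta|(1-2\underline c)\,n^{-1}\sum_iP_{ig0}(1-P_{ig0})-o_p(1)$, which does not vanish. The paper's Lemma~\ref{lem:dev_bounds} asserts the same over-strong bound.

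The missing idea is that uniformity is only needed on the equilibrium set $\mathcal F$. Every sample NPL fixed point has $P=P_g(\zeta')$, so the consistency argument only evaluates $\Psi_{ng}(\zeta;P_g(\zeta'))$. The contraction gives $\|P_g(\zeta')-P_g(\zeta'')\|_\infty\le C(1-\rho)^{-1}\|\zeta'-\zeta''\|$ with $\rho=\sup_\Theta|\overline\beta|/4<1$. That class is indexed by a compact subset of $\mathbb R^{2\dim\zeta}$, so a polynomial-size grid suffices and your Hoeffding/union-bound argument goes through.

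\textbf{Two smaller points.} First, two ingredients must be imposed rather than derived; the paper's $\tau$ and its part (b) need the same. Identifiable uniqueness with constants uniform in $g$, and in $n$ since $\Psi_g$ changes with $n$, does not follow from Assumption~\ref{npl}\ref{npl-iv}. Nor does invertibility w.p.a.1 give a uniformly bounded inverse of the Assumption~\ref{assu:invertibility_assumption1} matrix on an event of probability $1-O(G^{-1}n^{-2})$.

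Second, in (b) drop the $2$-norm operator bound: $\|(I_n-\overline\beta D^\dagger W^g)^{-1}\|_2$ need not be bounded in $n$ under Assumption~\ref{npl}\ref{npl-iii} and \ref{npl-v}. For example, take an in-tree where each agent's only friend is its parent, each parent has $m$ children, and $\overline\beta D^\dagger\equiv r I_n$. The inverse applied to the root's unit vector has squared norm $\sum_d(mr^2)^d$, which is unbounded once $mr^2>1$. You don't need this bound. The right-hand side has entries $O(\|\widetilde\zeta_g-\zeta_{g0}\|)$, so the $\|\cdot\|_\infty$ contraction gives $\|\widetilde P_g-P_{g0}\|_2\le\sqrt n\,\|\widetilde P_g-P_{g0}\|_\infty\le C\sqrt n\,(1-\rho)^{-1}\|\widetilde\zeta_g-\zeta_{g0}\|$ directly.
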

Proposition \ref{prop:first_step_npl}\ref{first-step-consistency} shows that $\widetilde{\zeta}_g$'s are consistent uniformly for $g\in[G]$. In the steps of C-Lasso or second-step NPL, we can actually start from these consistent first-step estimators in the optimization. By using this optimization strategy, it is also guaranteed that the post-classification estimators (and also $\widehat\theta$) are consistent. Therefore, we can focus on the asymptotic normality in the next subsection.
Part \ref{first-step-probability-vector} is a non-asymptotic result, derived from Hoeffding's inequality \citep{hoeffding1963probability}.  Intuitively, the NPL estimator $\widetilde{\theta}_g$  is expected to be $\sqrt{n}$-consistent as each school $g$ has $n$ students. Then by Assumption \ref{npl}, $\widetilde{P}_g$ is a continuous function of $\widetilde{\theta}_g$ around $\theta_{g0}$, and hence this $\sqrt{n}$-convergence rate of $\widetilde{P}_g$ is reasonable.
This result allows directly applying C-Lasso with $P_{g0}$  replaced by $\widetilde{P}_g$; as we will see, the classification consistency only requires  $\sqrt{n}$-consistent estimators for the CCPs uniformly over $G$ groups. Moreover, $P_g$  is a function of school-specific fixed effects $\mu_g$, which means the convergence rate cannot outperform $n^{-1/2}$ in general.

\subsection{Inference for Model with Homogeneous Slopes}\label{sec:asymptotic_normality}

We establish asymptotic normality of $\widehat\theta$ and $\widehat\theta^*$ in this subsection. Inference based on the post-classification estimators is essentially the same. The following theorem justifies statistical inference based on $\widehat\theta$ and the parametric bootstrap estimator $\widehat\theta^*$.
Let $H_{0}$ be a concentrated Hessian matrix, $B_{0}$ be the asymptotic bias term, and $\Omega_{0}$ denote the asymptotic covariance matrix, whose explicit forms are defined by Equations \eqref{Hk0}, \eqref{Bk0}, and \eqref{omegak0}, respectively, in Appendix \ref{appendix:post-estimation}.

\begin{theorem}\label{thm:post_classification_estimation}
If conditions in Proposition \ref{prop:first_step_npl} and Assumption \ref{assu:invertibility2} in Appendix \ref{appendix:post-estimation} hold,\footnote{
Assumption \ref{assu:invertibility2} makes sure that some necessary Jacobian matrices are invertible. Due to the technicality, these conditions are presented in Appendix \ref{appendix:post-estimation}.
} then
\begin{enumerate}[label=(\alph*),font=\upshape]
\item\label{thm:post_classification_estimation-a}
        $\sqrt{nG}\Big(\widehat{\theta}-\theta_{0}\Big)-H_{0}^{-1}B_{0}\stackrel{d}{\to} N(0,H_{0}^{-1}\Omega_{0}H_{0}^{-1});$
\item\label{thm:post_classification_estimation-b}
        $\sqrt{nG}\Big(\widehat{\theta}^*-\theta_{0}\Big)\stackrel{d}{\to}N(0,H_{0}^{-1}\Omega_{0}H_{0}^{-1})$;
\item\label{thm:post_classification_estimation-c} 
for any non-random vector $a\in\mathbb{R}^{1+\mathrm{dim}(X_{ig})}$ with $\|a\| = 1$, we have
$$\sup_{x\in\mathbb{R}}\left|\Pr\left(\sqrt{nG}a'(\widehat\theta^b-\widehat\theta)\leq x \;\Big|\; \boldsymbol W\right)-\Pr\left(\sqrt{nG}a'(\widehat\theta-\theta_0)\leq x\right)\right|=o_p(1),$$ 
where $\boldsymbol W= \{w_{ig}\}_{i\in[n],g\in[G]}$ is all the available data on which the bootstrap is conditioned.
\end{enumerate}
\end{theorem}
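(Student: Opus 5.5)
Parts \ref{thm:post_classification_estimation-a}--\ref{thm:post_classification_estimation-c} all follow from one asymptotic expansion of the NPL fixed point, carried out in the large-$n$, large-$G$ panel framework of \citet{hahn2004jackknife} and \citet{higgins2024bootstrap}; the new feature here is that the CCPs are generated through the equilibrium map.

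\emph{Characterization of the estimator.} At convergence of Algorithm~\ref{alg:NPL_alg_FE}, $(\widehat{\bm\mu},\widehat\theta,\widehat{\bm P})$ is a sample NPL fixed point. That is, $\widehat\theta$ minimizes the profile likelihood $\frac1G\sum_g Q_{ng}(\theta;\widehat P_g)$, and $\widehat P_g=\Lambda(W^g\widehat P_g\widehat{\overline\beta}+X_g\widehat\beta+\widehat\mu_g)$. By Assumption~\ref{npl}\ref{npl-iii}, $|\overline\beta|\,\Lambda'\le|\overline\beta|/4<1$, so the map $P\mapsto\Lambda(W^gP\overline\beta+\cdot)$ is a contraction in $\|\cdot\|_\infty$. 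The implicit function theorem then lets me write $P_g=P_g(\mu_g,\theta)$ as a smooth function, with derivatives of the form $(I_n-\overline\beta D_gW^g)^{-1}(\cdot)$; these are bounded in the $\infty$-norm, and, using Assumption~\ref{npl}\ref{npl-v}, also in the $1$-norm. Consistency is available from Proposition~\ref{prop:first_step_npl}\ref{first-step-consistency} by starting the optimization there, so the estimator solves the stacked score equations
\[
0=\frac1n\sum_i\partial_\mu\psi(w_{ig},\widehat P_g;\widehat\theta,\widehat\mu_g)\ \ (g\in[G]),\qquad 0=\frac1{nG}\sum_{g,i}\partial_\theta\psi(w_{ig},\widehat P_g;\widehat\theta,\widehat\mu_g),
\]
with $\widehat P_g=P_g(\widehat\mu_g,\widehat\theta)$.

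\emph{Expansion for part \ref{thm:post_classification_estimation-a}.} For each $g$ I expand $\widehat\mu_g-\mu_{g0}$ to second order in the group score. The first-order term is $O_p(n^{-1/2})$ uniformly in $g$ up to logs, by Hoeffding-type bounds as in Proposition~\ref{prop:first_step_npl}\ref{first-step-probability-vector}. I then substitute this expansion into the $\theta$-score and Taylor expand to third order, keeping the chain-rule terms that pass through $\partial P_g/\partial(\mu,\theta)$. The resulting decomposition is
\[
\sqrt{nG}(\widehat\theta-\theta_0)=H_0^{-1}\tfrac{1}{\sqrt{nG}}\textstyle\sum_{g,i}U_{ig}+H_0^{-1}\sqrt{G/n}\,\overline B_n+o_p(1).
\]
\begin{itemize}
\item $U_{ig}$ is the efficient-score influence term, orthogonalized with respect to $\mu_g$. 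It is independent across $g$ and a martingale-difference-type sum within $g$, because the $\varepsilon_{ig}$ are i.i.d. given $\mathbb I_g$. A Lindeberg CLT gives $N(0,\Omega_0)$.
\item $\overline B_n$ collects the $O(1/n)$ per-group bias terms. These are products of the first-order $\widehat\mu_g$ error with second derivatives of the likelihood and of $P_g$, in expectation. Since $G/n\to c$, this term becomes $H_0^{-1}B_0$.
\end{itemize}
The \textbf{main obstacle} is controlling the remainder uniformly over $G\asymp n$ groups. The third-order terms must be $o_p((nG)^{-1/2})$ after averaging, which needs $\max_g|\widehat\mu_g-\mu_{g0}|=O_p(\sqrt{\log n/n})$. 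In addition, the CCP-derivative terms must not accumulate through the network. For the latter I would use the Neumann-series bound on $(I-\overline\beta DW)^{-1}$ together with Assumption~\ref{npl}\ref{npl-v}, so that the errors $\widehat P_g-P_{g0}$ enter averages like $\frac1n\sum_i\overline P_{ig}$ at rate $n^{-1/2}$, not worse. Assumption~\ref{assu:invertibility2} supplies invertibility of $H_0$ and of the per-group $\mu$-Hessians.

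\emph{Bootstrap, parts \ref{thm:post_classification_estimation-c} and \ref{thm:post_classification_estimation-b}.} The bootstrap DGP is the same model with true parameters $(\widehat{\bm\mu},\widehat\theta)$ and equilibrium $\widehat{\bm P}$; these are consistent and lie in the interior w.p.a.1. Conditional on $\boldsymbol W$, I rerun the expansion above with all constants evaluated at the estimated parameters. This gives
\[
\sqrt{nG}(\widehat\theta^b-\widehat\theta)=\widehat H^{-1}\widehat B+\widehat H^{-1}(nG)^{-1/2}\textstyle\sum U^b_{ig}+o_{p^*}(1),
\]
where $\widehat B$, $\widehat H$, $\widehat\Omega$ are continuous functionals of the parameters, so $\widehat B\to_p B_0$ etc. A conditional Lindeberg CLT, combined with Pólya's theorem (the limit is continuous), yields the uniform CDF statement \ref{thm:post_classification_estimation-c}, since both sides converge to $N(a'H_0^{-1}B_0,\,a'H_0^{-1}\Omega_0H_0^{-1}a)$. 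For \ref{thm:post_classification_estimation-b}, the normal limit is symmetric, so $\sqrt{nG}\,Q^*(1/2)\to_p a'H_0^{-1}B_0$ as $B\to\infty$. Subtracting this from \ref{thm:post_classification_estimation-a} recenters the limit at zero, giving \ref{thm:post_classification_estimation-b} by Slutsky's theorem.
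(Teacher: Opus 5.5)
Your argument for part (a) follows the paper's proof. The paper also works from the partial-score equations \eqref{FOC_mu}--\eqref{def_P} with $P_g=P_g(\mu_g,\theta)$. It expands the $\theta$-score at $\widehat\mu_g(\theta_0)$ in $\mu_g$, keeping the chain-rule terms through $\partial P_g/\partial\mu_g$. It then splits the result into a $\mu_g$-orthogonalized CLT term $R_1$, a bias term $R_2$, and a negligible cubic term.

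For (b)--(c), the paper does not redo the expansion in the bootstrap world. It checks Assumptions 1--6 of \citet{higgins2024bootstrap} and invokes their result. Your route (conditional expansion with $\widehat B\stackrel{p}{\to}B_0$, conditional Lindeberg CLT, P\'olya's theorem, convergence of the bootstrap median) makes that citation self-contained. The cost is that your remainder bounds must hold uniformly over parameters near $(\bm\mu_0,\theta_0)$.

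The real difference is the bias bookkeeping, and it favors you. The paper substitutes only the linear term of $\widehat\mu_g(\theta_0)-\mu_{g0}$ into $R_1$ and treats the $O_p(\log(Gn)/n)$ remainder as negligible. It also replaces the sample cross-derivative by its limit $\Sigma_{\theta\mu_g0}$. So \eqref{Bk0} contains the $R_2$ contribution alone.

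Your second-order expansion keeps two further pieces. Each has an $O(1/n)$ mean per group, hence contributes $O(\sqrt{G/n})=O(1)$ after summing over groups:
\begin{itemize}
\item $\Sigma_{\theta\mu_g0}$ times the mean of the quadratic term of $\widehat\mu_g(\theta_0)-\mu_{g0}$;
\item in the $\overline\beta$ coordinate, the covariance of the cross-derivative with the group score, because $\partial^2\psi_{ig}/\partial\overline\beta\,\partial P_{jg}$ contains $-(Y_{ig}-P_{ig0})W^g_{ij}$.
\end{itemize}
The first piece is genuinely needed. As a check, in the plain logit case $\overline\beta=0$ it is what makes the bias invariant to shifting $X_{ig}$ by a group constant, which $\mu_g$ absorbs. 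So you should not assert that your $\overline B_n$ reduces to \eqref{Bk0}. Parts (b)--(c) are unaffected, since the bootstrap cancels the bias whatever its form.

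One small correction. The Neumann series bounds the inverse equilibrium Jacobian in the $\infty$-norm. However, Assumption~\ref{npl}\ref{npl-v} does not by itself give a $1$-norm bound, since that needs roughly $|\overline\beta|\max_j\sum_i W^g_{ij}/4<1$. A friendship tree pointing toward a root, with $|\overline\beta|$ large enough, makes the column sums of the inverse grow polynomially in $n$. Your CCP-error step only needs the $\infty$-norm bound, so nothing breaks.
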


Theorem \ref{thm:post_classification_estimation}, our main theoretical contribution,
consists of three parts. Part \ref{thm:post_classification_estimation-a} shows that $\widehat{\theta}$ is asymptotically normal, but it exhibits an incidental parameter bias $H_{0}^{-1}B_{0}$ that shifts its center away from the true parameter $\theta_{0}$. 
Though the incidental parameter issue is known in the panel data literature,
our model involves games through fixed points, and no prior work has explored the presence and the
forms of such a bias due to the complexity that the NPL algorithm affects how the estimation error from $\widehat\mu_g$ accumulates.


Once we diagnose the bias as arising from the high-dimensional incidental parameters, we borrow from existing methods for a remedy.
We verify in our context that the parametric bootstrap inference method delivers
an asymptotically unbiased $\widehat{\theta}^*$. Part \ref{thm:post_classification_estimation-b} justifies the unbiasedness of $\widehat\theta^*$ and the correct coverage rates of the bootstrap confidence intervals, and Part \ref{thm:post_classification_estimation-c}  reveals the underlying mechanism.
Conditioning on the realized data $\boldsymbol{W}$, the ``true parameter'' in the world of the parametric bootstrap is $\widehat{\theta}$. The asymptotic distribution of $\widehat\theta^b-\widehat\theta$ mimics that of $\widehat\theta-\theta_0$, sharing the same asymptotic bias. As a result, 
$$
\sqrt{nG}\Big(\widehat{\theta}^b-\theta_{0}\Big) = \sqrt{nG}\Big( (\widehat{\theta}^b - \widehat\theta )  - (\widehat\theta -\theta_{0})\Big) 
$$
is properly centered at 0 by canceling out the common bias.

\subsection{Classification Consistency}\label{sec:classification}
In this subsection, we establish the classification consistency of C-Lasso when there is a latent structure.
To evaluate the classification accuracy, we define two events $$\widetilde{E}_{knG,g}=\{g\notin \widetilde{C}_k|g\in C_{k0}\}\qquad\text{and}\qquad \widetilde{F}_{knG,g}=\{g\notin C_{k0}|g\in \widetilde{C}_k\},$$
which are analogous to Type I and Type II errors in hypothesis testing, respectively.
The false exclusion $\widetilde{E}_{knG,g}$ indicates that a group $g$ with its true cluster identity $C_{k0}$
is assigned to a wrong cluster, and the false inclusion $\widetilde{F}_{knG,g}$ implies that a group $g$ is mistakenly assigned to $\widetilde{C}_k$.
Let 
$$\widetilde{E}_{knG}=\cup_{g\in C_{k0}}\widetilde{E}_{knG,g} \quad \mbox{(any false exclusion from cluster $k$)}$$
and 
$$\widetilde{F}_{knG}=\cup_{g\in \widetilde{C}_k}\widetilde{F}_{knG,g}
\quad \mbox{(any false inclusion into cluster $k$)}, $$
Theorem \ref{thm:classification_consistency} establishes that, when the number of clusters is correctly specified,  C-Lasso ensures asymptotically perfect classification.

\begin{theorem}\label{thm:classification_consistency}
Under the assumptions in Proposition \ref{prop:first_step_npl}, we have
\begin{align*}
  \Pr\Big(\cup_{k\in[K_0]}\widetilde{E}_{knG}\Big) & \leq \sum_{k\in[K_0]}\Pr\Big(\widetilde{E}_{knG}\Big)\to 0 \text{ and} \\
  \Pr\Big(\cup_{k\in[K_0]}\widetilde{F}_{knG}\Big) & \leq \sum_{k\in[K_0]}\Pr\Big(\widetilde{F}_{knG}\Big)\to 0.
\end{align*}
\end{theorem}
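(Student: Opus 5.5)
We follow the C-Lasso argument of \citet{su2016identifying}, adapted to the profile likelihood $Q_{ng}(\theta_g;\widetilde P_g)$ with plugged-in first-step CCPs. The first inequality in each line is just the union bound, so the work is to show $\Pr(\widetilde E_{knG})\to0$ and $\Pr(\widetilde F_{knG})\to0$ for each $k$. The argument has three layers: preliminary rates, classification of each group, and a union bound over groups.

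\emph{Step 1: preliminary rates.} Write the objective as $Q_n(\bm\theta;\widetilde{\bm P})+\frac{\rho}{G}\sum_g\prod_k\|\theta_g-\vartheta_k\|_2$. We first replace $\widetilde{\bm P}$ by $\bm P_0$, controlling the error through Proposition~\ref{prop:first_step_npl}\ref{first-step-probability-vector} and Assumption~\ref{npl}\ref{npl-v}. The column-sum bound on $W^g$ gives
\[
\|\overline{\widetilde P}_g-\overline P_{g0}\|_2\le C\|\widetilde P_g-P_{g0}\|_2 .
\]
Hence the score and Hessian of $Q_{ng}$ at $\widetilde P_g$ differ from those at $P_{g0}$ by $O_p(\sqrt{\log(Gn)/n})$, uniformly in $g$, on an event of probability at least $1-2G^{-1}n^{-2}$. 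Next, by the envelope theorem for $\widehat\mu_g(\theta_g,P_g)$ and Hoeffding-type bounds for the i.i.d.\ logistic errors, we obtain uniform rates for the minimizers:
\[
\|\widetilde\theta_g-\theta_{g0}\|=O_p\big(\sqrt{\log(Gn)/n}\big)\ \text{uniformly in } g,
\qquad
\|\widetilde\vartheta_{(k)}-\vartheta_{k0}\|=O_p(n^{-1/2}+\rho).
\]
The second bound holds after a suitable relabelling of clusters. It follows as in Su--Shi--Phillips by comparing the penalized objective at the truth with its value at the minimizer. That comparison uses uniform positive definiteness of the concentrated Hessian (Assumption~\ref{assu:invertibility_assumption1}), the consistency in Proposition~\ref{prop:first_step_npl}\ref{first-step-consistency} to localize, and the separation $\min_{k\ne l}\|\vartheta_{k0}-\vartheta_{l0}\|>0$. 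The tuning rate must satisfy $\rho\to0$ and $\rho\sqrt{n}/\sqrt{\log(Gn)}\to\infty$.

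\emph{Step 2: classification of a single group.} Fix $g\in C_{k0}$. By Step 1, w.p.a.1 we have $\|\widetilde\theta_g-\widetilde\vartheta_l\|\ge c>0$ for every $l\ne k$. Suppose $\widetilde\theta_g\ne\widetilde\vartheta_k$. The KKT condition for $\theta_g$ then reads
\[
0=\frac{1}{G}\partial_\theta Q_{ng}(\widetilde\theta_g;\widetilde P_g)+\frac{\rho}{G}\,\widetilde e_{g}\prod_{l\ne k}\|\widetilde\theta_g-\widetilde\vartheta_l\|,
\qquad \widetilde e_g=\frac{\widetilde\theta_g-\widetilde\vartheta_k}{\|\widetilde\theta_g-\widetilde\vartheta_k\|},
\]
up to a term of order $\|\widetilde\theta_g-\widetilde\vartheta_k\|$. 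We expand the score around $\theta_{g0}$ and take the inner product with $\widetilde e_g$. The penalty contribution is of order $\rho c^{K_0-1}$. The score contribution is of order $\sqrt{\log(Gn)/n}$, plus a nonnegative Hessian term. Under the rate condition on $\rho$ these cannot balance, so
\[
\Pr(\widetilde E_{knG,g})\le \Pr\Big(\|\partial_\theta Q_{ng}(\theta_{g0};\widetilde P_g)\|\ge c\rho\Big)+\text{(prob. of failure of Step 1 events)} .
\]
For $\widetilde F_{knG,g}$, take $g\in C_{l0}$ with $l\ne k$ and $g\in\widetilde C_k$. Then $\widetilde\theta_g$ lies closest to $\widetilde\vartheta_k$, which is bounded away from $\theta_{g0}$. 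This contradicts the uniform consistency $\max_g\|\widetilde\theta_g-\theta_{g0}\|=o_p(1)$, so this event is also contained in the Step 1 failure events.

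\emph{Step 3: union over groups, and the main obstacle.} We sum over $g\in C_{k0}$, and over $g\in\widetilde C_k\subset[G]$ for the inclusion errors. The score tail is a sum of bounded independent terms: $\psi$ and its derivatives are bounded by Assumption~\ref{npl}\ref{npl-i}--\ref{npl-ii}. Hoeffding's inequality therefore gives a tail bound $\exp(-cn\rho^2)$, so the sum is at most $G\exp(-cn\rho^2)+O(n^{-2})\to0$, using $G\asymp n$. The main obstacle is the score at $\widetilde P_g$ rather than $P_{g0}$. The error $\partial_P\partial_\theta Q_{ng}\cdot(\widetilde P_g-P_{g0})$ is not a sum of independent terms, and its size is only $n^{-1/2}\|\widetilde P_g-P_{g0}\|_2\lesssim\sqrt{\log(Gn)/n}$. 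We handle it deterministically on the high-probability event of Proposition~\ref{prop:first_step_npl}\ref{first-step-probability-vector}. This relies on Assumption~\ref{npl}\ref{npl-v} to bound the operator norm of $\partial_P$ of the pseudo-likelihood score by $C/\sqrt n$ in $\ell_2$. That is precisely why the rate condition must dominate $\sqrt{\log(Gn)/n}$ rather than $n^{-1/2}$.
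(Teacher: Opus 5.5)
Your overall route is the paper's: run the Su--Shi--Phillips (SSP) classification argument with $P_{g0}$ replaced by $\widetilde P_g$, and absorb the plug-in error using Proposition~\ref{prop:first_step_npl}\ref{first-step-probability-vector} and Assumption~\ref{npl}\ref{npl-v}. The one real difference is where you absorb it. The paper bounds the perturbation of the group-averaged objective, $|Q_n(\bm\theta;\widetilde{\bm P})-Q_n(\bm\theta;\bm P_0)|=O_p(\sqrt{\log(Gn)/n})$. You instead bound each group's score perturbation through the $C/\sqrt n$ operator norm of $\partial_P$ of the pseudo-score. Yours is the form the KKT step actually uses. Your explicit condition $\rho\sqrt{n/\log(Gn)}\to\infty$ is the analogue of SSP's tuning condition, which the paper imports silently.

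There is, however, a step that fails as written: the exclusion argument in Step 2. Let $\bar H_g$ be the Hessian integrated along the segment. Expanding around $\theta_{g0}$ and splitting $\widetilde\theta_g-\theta_{g0}=(\widetilde\theta_g-\widetilde\vartheta_k)+(\widetilde\vartheta_k-\vartheta_{k0})$, the inner product with $\widetilde e_g$ is
\[
\widetilde e_g'\partial_\theta Q_{ng}(\theta_{g0};\widetilde P_g)+\|\widetilde\theta_g-\widetilde\vartheta_k\|\,\widetilde e_g'\bar H_g\widetilde e_g+\widetilde e_g'\bar H_g(\widetilde\vartheta_k-\vartheta_{k0}).
\]
The last term is neither sign-definite nor of order $\sqrt{\log(Gn)/n}$. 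It is of order $\|\widetilde\vartheta_k-\vartheta_{k0}\|$, which your Step 1 bounds only by $O_p(n^{-1/2}+\rho)$. That is the same order as the penalty contribution $\rho\prod_{l\neq k}\|\vartheta_{k0}-\vartheta_{l0}\|$, so ``these cannot balance'' does not follow.

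What you need is $\|\widetilde\vartheta_{(k)}-\vartheta_{k0}\|=o_p(\rho)$. This is the tuning-free centre rate of SSP's Theorem 2.1(iii), and the comparison argument you cite does deliver it if carried through. Start from $Q_n(\widetilde{\bm\theta};\widetilde{\bm P})+\frac{\rho}{G}\sum_g\prod_k\|\widetilde\theta_g-\widetilde\vartheta_k\|\le Q_n(\bm\theta_0;\widetilde{\bm P})$ and use local strong convexity. With $r_n=\sqrt{\log(Gn)/n}$ this gives $\frac1G\sum_g\|\widetilde\theta_g-\theta_{g0}\|^2=O_p(r_n^2)$ and $\frac1G\sum_g\prod_k\|\widetilde\theta_g-\widetilde\vartheta_k\|=O_p(r_n^2/\rho)$. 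Lipschitz continuity of the product and $G_l/G\to c_l>0$ then yield $\prod_k\|\vartheta_{l0}-\widetilde\vartheta_k\|=O_p(r_n+r_n^2/\rho)$ for each $l$. The usual matching argument gives a centre rate $O_p(r_n)=o_p(\rho)$. The resulting extra event is one per cluster, so it adds nothing to the union over $g$.

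Alternatively, note that the paper defines $\widetilde C_k$ by the nearest-centre rule, not by $\widetilde\theta_g=\widetilde\vartheta_k$. Under that rule your inclusion argument already does all the work. Uniform consistency of $\widetilde\theta_g$, consistency of the centres, and separation together show that every group's nearest centre is its own, which also rules out $\widetilde E_{knG,g}$. The KKT step is needed only for the equality rule $\widetilde C^e_k$.

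Two smaller points. First, the uniform rate of the penalised $\widetilde\theta_g$ should read $O_p(\sqrt{\log(Gn)/n}+\rho)$, since the penalty subgradient is $O(\rho)$; only $o_p(1)$ is used, so this is harmless. Second, Proposition~\ref{prop:first_step_npl}\ref{first-step-consistency} concerns the unpenalised first-step estimator, not the C-Lasso minimiser. Uniform consistency of the C-Lasso $\widetilde\theta_g$ therefore needs its own short argument, using that the penalty lies in $[0,C\rho]$ on the compact $\Theta$.
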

Theorem \ref{thm:classification_consistency} demonstrates that all schools are classified into their true clusters w.p.a.1.~if we know $K_0$ \textit{a priori}. In the next subsection, we will propose an information criterion (IC) to select the number of clusters in practice, building on the post-classification estimators.

\subsection{Information Criterion}\label{sec:post-classification}
We re-estimate the unknown parameters via NPL using within-cluster school data, denoting the resulting post-classification estimators as $$\widehat{\theta}_{\widetilde{C}_k} =(\widehat{\overline\beta}_{\widetilde{C}_k},\widehat{\beta}_{\widetilde{C}_k}')' \quad \mbox{for } k\in[K].$$ As we mentioned before,  $\widehat{\theta}_{\widetilde{C}_k}$ is useful for constructing the IC and determining $\widehat{K}$. Within each cluster, we can apply the parametric bootstrap to conduct statistical inference based on Theorem~\ref{thm:post_classification_estimation}.

While our preceding analysis presumes prior knowledge of $K_0$, the true number of clusters is generally unknown in empirical applications. To address this practical consideration, we posit that $K_0$ is bounded above by a finite integer $K_{\max}$ and develop a data-driven selection procedure via IC.  We have the C-Lasso estimates $\{\widetilde{\theta}_g(K),\widetilde{\vartheta}_k(K)\}$ of $\{\theta_{g0},\vartheta_{k0}\}$, where we make the dependence of $\widetilde{\theta}_g$ and $\widetilde{\vartheta}_k$  on $K$  explicit. As above, we classify a school $g$ into cluster $\widetilde{C}_k(K)$ if and only if $\widetilde{\theta}_g(K)$ is closest to one in   $\{\widetilde{\vartheta}_k(K)\}_{k \in [K_{\max}]}$.  We propose to select $K$ by minimizing the IC
$$
\mathrm{IC}(K)= \frac{1}{nG}\sum_{k\in[K]}\sum_{g\in\widetilde{C}_k(K)}\sum_{i\in[n]}\psi\Big(w_{ig},\widehat{P}_g;\widehat{\theta}_{\widetilde{C}_k(K)},\widehat{\mu}_g\big(\widehat{\theta}_{\widetilde{C}_k(K)},\widehat{P}_g\big)\Big)+\lambda pK,
$$
where $\lambda$  is a tuning parameter and we recommend setting $\lambda=\frac{1}{4}\log(\log n)/n$, in accordance with \citet*[Section 2.5]{su2016identifying}. Let $$\widehat{K}=\arg\min_{K\in[K_{\max}]}\mathrm{IC}(K),$$ where $K_{\max}\geq K_0$ is a pre-specified integer. The following proposition justifies the use of the IC.
\begin{proposition}\label{prop:K-selection}
    Suppose conditions in Proposition \ref{prop:first_step_npl} hold, if $K_{\max}\geq K_0$ and $\lambda + (n\lambda)^{-1}\to 0$ as $n\to\infty$, then
$\Pr\big(\widehat{K}=K_0\big) \to 1$.
\end{proposition}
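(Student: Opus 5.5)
The plan follows the standard argument of \citet*[Theorem 2.6]{su2016identifying}: split $[K_{\max}]\setminus\{K_0\}$ into under-fitted models ($K<K_0$) and over-fitted models ($K_0<K\le K_{\max}$). Since $K_{\max}$ is finite, it suffices to show $\Pr(\mathrm{IC}(K)>\mathrm{IC}(K_0))\to1$ for each fixed $K\neq K_0$. Write $\widehat\sigma^2(K)$ for the average profiled negative log-likelihood in the first term of $\mathrm{IC}(K)$, so that $\mathrm{IC}(K)=\widehat\sigma^2(K)+\lambda pK$.

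\emph{Step 1 (the correct $K_0$).} By Theorem \ref{thm:classification_consistency}, $\widetilde C_k(K_0)=C_{k0}$ for all $k$ w.p.a.1. On this event the post-classification estimators coincide with oracle cluster-level NPL estimators. By the argument behind Theorem \ref{thm:post_classification_estimation}\ref{thm:post_classification_estimation-a}, applied cluster by cluster, they satisfy $\widehat\theta_{\widetilde C_k}-\vartheta_{k0}=O_p((nG)^{-1/2}+n^{-1})$. A second-order expansion of the profiled criterion around $\vartheta_{k0}$, combined with the $\sqrt n$-rate for the CCPs from Proposition \ref{prop:first_step_npl}\ref{first-step-probability-vector} and Assumption \ref{npl}\ref{npl-v}, then gives
\[
\widehat\sigma^2(K_0)=\frac1{nG}\sum_{g}\sum_i\psi\big(w_{ig},P_{g0};\theta_{g0},\widehat\mu_g(\theta_{g0},P_{g0})\big)+O_p(n^{-1}).
\]
The $O_p(n^{-1})$ remainder comes from the incidental-parameter and CCP estimation error.

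\emph{Step 2 (under-fitting, $K<K_0$).} Any partition into $K<K_0$ clusters must merge groups from at least two true clusters with $\vartheta_{k0}\neq\vartheta_{l0}$, and every true cluster has positive limiting proportion $c_k$. Taking the infimum over all partitions $\{C_k\}$ with $K$ elements and over all common slopes, I will show that
\[
\inf \big[\widehat\sigma^2(K)\big]-\widehat\sigma^2(K_0)\ \ge\ \underline{c}_K+o_p(1)
\]
for some $\underline{c}_K>0$. This follows from a uniform law of large numbers for $Q_{ng}$ over the compact $\Theta$, together with identification of $\theta_{g0}$ as the unique minimizer of the limiting group criterion. That identification holds by Assumption \ref{npl}\ref{npl-iv} and the isolated fixed point, with the Hessian kept positive definite by Assumption \ref{assu:invertibility_assumption1}. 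Because $\lambda\to0$, the penalty difference $\lambda p(K-K_0)\to0$, so $\mathrm{IC}(K)>\mathrm{IC}(K_0)$ w.p.a.1. I expect this to be the main obstacle. The partition is data-dependent, so the lower bound must be uniform over all partitions, and the replacement of $P_{g0}$ by $\widehat P_g$ must be controlled uniformly in $g$. That control would come from the $\sqrt{\log(Gn)}$ bound of Proposition \ref{prop:first_step_npl} and the Lipschitz dependence of $\psi$ on $\overline P_{ig}$ granted by Assumption \ref{npl}\ref{npl-v}.

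\emph{Step 3 (over-fitting, $K>K_0$).} Here the partition refines the true one up to a vanishing fraction of misclassified groups. Each estimated cluster is then essentially contained in a true cluster, and the fitted criterion cannot fall much below the oracle. Using the same quadratic expansion as in Step 1, each cluster contributes at most
\[
\frac1{nG}\big\|\sqrt{nG_k}\,\nabla Q\big\|^2_{H^{-1}}=O_p\big((nG)^{-1}\big),
\]
plus the common $O_p(n^{-1})$ bias term, which is identical across $K$ to first order. This yields
\[
\widehat\sigma^2(K_0)-\widehat\sigma^2(K)=O_p(n^{-1}).
\]
The difference needs care: for a general partition only the crude bound $\sup_\theta|\cdot|$ over refinements is available. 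I will therefore bound it by the maximum over the finitely many ways true clusters can be split, treating mis-merged groups as in Step 2. Meanwhile $\mathrm{IC}(K)-\mathrm{IC}(K_0)\ge \lambda p(K-K_0)-O_p(n^{-1})$, which is positive w.p.a.1 because $n\lambda\to\infty$. Combining Steps 2 and 3 over the finitely many $K\neq K_0$ gives $\Pr(\widehat K=K_0)\to1$.
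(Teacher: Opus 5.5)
Your architecture (under- versus over-fitting, finitely many $K$, a penalty of order $\lambda$ against an $O_p(n^{-1})$ fit gain) is the route of Theorem 2.6 in \citet{su2016identifying}, which the paper invokes without details. However, Step 3 does not go through as written.

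Theorem \ref{thm:classification_consistency} is proved only for $K=K_0$. So nothing tells you that for $K>K_0$ the C-Lasso partition refines $\{C_{k0}\}$ up to a vanishing fraction of groups. Your fallback, a maximum over ``the finitely many ways true clusters can be split,'' is not a finite maximum: $C_{k0}$ has exponentially many (in $G_k$) sub-partitions. The per-cluster bound $O_p((nG)^{-1})$ also fails for data-chosen subsets. An over-fitted C-Lasso can split a true cluster along group-level estimation noise. Take, e.g., $S=\{g\in C_{k0}: \text{the first coordinate of } \sqrt{n}\,\nabla Q_{ng}(\vartheta_{k0}) \text{ is positive}\}$: then $\sum_{g\in S}\sqrt n\,\nabla Q_{ng}(\vartheta_{k0})$ is of order $|S|$, not $\sqrt{|S|}$, so the fit gain from such a split is of order $n^{-1}$.

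Your target $\widehat\sigma^2(K_0)-\widehat\sigma^2(K)=O_p(n^{-1})$ is still correct. The way to reach it without describing the over-fitted partition is a partition-free lower bound. Suppose the same $\widehat P_g$ enters every $\mathrm{IC}(K)$, as the notation suggests (e.g.\ the first-step $\widetilde P_g$). Then each summand of the first term of $\mathrm{IC}(K)$ satisfies $Q_{ng}(\widehat\theta_{\widetilde C_k(K)};\widehat P_g)\ge\min_\theta Q_{ng}(\theta;\widehat P_g)$. Hence $\widehat\sigma^2(K)\ge G^{-1}\sum_{g}\min_\theta Q_{ng}(\theta;\widehat P_g)$ for every $K$ and every partition. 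You then only need to show, on the event $\widetilde C_k(K_0)=C_{k0}$ from Theorem \ref{thm:classification_consistency}, that $\widehat\sigma^2(K_0)$ exceeds this bound by $O_p(n^{-1})$. That excess is an average of quadratic forms in $\widehat\theta_{C_{k0}}-\check\theta_g$, where $\check\theta_g$ is the group-wise minimizer. By the NPL fixed-point property, $\check\theta_g=\widetilde\theta_g$ when $\widehat P_g=\widetilde P_g$. The profiled Hessian is bounded above by the unprofiled one.

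Rates and Step 2 also need attention. With the recommended $\lambda=\frac14\log(\log n)/n$, $n\lambda$ diverges only like $\log\log n$. So every remainder that does not cancel between $\mathrm{IC}(K)$ and $\mathrm{IC}(K_0)$ must be $O_p(n^{-1})$ with no $\log(Gn)$ factor.
\begin{itemize}
\item In Step 1 you replace $P_{g0}$ using Proposition \ref{prop:first_step_npl}\ref{first-step-probability-vector}, Assumption \ref{npl}\ref{npl-v} and Cauchy--Schwarz. These give only $O_p(\sqrt{\log(Gn)/n})$, exactly as in the proof of Theorem \ref{thm:classification_consistency}. Even after cancellation across $K$, the crude bound leaves $O_p(\sqrt{\log(Gn)}/n)$, which is not $o_p(\lambda)$. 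Reaching $O_p(n^{-1})$ would require the low-rank structure $\widehat P_g-P_{g0}\approx(\partial P_g/\partial\zeta')(\widetilde\zeta_g-\zeta_{g0})$.
\item The lower-bound device avoids this, because $\widehat P_g$ is held fixed on both sides. You still need $G^{-1}\sum_g n\|\check\theta_g-\theta_{g0}\|^2=O_p(1)$ from a moment bound. The uniform high-probability bound of Proposition \ref{prop:first_step_npl} would again cost a factor $\log(Gn)$.
\item In Step 2, uniformity over partitions follows by pigeonhole: some estimated cluster contains non-vanishing fractions of two distinct true clusters. Turning that into a constant $\underline c_K>0$ needs an identification margin for the limiting profiled criterion that is uniform in $g$. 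Assumption \ref{npl}\ref{npl-iv} and Assumption \ref{assu:invertibility_assumption1} are group-by-group statements and do not supply it. This is why \citet{su2016identifying} state the under-fitting separation, an infimum over all $K$-partitions, as a high-level assumption.
\end{itemize}
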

Proposition \ref{prop:K-selection} shows that the true number of clusters can be consistently estimated, thus closing the loop of estimation and inference.
In Figure \ref{fig:summary-procedure}, we recap the building blocks of estimation, model selection, and inference when $K_0$ is unknown. This scheme makes the implementation feasible with real data.
\afterpage{
\begin{figure}[!htbp]
\centering
\resizebox{1\linewidth}{!}{%
\begin{tikzpicture}[
    stage/.style={
        rectangle,
        draw,
        rounded corners=3pt,
        align=center,
        minimum width=5.8cm,
        minimum height=2.4cm,
        font=\small
    },
    arrow/.style={
        ->,
        thick,
        rounded corners,
        >=Stealth
    }
]

\node (stage1) [stage] {
\textbf{Estimation} \\[0.4em]
For $K = 1,\dots,K_{\max}$, \\ apply the adjusted algorithm in Section \ref{sec:adjusted-algorithm},\\
obtain $\{\widehat{\theta}_{\widetilde{C}_k}\}_{k\in[K]}$
};

\node (stage2) [stage, right=1.8cm of stage1] {
\textbf{Model Selection} \\[0.4em]
Compute $\text{IC}(K)$, \\[0.2em]
select $\widehat{K},$\\
retrieve $\{\widehat{\theta}_{\widetilde{C}_k}\}_{k\in[\widehat{K}]}$ as the point estimators
};

\node (stage3) [stage, below=2.0cm of $(stage1)!0.5!(stage2)$] {
\textbf{Inference} \\[0.4em]
Apply Algorithm~\ref{alg:Bootstrap} to each cluster, \\[0.2em]
obtain bias-corrected estimators $\{\widehat{\theta}^*_{\widetilde{C}_k}\}_{k\in[\widehat{K}]}$\\[0.2em]
and their confidence intervals
};

\draw [arrow] (stage1.east) -- (stage2.west);

\draw [arrow]
    (stage1.south) 
    .. controls +(0,-1.0) and +(-1.0,0) ..
    (stage3.west);

\draw [arrow]
    (stage2.south) 
    .. controls +(0,-1.0) and +(1.0,0) ..
    (stage3.east);

\end{tikzpicture}%
}
\caption{Implementation scheme when $K_0$ is unknown}\label{fig:summary-procedure}
\end{figure}
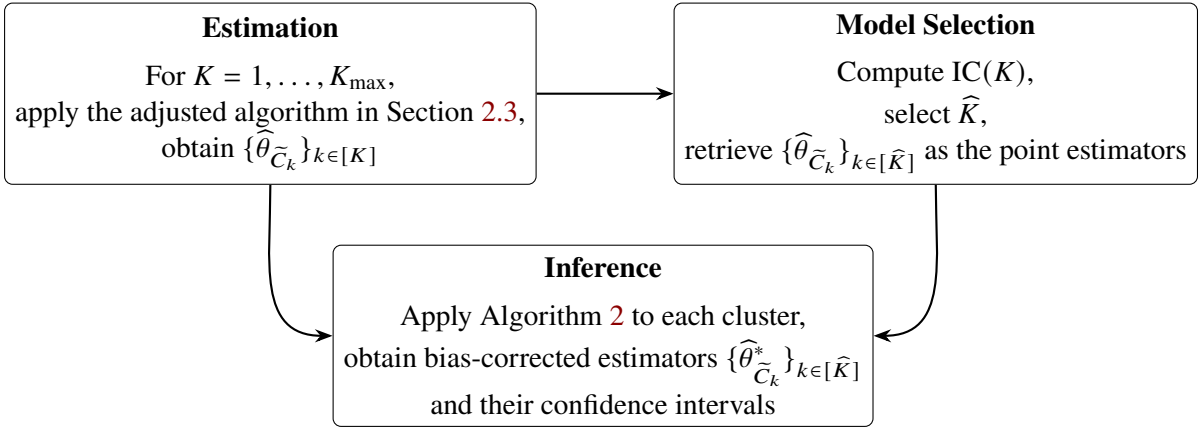
}
\subsection{Inference on Average Partial Effects}\label{sec:ape_inference}

We now establish the asymptotic distribution of the debiased APE estimator.
Recall from~\eqref{eq:cluster-APE} that the cluster-$k$ APE is a smooth functional $\delta^0_{k,p} = \delta(\theta_{k0}, P_{k0})$.
Let $\nabla\delta^0_{k,p} = \partial\delta^0_{k,p}/\partial\theta_k'$ be its gradient.
From~\eqref{eq:ape}, the gradient has the closed form
\begin{equation}\label{eq:ape_gradient}
\nabla\delta^0_{k,p}
= \frac{1}{G_k}\sum_{g\in C_{k0}}
\frac{1}{n}\mathbf{1}_n'
\big(I_n - \overline{\beta}_k D_g W^g\big)^{-1}
\Big[
  D_g W^g \big(I_n - \overline{\beta}_k D_g W^g\big)^{-1} D_g \beta_k \mathbf{1}_n
  \,{:}\,
  D_g \mathbf{1}_n
\Big],
\end{equation}
where the two blocks correspond to the partial derivatives with respect to $\overline{\beta}_k$ and $\beta_k$, respectively.
Let $\widehat{\delta}_{k,p}^{\,*}$ be the debiased APE estimator obtained by evaluating the functional~\eqref{eq:cluster-APE} at $(\widehat{\theta}^*_{\widetilde{C}_k}, \widehat{P}_{\widetilde{C}_k})$, where $\widehat{\theta}^*_{\widetilde{C}_k}$ is the post-classification debiased estimator and $\widehat{P}_{\widetilde{C}_k}$ is the within-cluster CCP estimator.

\begin{corollary}\label{cor:ape_asymptotics}
Under the conditions of Theorem~\ref{thm:post_classification_estimation} and \ref{thm:classification_consistency}, with the gradient $\nabla\delta^0_{k,p}$ given by~\eqref{eq:ape_gradient} and cluster-specific matrices $H_{0k}$, $\Omega_{0k}$ defined as the analogs of $H_0$, $\Omega_0$ in Appendix~\ref{appendix:post-estimation}:
\begin{enumerate}[label=(\alph*),font=\upshape]
\item\label{cor:ape_asymptotics-a} $\sqrt{nG_k}\,\big(\widehat{\delta}_{k,p}^{\,*} - \delta^0_{k,p}\big) \stackrel{d}{\to}
    N\big(0,\; (\nabla\delta^0_{k,p})' \,H_{0k}^{-1}\Omega_{0k}H_{0k}^{-1}\, \nabla\delta^0_{k,p}\big)$;
\item\label{cor:ape_asymptotics-b}
conditioning on the data $\boldsymbol W$, the $b$th bootstrap estimator $\widehat{\delta}_{k,p}^{\,b}$ for the APE follows
    $$\displaystyle\sup_{a\in\mathbb{R}}\Big|
        \Pr\Big(\sqrt{nG_k}\big(\widehat{\delta}_{k,p}^{\,b} - \widehat{\delta}_{k,p}\big)\leq x \;\Big|\; \boldsymbol W\Big)
      - \Pr\Big(\sqrt{nG_k}\big(\widehat{\delta}_{k,p} - \delta^0_{k,p}\big)\leq x\Big)
    \Big| = o_p(1).$$
\end{enumerate}
\end{corollary}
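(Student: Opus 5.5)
The plan is a delta-method argument layered on Theorems~\ref{thm:post_classification_estimation} and \ref{thm:classification_consistency}.

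\textbf{Reduction to known clusters.} By Theorem~\ref{thm:classification_consistency}, $\Pr(\widetilde{C}_k=C_{k0}\ \forall k)\to1$. On this event $\widehat{\theta}_{\widetilde{C}_k}$, $\widehat{\theta}^*_{\widetilde{C}_k}$ and the bootstrap draws coincide with the oracle estimators built from the true cluster $C_{k0}$. Statements in distribution are unaffected by an event of vanishing probability, so it suffices to prove both parts with $C_{k0}$ in place of $\widetilde C_k$. Within $C_{k0}$ the slopes are homogeneous, so Theorem~\ref{thm:post_classification_estimation} applies verbatim with $G$ replaced by $G_k$, $G_k/n\to cc_k$, and $(H_0,\Omega_0,B_0)$ replaced by $(H_{0k},\Omega_{0k},B_{0k})$.

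\textbf{Linearization.} Write $\widehat\delta^*_{k,p}-\delta^0_{k,p}$ as the sum of two pieces:
\[
\big[\delta(\widehat\theta^*,\widehat P)-\delta(\theta_{k0},\widehat P)\big]+\big[\delta(\theta_{k0},\widehat P)-\delta(\theta_{k0},P_{k0})\big].
\]

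\emph{First piece.} Assumption~\ref{npl}\ref{npl-ii}--\ref{npl-iii} give $|\overline\beta_k|\,\|D_gW^g\|_\infty\le 4\cdot\tfrac14<1$ in a neighborhood, because $P(1-P)\le1/4$. (Here $D_g$ should effectively carry the $P(1-P)$ weights.) Hence the Neumann series for $(I_n-\overline\beta D_gW^g)^{-1}$ converges with bounded $\ell_\infty$ norm, uniformly in $g$ and in $\theta$ near $\theta_{k0}$. This makes $\delta$ continuously differentiable in $\theta$ with gradient \eqref{eq:ape_gradient}, and the gradient is Lipschitz uniformly. A mean value expansion combined with Theorem~\ref{thm:post_classification_estimation}\ref{thm:post_classification_estimation-b} then gives $\sqrt{nG_k}$ times the first piece equal to $(\nabla\delta^0_{k,p})'\sqrt{nG_k}(\widehat\theta^*-\theta_{k0})+o_p(1)$, which has the stated normal limit.

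\emph{Second piece (the main obstacle).} This is the contribution of estimated CCPs, including their dependence on the estimated fixed effects $\widehat\mu_g$. The crude bound from Proposition~\ref{prop:first_step_npl}\ref{first-step-probability-vector} is $O(\sqrt{\log(Gn)/n})$ per group, which is not $o((nG_k)^{-1/2})$. The argument therefore has to exploit averaging over groups:
\begin{enumerate}[label=(\roman*),font=\upshape]
\item Expand $\delta$ in $P_g$ to first order. Using the NPL fixed point, express $\widehat P_g-P_{g0}$ through $\widehat\theta-\theta_{k0}$ (absorbed into the gradient, if $\nabla\delta$ is read as a total derivative along the equilibrium map) and through $\widehat\mu_g-\mu_{g0}$.
\item The $\widehat\mu_g-\mu_{g0}$ terms are mean-zero at order $n^{-1/2}$ and independent across $g$, so their average over $g$ is $O_p((nG_k)^{-1/2})$. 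Their $O(n^{-1})$ biases, summed and scaled by $\sqrt{nG_k}$, contribute an $O(1)$ bias.
\item That bias is of the same incidental-parameter type as $B_{0k}$. It is removed because the debiasing is inherited through the bootstrap: I would define $\widehat\delta^*$ as the bootstrap-median-corrected APE, or argue that the median correction of $\theta$ absorbs it through the total derivative.
\end{enumerate}
I would first try to show that the remaining term is asymptotically negligible or absorbed, and would flag this as the delicate step. Assumption~\ref{npl}\ref{npl-v} controls how $P$ errors propagate through $W^g$, keeping the second-order remainders at $O_p(\log(Gn)/n)$, which is $o((nG_k)^{-1/2})$ when $G\asymp n$.

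\textbf{Part (b).} Part \ref{cor:ape_asymptotics-b} follows by the same expansion in the bootstrap world, with $(\widehat\theta,\widehat P)$ playing the role of the truth. Theorem~\ref{thm:post_classification_estimation}\ref{thm:post_classification_estimation-c} is the Kolmogorov-distance bootstrap consistency for every fixed $a$. Applying it with $a=\nabla\delta^0_{k,p}/\|\nabla\delta^0_{k,p}\|$, and noting that the plug-in gradient at $\widehat\theta$ is consistent, yields conditional consistency for the linear term. Both sides share the same bias, and the remainders are $o_p(1)$ conditionally. Pólya's theorem, via continuity of the normal limit, upgrades pointwise convergence to the supremum over $x$.
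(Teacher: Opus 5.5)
\textbf{Verdict: genuine gap.} Your outline matches the paper's proof: reduce to the true partition via Theorem~\ref{thm:classification_consistency}, apply the delta method in $\theta$ using Theorem~\ref{thm:post_classification_estimation}\ref{thm:post_classification_estimation-b}, and get part (b) from Theorem~\ref{thm:post_classification_estimation}\ref{thm:post_classification_estimation-c} with $a=\nabla\delta^0_{k,p}$. The paper disposes of your ``second piece'' in one line, citing continuity of $\delta$ in $P_k$ and consistency of $\widehat P_{\widetilde C_k}$. You are right that consistency is not enough at the $\sqrt{nG_k}$ scale. However, your plan for that piece does not close, and your own step (ii) shows that neither exit, ``negligible'' or ``absorbed'', is available.

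\textbf{Variance.} By (ii), the term $\frac{1}{G_k}\sum_{g\in C_{k0}}a_g(\widehat\mu_g-\mu_{g0})$, with $a_g=\sum_i(\partial\delta_g/\partial P_{ig})(\partial P_{ig0}/\partial\mu_g)$, is $O_p((nG_k)^{-1/2})$ with an $O(n^{-1})$ mean. After scaling by $\sqrt{nG_k}$ it is a non-degenerate $O_p(1)$ term. It is driven by the $\mu_g$-scores $n^{-1}\sum_i\partial\psi_{ig0}/\partial\mu_g$, which the linearization of $\widehat\theta$ partials out, since that linearization uses $\partial\psi_{ig0}/\partial\theta-\Sigma_{\theta\mu_g0}\Sigma_{\mu_g\mu_g0}^{-1}\partial\psi_{ig0}/\partial\mu_g$. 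So it cannot be written as $(\nabla\delta^0_{k,p})'\sqrt{nG_k}(\widehat\theta^*-\theta_{k0})$. Unless $a_g\to0$, which fails in general even when $\overline\beta_k=0$, it changes the limit variance away from $(\nabla\delta^0_{k,p})'H_{0k}^{-1}\Omega_{0k}H_{0k}^{-1}\nabla\delta^0_{k,p}$.

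\textbf{Bias.} The $O(1)$ bias of this term, together with the quadratic term $\tfrac12\partial^2_{\mu_g}\delta_g\,(\widehat\mu_g-\mu_{g0})^2$, is an APE-specific incidental-parameter bias. The median correction of $\widehat\theta$ does not touch it. Redefining $\widehat\delta^*$ as a bootstrap-corrected APE would change the estimator named in the statement.

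\textbf{Remainders.} Your claim that second-order remainders of size $O_p(\log(Gn)/n)$ are $o((nG_k)^{-1/2})$ is false. With $G\asymp n$ we have $(nG_k)^{-1/2}\asymp n^{-1}$, so $\sqrt{nG_k}\log(Gn)/n\to\infty$. The quadratic term is precisely an $R_2$-type bias, as in the proof of Theorem~\ref{thm:post_classification_estimation}.

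\textbf{The $\theta$-channel through $\widehat P$.} ``Absorbing'' this channel into a total derivative contradicts \eqref{eq:ape_gradient}, which differentiates in $(\overline\beta_k,\beta_k)$ with $D_g$ held fixed. Moreover, $\widehat P_{\widetilde C_k}$ is the NPL output at the uncorrected $(\widehat{\bm\mu},\widehat\theta)$. That channel therefore also carries the uncorrected bias of $\widehat\theta$.

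\textbf{Part (b).} It inherits the same problem. Theorem~\ref{thm:post_classification_estimation}\ref{thm:post_classification_estimation-c} controls only linear functionals of $\widehat\theta^b-\widehat\theta$. You would need the bootstrap, with $\widehat P^b$ recomputed in each replication, to reproduce the joint law and bias of both the $\theta$-channel and the $\widehat\mu_g$-channel.

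\textbf{What a complete argument needs.} You need one of the following. Either add the fixed-effect contribution to the variance in (a) and debias the APE itself; these are the analogues of the extra terms in the APE results of Fern\'andez-Val and Weidner for nonlinear fixed-effect panels. Or prove a separate bootstrap-consistency result for the full APE functional. Neither follows from Theorem~\ref{thm:post_classification_estimation} plus consistency of $\widehat P$, which is all the paper's argument uses.
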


Part~\ref{cor:ape_asymptotics-a} is implied by applying the delta method to the consistent estimator  $\widehat{P}_{\widetilde{C}_k}$ in Theorem~\ref{thm:post_classification_estimation}\ref{thm:post_classification_estimation-a}.
Theorem~\ref{thm:classification_consistency} ensures that replacing $C_{k0}$ with the estimated partition $\widetilde{C}_k$ does not affect the asymptotic distribution.
The gradient~\eqref{eq:ape_gradient} reveals the social multiplier channel: as $\overline{\beta}_k$ grows, the factor $(I_n - \overline{\beta}_k D_g W^g)^{-1}$ amplifies both the APE and its sampling variance.

The population-level weighted APE defined in~\eqref{eq:weighted-APE} is a linear combination of the cluster-level APEs.
Let $\widehat{\overline{\delta}}_{p}^{\,*}=\sum_{k=1}^{K} (\widehat{G}_k/G)\,\widehat{\delta}_{k,p}^{\,*}$, where $\widehat{G}_k=\#\widetilde{C}_k$ is the estimated cluster size.
Since $\widehat{G}_k/G\stackrel{p}{\to}G_{k}/G$ (Theorem~\ref{thm:classification_consistency}) and each $\widehat{\delta}_{k,p}^{\,*}$ is asymptotically normal (Corollary~\ref{cor:ape_asymptotics}), the continuous mapping theorem implies that $\widehat{\overline{\delta}}_{p}^{\,*}$ is also asymptotically normal with variance
\(\sum_{k=1}^{K} (G_{k}/G)^2\,\mathrm{Var}(\widehat{\delta}_{k,p}^{\,*})\).
The parametric bootstrap procedure extends directly to the weighted APE by computing the weighted averages across bootstrap replications, yielding valid confidence intervals for $\overline{\delta}^0_p$. For discrete covariates, the APE is defined as a finite difference of CCPs. The delta method applies analogously since the APE remains a smooth functional of $(\theta_k, P_k)$.

\section{Monte Carlo Simulations} \label{sec:MC}
We check the finite sample performance of our estimation and inference methods in Monte Carlo experiments. For each individual $i$, we first randomly select the number of friends, $N_i^g$, from a uniform distribution over $\{0\}\cup[N_{\max}]$ with $N_{\max}=5$. Then we randomly form $N_i^g$ friendship links with other individuals in the group $g$ to construct the friend matrix $F^g$.
We consider the following DGP with network interactions through $F^g$:
\begin{equation}
Y_{ig}=1\left\{\overline{P}_{ig0}\overline\beta_{g0}+X_{ig}'\beta_{g0}+c_{g0}+\mu_{g0}>\varepsilon_{ig}\right\},\quad i\in[n],g\in[G],
\end{equation}
where $c_{g0}+\mu_{g0}$ is a composite fixed effect, with $\mu_{g0}$ being standard normal and $c_{g0}$ taking cluster-specific values $c_{k0}$. The idiosyncratic error $\varepsilon_{ig}$ is standard logistic, independent across $g$ and $i$, and independent of all regressors.  The CCPs, $P_{ig0}$, are solved using a fixed point algorithm on
\begin{equation}
P_{ig0}=\Lambda \left(\overline{P}_{ig0}\overline\beta_{g0}+X_{ig}'\beta_{g0}+c_{g0}+\mu_{g0} \right),\quad i\in[n],g\in[G].
\end{equation}
The observations in each DGP are drawn from two clusters with the proportion $G_1:G_2 =
0.6:0.4.$ The exogenous regressor $X_{ig}=0.1\mu_{g0}+e_{ig}$ where $e_{ig}\sim i.i.d.~ N(0,1)$. We set the number of groups $G \in  \{100, 200\}$, and the group size $n \in \{50,100,200\}$. We set the true coefficients $(\overline\beta_{k0},\beta_{k0},c_{k0})$ as $(1.5,-1,-0.5)$ and $(0,1,0)$, respectively, for the two clusters.
The number of Monte Carlo replications is 1000 for each DGP, and 500 parametric bootstrap replications are used for the bias correction. 

\begin{table}[!ht]
\centering
\caption{Percentage of selecting $K=1,\cdots,4$ clusters}
\label{tab:MC_results}
\vspace{0.5em}
\begin{threeparttable}
\small
\setlength{\tabcolsep}{4pt}
\begin{tabular}{rr|rrrr|c}
\hline
$G$ & $n$ & $K=1$ & $K=2$ & $K=3$ & $K=4$ & \% Correct Classification \\
\hline
100 & 50 & 0.0 & \textbf{91.3} & 8.7 & 0.0 & 98.1\\
200 & 50 & 0.0 & \textbf{96.3} & 3.7 & 0.0 & 98.2\\
100 & 100 & 0.0 & \textbf{95.7} & 4.3 & 0.0 & 99.7\\
200 & 100 & 0.0 & \textbf{99.2} & 0.8 & 0.0 & 99.7\\
100 & 200 & 0.0 & \textbf{99.7} & 0.3 & 0.0 & 100.0\\
200 & 200 & 0.0 & \textbf{99.1} & 0.9 & 0.0 & 100.0\\
\hline
\end{tabular}
\end{threeparttable}
\end{table}
Table~\ref{tab:MC_results} reports the percentage of selecting $K=1,\dots,4$ clusters and the average correct classification rate conditional on selecting $K=2$. The simulation results of classification reveal two key patterns as the number of students per school ($n$) increases: (i) the probability of correctly identifying the number of latent clusters converges to 1; and (ii) classification accuracy improves monotonically.
The simulation scenario with $(100,100)$ mirrors the structure of our empirical dataset, which comprises 119 schools with an average of 78 students each (totaling 9,262 students), and our algorithm determines 2 clusters. In this configuration, the C-Lasso method demonstrates robust performance: it selects the true number of clusters $K_0 = 2$ in 95.7\% of replications, while achieving approximately 99.7\% classification accuracy.
These results provide favorable evidence that our approach is suitable for the empirical application to be presented in Section \ref{sec:empirical_application}.

\begin{table}[!htbp]
\centering
\caption{Simulation Results with Oracle Grouping}
\label{tab:oracle_simulation_results}
\vspace{0.5em}
\begin{threeparttable}
\small
\setlength{\tabcolsep}{4pt}
\begin{tabular}{rr|rr|rrr|rr|rrr}
\hline
& & \multicolumn{5}{c|}{$\overline{\beta}_k$} & \multicolumn{5}{c}{$\beta_k$} \\
\cline{3-12}
& & \multicolumn{2}{c|}{Original} & \multicolumn{3}{c|}{Debiased} & \multicolumn{2}{c|}{Original} & \multicolumn{3}{c}{Debiased} \\
\cline{3-4}\cline{5-7}\cline{8-9}\cline{10-12}
$G$ & $n$ & Bias & RMSE & Bias & RMSE & 95\%CR & Bias & RMSE & Bias & RMSE & 95\%CR \\
\hline
\multicolumn{12}{c}{\textbf{Cluster 1}}  \\
\hline
100 & 50 & 0.009 & 0.172 & -0.015 & 0.170 & 0.956 & -0.024 & 0.059 & 0.001 & 0.053 & 0.962 \\
100 & 100 & 0.012 & 0.123 & 0.005 & 0.122 & 0.941 & -0.012 & 0.040 & 0.000 & 0.038 & 0.950 \\
100 & 200 & 0.007 & 0.084 & 0.003 & 0.084 & 0.945 & -0.005 & 0.027 & 0.000 & 0.026 & 0.945 \\
200 & 50 & 0.017 & 0.124 & -0.005 & 0.121 & 0.958 & -0.024 & 0.046 & -0.000 & 0.038 & 0.962 \\
200 & 100 & 0.003 & 0.083 & -0.006 & 0.082 & 0.949 & -0.009 & 0.029 & 0.003 & 0.027 & 0.951 \\
200 & 200 & 0.002 & 0.059 & -0.002 & 0.059 & 0.959 & -0.007 & 0.020 & -0.001 & 0.019 & 0.944 \\
\hline
\multicolumn{12}{c}{\textbf{Cluster 2}}  \\
\hline
100 & 50 & 0.004 & 0.228 & 0.006 & 0.226 & 0.951 & 0.020 & 0.070 & -0.004 & 0.065 & 0.960 \\
100 & 100 & -0.008 & 0.161 & -0.009 & 0.160 & 0.944 & 0.014 & 0.049 & 0.001 & 0.046 & 0.945 \\
100 & 200 & -0.004 & 0.111 & -0.002 & 0.111 & 0.948 & 0.007 & 0.033 & 0.001 & 0.032 & 0.943 \\
200 & 50 & -0.008 & 0.162 & -0.007 & 0.160 & 0.954 & 0.022 & 0.051 & -0.001 & 0.044 & 0.958 \\
200 & 100 & 0.004 & 0.109 & 0.003 & 0.109 & 0.949 & 0.010 & 0.033 & -0.002 & 0.031 & 0.954 \\
200 & 200 & 0.001 & 0.075 & 0.002 & 0.074 & 0.952 & 0.005 & 0.024 & -0.001 & 0.024 & 0.930 \\
\hline
\end{tabular}
\end{threeparttable}
\end{table}
Since there is no prior literature analyzing incidental parameter bias and its correction in the NPL framework, we assess the finite sample performance using oracle grouping. Oracle grouping ensures that any finite sample bias does not arise from misclassification. We consider median bias (Bias), RMSE, and 95\% coverage rate (95\%CR; reported only for the debiased estimator). Table \ref{tab:oracle_simulation_results} presents the simulation results. The debiased estimator significantly reduces bias, lowers RMSE, and achieves good coverage---especially for $\beta_k$. These results provide evidence of the efficacy of the debiased procedure.

We next evaluate the performance of our post-classification estimators, restricting attention to cases where the adjusted algorithm correctly selects $K = K_0$. We also present simulation results for estimating the unknown slope coefficients using a pooled estimator based on all observations.
Simulation results in Table \ref{tab:simulation_results_k3_simple} show that
the pooled estimator, which fails to capture the heterogeneity across groups, performs poorly. On the other hand, the post-classification estimator $\widehat\theta_{\widetilde{C}_k}$  substantially enhances the finite sample estimation quality,
and the bias is further reduced when using $\widehat\theta_{\widetilde{C}_k}^*$.
Compared with the results in Table~\ref{tab:oracle_simulation_results},
when the true cluster identity is unknown, misclassification error is inevitable in finite samples, which affects the bias and statistical inference when $n=50$.
As soon as $n$ reaches 100, the bias decreases rapidly and the bootstrap coverage rate approaches the nominal 95\%.

Table~\ref{tab:ape_total_results} reports simulation results for the average partial effects of the covariates.
The structure parallels that of the coefficient tables: for each cluster, we report the pooled, original, and debiased estimates.
The pooled estimator, which ignores cluster heterogeneity, produces an economically large bias: it overestimates the APE for Cluster~1 by 0.206 and underestimates it for Cluster~2 by 0.191, reflecting the inability of a common parameter vector to capture the distinct peer effect structures across clusters.
The post-classification estimator reduces the bias to near zero in both clusters as well as for the weighted average, and the debiased procedure further improves coverage.
As $n$ increases, the RMSE of the post-classification estimator declines markedly (e.g., from 0.012 to 0.005 for Cluster~1 at $G=100$) and the coverage rates of the debiased estimates improve toward the theoretical level.
\begin{landscape}
\begin{table}
\centering
\caption{Post-Classification Simulation Results}
\label{tab:simulation_results_k3_simple}
\vspace{0.5em}
\begin{threeparttable}
\small
\setlength{\tabcolsep}{4pt}
\begin{tabular}{rr|rr|rr|rrr|rr|rr|rrr}
\hline
& & \multicolumn{7}{c|}{$\overline{\beta}_k$} & \multicolumn{7}{c}{$\beta_k$} \\
\cline{3-9}\cline{10-16}
& & \multicolumn{2}{c|}{Pooled} & \multicolumn{2}{c|}{Original} & \multicolumn{3}{c|}{Debiased} & \multicolumn{2}{c|}{Pooled} & \multicolumn{2}{c|}{Original} & \multicolumn{3}{c}{Debiased} \\
\cline{3-4}\cline{5-6}\cline{7-9}\cline{10-11}\cline{12-13}\cline{14-16}
$G$ & $n$ & Bias & RMSE & Bias & RMSE & Bias & RMSE & 95\%CR & Bias & RMSE & Bias & RMSE & Bias & RMSE & 95\%CR \\
\hline
\multicolumn{16}{c}{\textbf{Cluster 1}}  \\
\hline
100 & 50 & -1.288 & 1.302 & 0.034 & 0.183 & 0.012 & 0.178 & 0.939 & 0.947 & 0.947 & -0.017 & 0.064 & 0.007 & 0.060 & 0.924 \\
100 & 100 & -1.274 & 1.292 & 0.013 & 0.123 & 0.005 & 0.122 & 0.941 & 0.947 & 0.948 & -0.012 & 0.041 & 0.000 & 0.039 & 0.943 \\
100 & 200 & -1.282 & 1.295 & 0.006 & 0.085 & 0.002 & 0.084 & 0.949 & 0.948 & 0.948 & -0.005 & 0.027 & 0.001 & 0.027 & 0.949 \\
200 & 50 & -1.277 & 1.287 & 0.051 & 0.138 & 0.034 & 0.131 & 0.936 & 0.947 & 0.946 & -0.016 & 0.047 & 0.009 & 0.044 & 0.906 \\
200 & 100 & -1.279 & 1.286 & 0.007 & 0.084 & -0.003 & 0.083 & 0.947 & 0.949 & 0.950 & -0.007 & 0.029 & 0.004 & 0.028 & 0.942 \\
200 & 200 & -1.272 & 1.285 & 0.002 & 0.059 & -0.002 & 0.059 & 0.954 & 0.948 & 0.948 & -0.007 & 0.020 & -0.001 & 0.019 & 0.941 \\
\hline
\multicolumn{16}{c}{\textbf{Cluster 2}}  \\
\hline
100 & 50 & 0.212 & 0.264 & -0.030 & 0.241 & -0.028 & 0.239 & 0.934 & -1.053 & 1.054 & -0.001 & 0.085 & -0.027 & 0.087 & 0.837 \\
100 & 100 & 0.226 & 0.272 & -0.011 & 0.161 & -0.013 & 0.160 & 0.940 & -1.053 & 1.053 & 0.013 & 0.051 & 0.000 & 0.049 & 0.933 \\
100 & 200 & 0.218 & 0.261 & -0.004 & 0.111 & -0.004 & 0.110 & 0.948 & -1.052 & 1.052 & 0.007 & 0.033 & 0.000 & 0.032 & 0.942 \\
200 & 50 & 0.223 & 0.247 & -0.049 & 0.175 & -0.049 & 0.173 & 0.936 & -1.053 & 1.054 & -0.011 & 0.066 & -0.035 & 0.072 & 0.786 \\
200 & 100 & 0.221 & 0.243 & -0.001 & 0.110 & -0.000 & 0.110 & 0.952 & -1.051 & 1.051 & 0.007 & 0.036 & -0.005 & 0.035 & 0.927 \\
200 & 200 & 0.228 & 0.246 & 0.001 & 0.075 & 0.001 & 0.075 & 0.954 & -1.052 & 1.052 & 0.005 & 0.025 & -0.000 & 0.024 & 0.932 \\
\hline
\end{tabular}
\end{threeparttable}
\end{table}
\end{landscape}

\begin{table}[!ht]
\centering
\caption{Post-Classification Simulation Results for Average Partial Effects}
\label{tab:ape_total_results}
\vspace{0.5em}
\begin{threeparttable}
\small
\setlength{\tabcolsep}{4pt}
\begin{tabular}{rr|rr|rr|rrr}
\hline
& & \multicolumn{2}{c|}{Pooled} & \multicolumn{2}{c|}{Original} & \multicolumn{3}{c}{Debiased} \\
\cline{3-4}\cline{5-6}\cline{7-9}
$G$ & $n$ & Bias & RMSE & Bias & RMSE & Bias & RMSE & 95\%CR \\
\hline
\multicolumn{9}{c}{\textbf{Cluster 1}} \\
\hline
  100 & 50 & 0.206 & 0.206 & 0.001 & 0.012 & 0.001 & 0.012 & 0.896 \\
  100 & 100 & 0.206 & 0.207 & 0.000 & 0.008 & 0.000 & 0.008 & 0.928 \\
  100 & 200 & 0.207 & 0.207 & 0.000 & 0.005 & 0.000 & 0.005 & 0.931 \\
  200 & 50 & 0.206 & 0.206 & 0.001 & 0.008 & 0.001 & 0.008 & 0.902 \\
  200 & 100 & 0.207 & 0.207 & 0.001 & 0.005 & 0.001 & 0.005 & 0.930 \\
  200 & 200 & 0.207 & 0.206 & -0.000 & 0.004 & 0.000 & 0.004 & 0.937 \\
\hline
\multicolumn{9}{c}{\textbf{Cluster 2}} \\
\hline
  100 & 50 & -0.191 & 0.192 & -0.003 & 0.014 & -0.003 & 0.014 & 0.854 \\
  100 & 100 & -0.191 & 0.191 & -0.000 & 0.008 & -0.000 & 0.008 & 0.928 \\
  100 & 200 & -0.191 & 0.191 & 0.000 & 0.005 & 0.000 & 0.005 & 0.950 \\
  200 & 50 & -0.191 & 0.192 & -0.005 & 0.011 & -0.005 & 0.011 & 0.791 \\
  200 & 100 & -0.191 & 0.191 & -0.000 & 0.006 & -0.000 & 0.006 & 0.918 \\
  200 & 200 & -0.191 & 0.191 & 0.000 & 0.004 & 0.000 & 0.004 & 0.930 \\
\hline
\multicolumn{9}{c}{\textbf{Weighted Average}} \\
\hline
  100 & 50 & 0.047 & 0.048 & -0.001 & 0.008 & -0.001 & 0.008 & 0.929 \\
  100 & 100 & 0.047 & 0.048 & -0.000 & 0.005 & -0.000 & 0.005 & 0.939 \\
  100 & 200 & 0.048 & 0.048 & 0.000 & 0.004 & 0.000 & 0.004 & 0.942 \\
  200 & 50 & 0.047 & 0.047 & -0.001 & 0.005 & -0.001 & 0.005 & 0.947 \\
  200 & 100 & 0.048 & 0.048 & 0.000 & 0.004 & 0.000 & 0.004 & 0.955 \\
  200 & 200 & 0.048 & 0.048 & -0.000 & 0.003 & -0.000 & 0.003 & 0.955 \\
\hline
\end{tabular}
\end{threeparttable}
\end{table}

The weighted average panel of Table~\ref{tab:ape_total_results} presents the simulation results for the weighted APE. The pooled estimator has a non-negligible bias and the bias does not decrease with either $G$ or $n$, which demonstrates that the pooled inference is biased even though the target is not cluster-specific. In contrast, the debiased estimator is nearly unbiased with much smaller RMSE, which is decreasing with $G$ and $n$. The coverage rate of the bootstrap confidence interval approaches the nominal level as $n$ increases. The weighted average cancels out the cluster-level biases arising from the misclassification. So the CR of the bootstrap confidence interval of the weighted APE is closer to the nominal level than that of the cluster-specific APEs.

Overall, the APE results reinforce the findings from the coefficient tables: the proposed method delivers reliable inference on economically meaningful quantities once cluster heterogeneity is properly accounted for.

The findings of this simulation exercise corroborate the asymptotic results presented in the previous section. The algorithm is capable of identifying the cluster identities of the groups, and the post-classification estimators effectively approximate the oracle performance as if the cluster membership were known.

In terms of the computational costs of this comprehensive simulation exercise, in total we ran 12 million bootstrap replicates.\footnote{There are 12 DGPs, including 2 scenarios with 6 $(n,G)$ combinations and 2 clusters. For each DGP, we replicate 1,000 times with 500 bootstrap replications.} We parallelize the process and allocate the tasks to 128 CPU cores (AMD EPYC 9754) in a high-performance computing cluster. It took around 20 hours to finish the simulations. 
The following empirical application, with no need of replication, is fast. Using the same parallel scheme, it took fewer than 20 seconds. Therefore, the computation time is affordable for real-data applications.

\section{Heterogeneous Peer Effects in Risky Behavior} \label{sec:empirical_application}
 \noindent Many empirical studies demonstrate that risky behaviors, such as smoking, binge drinking, or unsafe sexual practices, can spread across social networks \citep{christakis2007spread,centola2010spread}. Peer effects can play a key role in this process, especially among students; see, for example, \citet{duncan2005peer} and \citet{eisenberg2014peer}. However, the existing literature usually focuses on homogeneous peer effects. Understanding heterogeneous peer effects can help policymakers design targeted public interventions for distinct clusters. In this section, we apply our proposed procedure to
 the well-known Add Health dataset
to study heterogeneous peer effects in risky behaviors among students. 

\subsection{Add Health Dataset}

The Add Health study is a longitudinal study of a nationally representative sample of over 20,000 adolescents who were in grades 7--12 during the 1994--95 school year, and have been followed for five waves to date, most recently in 2016--18.\footnote{See detailed description in \url{https://addhealth.cpc.unc.edu/}.} It is widely used in the study of peer effects.
Table \ref{summary_statistics_risky} provides summary statistics of the students and schools.  The dataset contains 119 schools and 9,262 students. The dependent variable, \emph{risky behavior}, is constructed by using the self-report questionnaires in the Add Health dataset. Specifically, the survey question is 
 
 ``\textit{During the past twelve months, how often did you do something dangerous because you were dared to?}''

\begin{table}[htbp]
\centering
\caption{Summary Statistics}
\vspace{0.5em}
\label{summary_statistics_risky}
\begin{tabular}{rrS[table-format=2.3]S[table-format=2.3]S[table-format=2.0]S[table-format=3.3]}
\hline
                              & {Obs.} & {Mean}   & {SD}     & {Min} & {Max}   \\ \hline
\underline{\textbf{Outcome}}                      &      &        &        &     &       \\
Risky Behavior                & 9,262 & 0.409  & 0.492  & 0   & 1     \\
\underline{\textbf{Student Characteristics}}     &      &        &        &     &       \\
Number of Friends & 9,262  & 1.048 & 1.506 & 0  & 10  \\
Intelligence (3--4)       & 9,262 & 0.593  & 0.491  & 0   & 1     \\
Intelligence (5--6)       & 9,262 & 0.359  & 0.480  & 0   & 1     \\
White                         & 9,262 & 0.612  & 0.487  & 0   & 1     \\
Female                        & 9,262 & 0.519  & 0.500  & 0   & 1     \\
(Log) Income                  & 9,262 & 3.578  & 0.804  & 0   & 6.907 \\
Age                           & 9,262 & 15.606 & 1.643  & 12  & 21    \\
GPA                           & 9,262 & 2.825  & 0.797  & 1   & 4     \\
\hline
\end{tabular}
\end{table}

Demographic characteristics include intelligence level (low (1--2), middle (3--4), high (5--6); with the low category as the reference group), race, gender, family income, age, grades, and in-school friendship networks. There are 119 schools, with an average of 78 students per school. 
On average, each student has only one reported friend, suggesting sparse friendship networks. The share of students with a low intelligence level is much lower than the other two intelligence levels.

\subsection{Empirical Results}

We first estimate the peer effects and coefficients of covariates under the homogeneous model as a benchmark. The first column in Table \ref{tab:estimation_results_combined} presents the results.  The point estimate of peer effects is $0.303$ with a 95\% confidence interval $[0.083,0.515]$, where the confidence interval is estimated by the parametric bootstrap. This demonstrates that, with other characteristics controlled, peer influence increases the probability of students engaging in risky behaviors. The influence of students' intelligence levels is insignificant. White students and those with higher family incomes are relatively more inclined to participate in risky activities. Female students, older students, and those with higher GPAs are less likely to engage in risky behaviors.

The first column in Table~\ref{tab:ape_empirical} presents the pooled estimation of the APE of each covariate on the probability of engaging in risky behavior. The results demonstrate that being White and being Female have the largest effects on the CCPs.

\begin{table}[ht]
\centering
\caption{Estimation Results with Homogeneous/Heterogeneous Clusters }
\vspace{0.5em}
\label{tab:estimation_results_combined}
\begin{threeparttable}
\begin{tabular}{rc|cc}
\hline
          & Homogeneous & \multicolumn{2}{c}{Heterogeneous}              \\
Variables & Pooled    & First Cluster & Second Cluster \\
\hline
\addlinespace
Peer Effects & 0.303 & -0.225 & 0.688 \\
 & [0.083, 0.515] & [-0.558, 0.125] & [0.406, 0.943] \\
 \addlinespace
Intelligence (3--4) & 0.098 & -0.344 & 0.419 \\
 & [-0.094, 0.263] & [-0.658, -0.059] & [0.133, 0.669] \\
 \addlinespace
Intelligence (5--6) & -0.063 & -0.621 & 0.347 \\
 & [-0.264, 0.127] & [-0.958, -0.305] & [0.046, 0.632] \\
 \addlinespace
White & 0.323 & 0.085 & 0.495 \\
 & [0.202, 0.438] & [-0.092, 0.252] & [0.345, 0.646] \\
 \addlinespace
Female & -0.761 & -0.754 & -0.752 \\
 & [-0.832, -0.656] & [-0.880, -0.617] & [-0.864, -0.630] \\
 \addlinespace
(Log) Income & 0.092 & 0.172 & 0.008 \\
 & [0.026, 0.147] & [0.075, 0.263] & [-0.073, 0.084] \\
 \addlinespace
Age & -0.015 & -0.040 & -0.024 \\
 & [-0.036, -0.000] & [-0.066, -0.014] & [-0.048, 0.002] \\
 \addlinespace
GPA & -0.157 & -0.102 & -0.206 \\
 & [-0.212, -0.094] & [-0.187, -0.015] & [-0.284, -0.125] \\
 \addlinespace
\hline
\end{tabular}
\begin{tablenotes}
\item \footnotesize{
\textbf{Note}: 95\% confidence intervals in the brackets are estimated by parametric bootstrap.
}
\end{tablenotes}
\end{threeparttable}
\end{table}

\begin{table}[ht]
\centering
\caption{Average Partial Effects of Covariates}
\vspace{0.5em}
\label{tab:ape_empirical}
\begin{threeparttable}
\begin{tabular}{rc|cccc}
\hline
           & Homogeneous & \multicolumn{3}{c}{Heterogeneous} \\
Variables  & Pooled      & Weighted & First Cluster & Second Cluster \\
\hline
\addlinespace
Intelligence (3--4) & 0.009 & 0.019 & -0.025 & 0.044 \\
 & [-0.006, 0.028] & [-0.003, 0.037] & [-0.048, 0.000] & [0.020, 0.075] \\
\addlinespace
Intelligence (5--6) & -0.010 & 0.011 & -0.075 & 0.059 \\
 & [-0.039, 0.022] & [-0.024, 0.043] & [-0.120, -0.037] & [0.017, 0.110] \\
\addlinespace
White & 0.029 & 0.035 & 0.006 & 0.051 \\
 & [0.019, 0.040] & [0.023, 0.045] & [-0.005, 0.020] & [0.037, 0.068] \\
\addlinespace
Female & -0.088 & -0.086 & -0.080 & -0.089 \\
 & [-0.102, -0.081] & [-0.092, -0.071] & [-0.103, -0.073] & [-0.106, -0.079] \\
\addlinespace
(Log) Income & 0.010 & 0.008 & 0.021 & 0.001 \\
 & [0.004, 0.017] & [0.001, 0.014] & [0.012, 0.034] & [-0.007, 0.010] \\
\addlinespace
Age & -0.002 & -0.002 & -0.003 & -0.002 \\
 & [-0.003, 0.001] & [-0.005, -0.001] & [-0.004, 0.002] & [-0.005, 0.001] \\
\addlinespace
GPA & -0.017 & -0.018 & -0.011 & -0.022 \\
 & [-0.025, -0.011] & [-0.025, -0.011] & [-0.021, -0.001] & [-0.031, -0.014] \\
\addlinespace
\hline
\end{tabular}
\begin{tablenotes}
\item \footnotesize{
\textbf{Note}: 95\% confidence intervals in the brackets are estimated by parametric bootstrap.
}
\end{tablenotes}
\end{threeparttable}
\end{table}

We have shown that ignoring heterogeneity results in substantial bias in simulations. We apply the adjusted algorithm to identify the latent structure and estimate the peer effects with heterogeneous clusters. The C-Lasso procedure classifies the 119 schools into two clusters, one containing 43 schools and the other 76 schools. Table \ref{summary_statistics_risky_group} presents the summary statistics for these two clusters, respectively. We calculate the two-sample $t$-statistics to examine whether there are statistically significant differences in the mean values of demographic characteristics. We find that while risky behaviors of students in different clusters are similar, some demographic characteristics vary significantly. The first cluster has a higher proportion of White and male students. The students from the first cluster are generally older, have more friends, come from families with higher income, and yet have a relatively lower average GPA. Notably, there is no significant difference in intelligence levels between the two clusters. Moreover, the average number of students in the first cluster (93) is higher than that in the second cluster (69); \citet*{lewbel2023social} document the influence of the group size on peer effects.
\begin{table}[!ht]
\centering
\caption{Comparison of Summary Statistics between Two Clusters}
\vspace{0.5em}
\label{summary_statistics_risky_group}
\resizebox{1\textwidth}{!}{
\begin{threeparttable}
\begin{tabular}{rrS[table-format=2.3]S[table-format=2.3]rS[table-format=2.3]S[table-format=2.3]S[table-format=2.3]S[table-format=3.3]}
\hline
                              & \multicolumn{3}{c}{Cluster 1 (43 schools) } & \multicolumn{3}{c}{Cluster 2 (76 schools)} &   Comparison           \\
                               & {Obs.}     & {Mean}      & {SD}       & {Obs.}      & {Mean}      & {SD}       & {Two-sample $t$-stat.} \\ \hline
\underline{\textbf{Outcome}}                       &          &           &          &           &           &          &              \\
Risky Behavior                & 4,019    & 0.414     & 0.493    & 5,243     & 0.406     & 0.491    & 0.786        \\
\underline{\textbf{Characteristics}}     &          &           &          &           &           &          &              \\
Number of Friends             & 4,019    & 1.245     & 1.681    & 5,243     & 0.897     & 1.337    & 11.102***    \\
Intelligence (3--4)       & 4,019    & 0.590      & 0.492    & 5,243     & 0.595     & 0.491    & -0.485       \\
Intelligence (5--6)       & 4,019    & 0.363     & 0.481    & 5,243     & 0.356     & 0.479    & 0.697        \\
White                         & 4,019    & 0.657     & 0.475    & 5,243     & 0.577     & 0.494    & 7.865***     \\
Female                        & 4,019    & 0.506     & 0.500    & 5,243     & 0.529     & 0.499    & -2.219**     \\
(Log) Income                  & 4,019    & 3.607     & 0.831    & 5,243     & 3.555     & 0.783    & 3.076***     \\
Age                           & 4,019    & 15.748    & 1.579    & 5,243     & 15.497    & 1.682    & 7.310***     \\
GPA                           & 4,019    & 2.804     & 0.805    & 5,243     & 2.841     & 0.791    & -2.219**     \\\hline
\end{tabular}
\begin{tablenotes}
\item \footnotesize{Significance Level: *** 1\%, ** 5\%, * 10\%.}
\end{tablenotes}
\end{threeparttable}
}
\end{table}

The heterogeneous panel of Table \ref{tab:estimation_results_combined} provides the post-classification estimation results of two clusters separately.
The point estimates for the two clusters exhibit distinct patterns. The most significant finding is that only the point estimate of peer effects in the second cluster, 0.688, is both statistically significant and economically meaningful. 
The peer effects for the first cluster are not statistically different from zero.
In addition to the differences in peer effects, students with higher intelligence levels in the second cluster are more inclined to make risky choices, whereas students in the first cluster are not. Female students or those with higher grades are generally less likely to engage in risky behaviors in both clusters, which aligns with economic intuition.

If we ignore the latent structures, we may conclude that intelligence does not affect students' risky behaviors. However, intelligence significantly affects both the first and the second clusters with opposite signs. Moreover, we can obtain a more precise estimate of the peer effects for groups in the second cluster, which is helpful for policymaking.

The APE results in Table \ref{tab:ape_empirical} reinforce the coefficient-based findings: the intelligence variables exhibit opposite signs in the two clusters. For policy considerations, the APEs are more interpretable than the coefficients. For example, in the second cluster, being White raises the probability of risky behavior by 5.1 percentage points on average, while being Female reduces it by 8.9 percentage points.

\section{Conclusion}\label{sec:conclusion}
In this paper, we study social interactions models with latent structures. We model the heterogeneity across groups via group fixed effects and cluster-specific coefficients. By combining NPL and C-Lasso, our proposed estimation method can identify the latent structures and unknown parameters simultaneously. We illustrate the finite sample performance of the method by Monte Carlo experiments and an application to heterogeneous peer effects in risky behavior using the Add Health dataset. Schools are classified into two clusters, and only one of them has statistically significant peer effects. Researchers should be aware of the latent structures when studying social interactions models. 

\bigskip
\bigskip


\newpage
\appendix
\section*{\centering\LARGE{Appendices}}
The appendices comprise two parts. In Appendix \ref{appendix}, we introduce some technical assumptions and lemmas and present the proofs of the main results. We then prove the lemmas in Appendix \ref{appendix-lemmas}.

The proofs call for some new notation, which we define here.  Denote by $C$ a generic fixed positive constant.
Throughout the proofs, we use the notation "$\ \breve{\cdot}\ $"  above a symbol to represent a generic mean value between the estimator and the true value. We write $a_n=O(b_n)$ if $a_n<Cb_n$  and $a_n=o(b_n)$ if $a_n=c_nb_n$ for some $c_n\to 0$ as $n\to\infty$. $O_p(\cdot)$ and $o_p(\cdot)$  are defined for the corresponding limit relationships in probability.  "$\plim$" represents the limit in probability.
 For any real matrix $A$, we let $\|A\|_1=\max_{j}|\sum_{i}A_{ij}|$, $\|A\|_2=\sqrt{\lambda_{\max}(A' A)}$, $\|A\|_\infty=\max_{i}|\sum_jA_{ij}|$, and $\|A\|_F=\sqrt{\mathrm{Tr}(A' A)}$, where $\lambda_{\max}(\cdot)$  and $\mathrm{Tr}(\cdot)$ are the largest eigenvalue and the trace, respectively. For a function $f(x_1,\dots,x_d)$, let $\partial^s f(x_1,\dots,x_d)/\partial x_{1}^{s_1} \dots \partial x_{d}^{s_d}$, where $\sum_{i=1}^d s_i = s$, denote the $s$th partial derivative of $f$. We denote this as $\partial^s f_0/\partial x_1^{s_1} \dots \partial x_d^{s_d}$ when $(x_1,\dots,x_d)$ is evaluated at the true value $(x_{10},\dots,x_{d0})$.

\section{Proofs of Main Results}\label{appendix}
\renewcommand{\theequation}{A.\arabic{equation}}
\setcounter{equation}{0}
\renewcommand{\thetheorem}{A.\arabic{theorem}}
\setcounter{theorem}{0}
\renewcommand{\thelemma}{A.\arabic{lemma}}
\setcounter{lemma}{0}
\renewcommand{\theassumption}{A.\arabic{assumption}}
\setcounter{assumption}{0}
In this Appendix, we prove the main theoretical results in Section \ref{sec:large_sample_results}. Proposition \ref{prop:first_step_npl}, Theorem \ref{thm:post_classification_estimation}, Theorem \ref{thm:classification_consistency}, Proposition \ref{prop:K-selection}, and Corollary \ref{cor:ape_asymptotics} are developed in Appendices \ref{appendix:first-NPL}--\ref{appendix:ape_corollary}, respectively.

\subsection{Proof of Proposition \ref{prop:first_step_npl}}\label{appendix:first-NPL}
Because groups are independent and have the same size, we omit the (sub)superscript $g$ to simplify the notation if possible. 

Let $P(\zeta)$ be the fixed point of the equation $P = \Gamma_n(\zeta, P)$. We first prove that $P(\zeta)$ exists and is unique when $\|\zeta\|_2 < \infty$ and $|\overline\beta| \in (0,4)$. 
\begin{lemma}\label{lem:contraction_mapping}
Suppose Assumption \ref{npl} holds, $\|\zeta\| < \infty$, and $|\overline\beta| \in (0,4)$. Then the equation $P = \Gamma_n(\zeta, P)$ has a unique solution $P(\zeta)$. Additionally, $P(\zeta)$ satisfies $$\left\|\frac{\partial P(\zeta)}{\partial \zeta}\right\|_2\leq C\sqrt n.$$
\end{lemma}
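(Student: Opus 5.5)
The plan is to read $\Gamma_n(\zeta,P)$ as the map with components $\Gamma_{n,i}(\zeta,P)=\Lambda(\overline{P}_i\overline\beta+X_i'\beta+\mu)$, where $\overline P=WP$ and $W$ is the row-normalized adjacency matrix from Section \ref{sec:APE}. We then treat it as a self-map of the compact convex set $[0,1]^n$, so existence and uniqueness follow from the Banach fixed point theorem in the sup-norm.

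\textbf{Step 1 (contraction).} For any $P,\tilde P\in[0,1]^n$, the mean value theorem gives
\[
|\Gamma_{n,i}(\zeta,P)-\Gamma_{n,i}(\zeta,\tilde P)|\le \sup_x\Lambda'(x)\,|\overline\beta|\,|(W(P-\tilde P))_i|.
\]
Since $\Lambda'=\Lambda(1-\Lambda)\le 1/4$ and $\|W\|_\infty\le 1$ (rows sum to one or zero), this yields $\|\Gamma_n(\zeta,P)-\Gamma_n(\zeta,\tilde P)\|_\infty\le (|\overline\beta|/4)\|P-\tilde P\|_\infty$. Because $|\overline\beta|<4$, the map is a contraction, so a unique fixed point $P(\zeta)$ exists. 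The assumption $\|\zeta\|<\infty$ only guarantees that the index is finite; the contraction bound itself does not even need it.

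\textbf{Step 2 (derivative via the implicit function theorem).} Let $D=\operatorname{diag}(\Lambda'(\cdot)_i)$, evaluated at the fixed point. Differentiating $P=\Gamma_n(\zeta,P)$ gives
\[
\frac{\partial P}{\partial\zeta'}=(I_n-\overline\beta DW)^{-1}D\,[\mathbf 1_n : WP : X],
\]
where $X$ is the $n\times\dim(X)$ covariate matrix. The inverse exists because $\|\overline\beta DW\|_\infty\le |\overline\beta|/4<1$, and the Neumann series then gives $\|(I_n-\overline\beta DW)^{-1}\|_\infty\le (1-|\overline\beta|/4)^{-1}$. The same argument with $\|W\|_1$ bounded by Assumption \ref{npl}\ref{npl-v}, namely $\|W\|_1=\max_j\sum_i F_{ij}/N_i<C$, controls the $\|\cdot\|_1$ norm of $DW$. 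Here it suffices to bound $\|(I-\overline\beta DW)^{-1}\|_2\le\sqrt{\|\cdot\|_1\|\cdot\|_\infty}$. Each Neumann term satisfies $\|(\overline\beta DW)^k\|_1\le (|\overline\beta|C/4)^k$, which may fail to be summable. To avoid this, I would instead bound the result columnwise in the $\infty$-norm.

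\textbf{Step 3 (the $\sqrt n$ rate).} Each column of the bracket $[\mathbf 1_n : WP : X]$ has entries bounded by $\max(1,C)$, by Assumption \ref{npl}\ref{npl-i} and $P\in[0,1]^n$. Hence each column $v$ of $\partial P/\partial\zeta'$ satisfies
\[
\|v\|_\infty\le (1-|\overline\beta|/4)^{-1}\cdot\tfrac14\cdot C,
\]
and therefore $\|v\|_2\le\sqrt n\|v\|_\infty\le C\sqrt n$. Since $\zeta$ has fixed finite dimension, $\|\partial P/\partial\zeta'\|_2\le\|\cdot\|_F\le \sqrt{\dim\zeta}\,\max_{\text{col}}\|v\|_2\le C\sqrt n$. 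This route uses only $\infty$-norms and so sidesteps the $\|\cdot\|_1$ issue raised in Step 2. The constant depends on $|\overline\beta|$ through $(1-|\overline\beta|/4)^{-1}$, which is uniform because $\Theta$ is compact with $|\overline\beta|$ bounded away from 4.

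\textbf{Main obstacle.} The delicate point is the uniformity of the constant $C$, which requires $|\overline\beta|$ to stay strictly below $4$ over the parameter space. I would secure this by assuming $\Theta$ compact inside $\{|\overline\beta|<4\}$ (or restricting to $\Theta_0$). A secondary point is justifying differentiability of $P(\zeta)$, which follows from the implicit function theorem because $I-\overline\beta DW$ is invertible by Step 2.
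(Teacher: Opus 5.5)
Your proof is correct. Steps 1--2 match the paper: a sup-norm contraction from $\Lambda'\le 1/4$ and $\|W\|_\infty\le 1$, the Banach fixed-point theorem, then the implicit function theorem. You genuinely diverge in the $\sqrt n$ bound.

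The paper factors $\partial P/\partial\zeta'=(I_n-J_n)^{-1}\,\partial\Lambda/\partial\zeta'$. It bounds the second factor by its Frobenius norm, $C\sqrt n$. It then asserts $\|(I_n-J_n)^{-1}\|_2\le C$ because Gershgorin keeps the eigenvalues of $I_n-J_n$ at least $1-|\overline\beta|/4$ away from zero.

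That inference does not hold for the nonsymmetric matrix $I_n-J_n$, since eigenvalue location does not control singular values. As you noticed, the interpolation bound $\sqrt{\|\cdot\|_1\|\cdot\|_\infty}$ does not rescue it, because under Assumption \ref{npl}\ref{npl-v} $\|J_n\|_1$ can be as large as $|\overline\beta|C/4>1$. The operator norm can in fact diverge. Take a directed tree in which every node's only friend is its parent and every parent has $m$ children. The root column of $(I_n-J_n)^{-1}$ then has squared $\ell_2$ norm of order $\sum_k m^k(\overline\beta\Lambda')^{2k}$, which grows with $n$ once $m(\overline\beta\Lambda')^2>1$.

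Your column-wise argument needs only $\|(I_n-J_n)^{-1}\|_\infty\le(1-|\overline\beta|/4)^{-1}$ and the fixed, finite dimension of $\zeta$. So it proves the stated bound with no operator $\ell_2$ control, and it never uses Assumption \ref{npl}\ref{npl-v}. The paper's route, had it been valid, would give the stronger uniform bound on $\|(I_n-J_n)^{-1}\|_2$, which the lemma does not need.

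Your remark that $C$ is uniform only if $|\overline\beta|$ stays bounded away from $4$ over $\Theta$ makes explicit a dependence the paper leaves implicit.
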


\medskip

The next lemma presents two deviation bounds.
\begin{lemma}\label{lem:dev_bounds}
    Under Assumption \ref{npl}, we have 
    \begin{equation}
    \label{eq:lem2-bound1}
        \left\|\frac{1}{n}\sum_{i\in[n]}\frac{\partial\psi(w_i, P_0;\zeta_0)}{\partial\zeta}\right\|_\infty \leq C\sqrt{\frac{\log(Gn)}{n}},
    \end{equation}
    and
    \begin{equation}
    \label{eq:lem2-bound2}
        \left|\frac{1}{n}\sum_{i\in[n]}\{\psi(w_i,P;\zeta)-\mathbb{E}[\psi(w_i,P;\zeta)]\}\right| \leq C\sqrt{\frac{\log(Gn)}{n}},
    \end{equation}
   uniformly over $P$ and $\zeta\in\mathcal{A}\times\Theta$ with probability at least $1-2G^{-2}n^{-2}$.
\end{lemma}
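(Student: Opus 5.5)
The plan has two parts. The first bound \eqref{eq:lem2-bound1} is handled by Hoeffding's inequality plus a union bound. The second bound \eqref{eq:lem2-bound2} needs a uniform-in-parameter version, obtained by a Lipschitz/bracketing argument. Both probability statements are taken at the group level. A final union bound over $g\in[G]$ then delivers the uniform-over-groups statement used in Proposition \ref{prop:first_step_npl}.

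For \eqref{eq:lem2-bound1}, I compute the score of the logit likelihood at the truth:
$$
\frac{\partial\psi(w_i,P_0;\zeta_0)}{\partial\zeta}=-(Y_i-\Lambda_{i0})\,(1,\overline P_{i0},X_i')',\qquad \Lambda_{i0}=\Lambda(\overline P_{i0}\overline\beta_0+X_i'\beta_0+\mu_0).
$$
Conditional on the public information $\mathbb I$, the equilibrium condition \eqref{eq:CCP-def} gives $\Lambda_{i0}=P_{i0}=\mathbb E(Y_i|\mathbb I)$. The private shocks are i.i.d., so the $Y_i-P_{i0}$ are conditionally independent and mean zero. Each coordinate of the score is therefore a sum of independent terms bounded by $C$, using Assumption \ref{npl}\ref{npl-i} and $\overline P_{i0}\in[0,1]$. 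Hoeffding's inequality gives a tail of $2\exp(-nt^2/(2C^2))$. Choosing $t=C'\sqrt{\log(Gn)/n}$ with $C'$ large makes this at most $G^{-3}n^{-3}$ per coordinate. A union bound over the finitely many coordinates of $\zeta$ then yields \eqref{eq:lem2-bound1} with the stated probability.

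For \eqref{eq:lem2-bound2}, I fix $(P,\zeta)$ first. The summands $\psi(w_i,P;\zeta)$ are conditionally independent. They are uniformly bounded because the index is bounded on the compact set $\mathcal A\times\Theta$ with $P\in[0,1]^n$, which keeps $\Lambda$ away from $0$ and $1$. Hoeffding then gives the pointwise bound. To make it uniform, I use the fact that $\psi$ is Lipschitz in $\zeta$ and in the index. The dependence on $P$ enters only through the scalar $\overline P_i\in[0,1]$, so I reparametrize via $u_i=\overline P_i$. The centered average then depends on $P$ through the index $\overline P_i\overline\beta+X_i'\beta+\mu$. I expand $\psi$ as $-Y_i\log\Lambda(\cdot)-(1-Y_i)\log(1-\Lambda(\cdot))$, so the centered sum equals
$$
-\frac1n\sum_i (Y_i-P_{i0})\,\log\frac{\Lambda(\cdot)}{1-\Lambda(\cdot)}=-\frac1n\sum_i (Y_i-P_{i0})\,(\overline P_i\overline\beta+X_i'\beta+\mu).
$$
This is linear in $\mu$, $\beta$ and $\overline\beta\,\overline P_i$. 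Thus
$$
\sup_{P,\zeta}\Big|\frac1n\sum_i\{\psi-\mathbb E\psi\}\Big|\le C\Big(\Big|\frac1n\sum_i \varepsilon_i^Y\Big|+\Big\|\frac1n\sum_i \varepsilon_i^YX_i\Big\|_\infty+\sup_{u\in[0,1]^n}\Big|\frac1n\sum_i\varepsilon_i^Yu_i\Big|\Big),\qquad \varepsilon_i^Y=Y_i-P_{i0}.
$$
The first two terms are handled exactly as in the first paragraph.

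The main obstacle is the last term, where the supremum is over all $u\in[0,1]^n$. This term does not concentrate at rate $n^{-1/2}$ if $u$ is arbitrary: taking $u_i=1\{\varepsilon^Y_i>0\}$ makes it order one. So I would restrict the class, as the paper intends, to $P$ that are CCP profiles of the form $P=\Gamma_n(\zeta',\cdot)$ fixed points. These are smooth functions of a finite-dimensional parameter by Lemma \ref{lem:contraction_mapping}, so the class is indexed by $(\zeta,\zeta')$ in a compact set. I cover that set with an $n^{-1}$-net of polynomial cardinality $n^{C}$. Hoeffding with a union bound over the net costs only a $\log n$ factor absorbed into $\log(Gn)$. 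The discretization error is controlled by the Lipschitz property: the map $\zeta'\mapsto \overline P_i(\zeta')$ is uniformly Lipschitz, because its row-averaged derivative $W\,\partial P/\partial\zeta'$ is bounded entrywise by the contraction $|\overline\beta|/4<1$ in Lemma \ref{lem:contraction_mapping}. Finally, I collect the failure probabilities from the net, the coordinates and the two bounds. With constants chosen large enough, their total is at most $2G^{-2}n^{-2}$.
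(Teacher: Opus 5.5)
For \eqref{eq:lem2-bound1} you follow the paper. It applies Hoeffding to the $\mu$-score $n^{-1}\sum_i(Y_i-P_{i0})$ with $t=\sqrt{\log(Gn)/n}$, which gives exactly $2(Gn)^{-2}$. Your union bound over coordinates just makes its ``without loss of generality'' explicit.

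For \eqref{eq:lem2-bound2} you share the paper's first step, the identity $n^{-1}\sum_i\{\psi-\mathbb E\psi\}=-n^{-1}\sum_i(Y_i-P_{i0})(\overline P_i\overline\beta+X_i'\beta+\mu)$, and then the routes diverge. The paper claims uniformity over all $P\in\mathcal P_n$ and all $\zeta$ by citing a Hoeffding-type inequality for empirical processes. You instead restrict $P$ to equilibrium profiles $P(\zeta')$, cover the compact index set with an $n^{-1}$-net, and control the discretization error by Lipschitz continuity.

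Your route is the sound one, and it is a repair rather than merely an alternative. A concentration inequality for a supremum controls its fluctuation around its mean, not the size of that mean, and over arbitrary $P$ the mean is of order one. Your witness $u_i=1\{\varepsilon_i^Y>0\}$ need not be of the form $\overline P=W^gP$, so it does not quite work as stated. The conclusion survives, though. For a directed cycle, where $W^g$ is a permutation matrix, $\sup_{P\in\mathcal P_n}n^{-1}\sum_i\varepsilon_i^Y\overline P_i=(1/2-\underline c)\,n^{-1}\sum_i|\varepsilon_i^Y|+O_p(n^{-1/2})$. This is bounded away from zero because $\mathbb E|\varepsilon_i^Y|=2P_{i0}(1-P_{i0})\ge 2\underline c(1-\underline c)$. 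Hence the lemma as literally written fails whenever $\Theta$ contains some $\overline\beta\neq0$, which Assumption \ref{npl}\ref{npl-iii} guarantees.

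The price of your approach is that you prove a weaker lemma than the one stated. You should say so explicitly rather than describe it as what ``the paper intends,'' and then check its uses. It does cover them. In the proof of Proposition \ref{prop:first_step_npl}\ref{first-step-consistency}, the uniform bound is only evaluated at $(\zeta_g^{(n)},P_g^{(n)})$ and $(\zeta_g,P_g^{(n)})$. There $P_g^{(n)}=\Gamma_g(\zeta_g^{(n)};P_g^{(n)})$, so $P_g^{(n)}=P(\zeta_g^{(n)})$ by the uniqueness in Lemma \ref{lem:contraction_mapping}. Your class indexed by $(\zeta,\zeta')$ is exactly what is needed there. The same holds for any later use at the first-step estimate $\widetilde P_g=P(\widetilde\zeta_g)$.

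One detail to make precise concerns the uniform Lipschitz constant for $\zeta'\mapsto\overline P_i(\zeta')$. It requires $\sup_\Theta|\overline\beta|<4$ strictly, so that $\|(I_n-J_n)^{-1}\|_\infty\le(1-\sup_\Theta|\overline\beta|/4)^{-1}$. This is the $\ell_\infty$ version of the contraction argument in Lemma \ref{lem:contraction_mapping}, not its $\ell_2$ bound of order $\sqrt n$. Alternatively, combine that $\ell_2$ bound with $\|W^g\|_2\le C$, which follows from Assumption \ref{npl}\ref{npl-v}. With this in place, the discretization error is deterministically $O(n^{-1})$ because $|\varepsilon_i^Y|\le1$. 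The $n^{C}$ cardinality of the net is absorbed by enlarging $C'$ in $t=C'\sqrt{\log(Gn)/n}$.
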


\medskip

To proceed, we need the following invertibility assumption.
\begin{assumption}
    \label{assu:invertibility_assumption1}
    For sufficiently large $n$, the following matrices are invertible w.p.a.1.,
    $$
    \frac{1}{n}\sum_{i\in[n]}\left[\frac{\partial^2\psi(w_{ig} ,P_g(\zeta);\zeta)}{\partial\zeta\partial\zeta'}+\frac{\partial^2\psi(w_{ig}, P_g(\zeta);\zeta)}{\partial\zeta\partial P'}\frac{\partial P_g(\zeta)}{\partial\zeta'}\right],\quad g\in[G].
    $$
\end{assumption}

\medskip
Let  $\Gamma_{ng}(\zeta_g;P_g)$ be an $n$-dimensional vector with each element
\[
[\Gamma_{ng}(\zeta_g;P_g)]_i=\Lambda\Big(\overline\beta_g\overline{P}_{ig}+X_{ig}'\beta_g+\mu_g\Big),\quad i\in[n],g\in[G].
\]
Its population version is 
\[
[\Gamma_{g}(\zeta_g;P_g)]_i=\Lambda\Big(\overline\beta_g\overline{P}_{ig}+X_{ig}'\beta_g+\mu_g\Big),\quad i\in[n],g\in[G].
\]
We define the population and sample NPL fixed point operators, respectively:
\begin{equation}
\label{eq:NPL_fixed_point_population}
     \phi_{g}(P_g)=\Gamma_g(\widetilde \zeta_g(P_g);P_g),\quad \text{with}~~\widetilde\zeta_{g}(P_g)= \underset{\zeta_g\in\mathcal{A}\times\Theta}{\arg\min}~\Psi_{g}(\zeta_g;P_g),
\end{equation}
and 
\begin{equation}
\label{eq:NPL_fixed_point_sample}
    \phi_{ng}(P_g)=\Gamma_{ng}(\widetilde\zeta_g(P_g);P_g),\quad \text{with}~~\widetilde\zeta_g(P_g)= \underset{\zeta_g\in\mathcal{A}\times\Theta}{\arg\min}~\Psi_{ng}(\zeta_g;P_g),
\end{equation} 
where $\Psi_{g}(\zeta_g;P_g)=\mathbb{E}[\Psi_{ng}(\zeta_g;P_g)]$.
We define the population NPL fixed points set as 
\[
\Lambda_{g0}=\{(\zeta_g,P_g)\in(\mathcal{A}\times\Theta\times\mathcal{P}_n):\zeta_g=\widetilde\zeta_g(P_g),P_g=\phi_g(P_g)\},
\]
and the NPL fixed points set of sample size $n$ as 
\[
\Lambda_{ng}=\{(\zeta_g,P_g)\in(\mathcal{A},\Theta,\mathcal{P}_n):\zeta_g=\widetilde\zeta_g(P_g),P_g=\phi_{ng}(P_g)\},
\] 
where $\mathcal{P}_n=(\underline{c},1-\underline{c})^n$. These definitions will be used in the proof of the consistency of $\widetilde{\zeta}_g$.

Now we are ready for the proof of Proposition \ref{prop:first_step_npl}.

\begin{proof}[Proof of Proposition \ref{prop:first_step_npl}]
Part \ref{first-step-consistency}.  Define a desirable event for $g\in[G]$, $$D_g=\Big\{\Big|\Psi_{ng}(\zeta_g;P_g)-\Psi_{g}(\zeta_g;P_g)\Big| \leq C\sqrt{\frac{\log(Gn)}{n}}, \ \text{uniformly for $(P_g,\zeta_g)\in\mathcal{A}\times\Theta\times\mathcal{P}_n$}\Big\}$$  and the intersection of these events $D=\cap_{g\in[G]}D_g.$
By Lemma \ref{lem:dev_bounds}, we have $\Pr(D_g)\geq 1-2G^{-2}n^{-2}$. The union bound implies
$$
\Pr(D)=1-\Pr(\cup_{g\in[G]}D_g^c)\geq 1-\sum_{g\in[G]}\Pr(D_g^c)\geq 1-G\times(2G^{-2}n^{-2})=1-2G^{-1}n^{-2}.
$$ 

The following proof is conditioned on the desirable event $D$. By Assumption \ref{npl}\ref{npl-ii} and \ref{npl-iv}, we know that $\zeta_{g0}$ uniquely maximizes the population likelihood. Define the function $$\mathcal{T}(\zeta_g,P_g)=\Psi_g(\zeta_g;P_g)-\min_{z\in\mathcal{A}\times\Theta}\{\Psi_g(z;P_g)\}.$$
Because $\Psi_g(\zeta_g;P_g)$ is continuous and the parameter space is compact, Berge's maximum theorem establishes that $\mathcal{T}(\zeta_g,P_g)$ is a continuous function. By construction, $\mathcal{T}(\zeta_g,P_g)\geq 0$ for any $(\zeta_g,P_g)$. Let $\mathcal{F}$ be the set of vectors $(\zeta_g,P_g)$ that are fixed points of the equilibrium mapping $\Gamma_g$, i.e., $$\mathcal{F}=\{(\zeta_g,P_g)\in\mathcal{A}\times\Theta\times\mathcal{P}_n:P_g=\Gamma_g(\mu_g,\theta_g;P_g)\}.$$ Given that $\mathcal{A}\times\Theta\times\mathcal{P}_n$ is compact and $\Gamma_g$ is continuous, $\mathcal{F}$ is then a compact set. Let $$B_\epsilon(\zeta_{g0},P_g^*)=\{(\zeta_g,P_g)\in\mathcal{A}\times\Theta\times\mathcal{P}_n:\|\zeta_g-\zeta_{g0}\|+\|P_g-P_g^*\|<\epsilon\}$$ be an arbitrarily small open ball that contains $(\zeta_{g0},P_g^*)$. Thus $B_\epsilon^c(\zeta_{g0},P_g^*)\cap\mathcal{F}$ is also compact. Define the constant $$\tau=\min_{(\zeta_g,P_g)\in B_\epsilon^c(\zeta_{g0},P_g^*)\cap\mathcal{F}}\mathcal{T}(\zeta_g,P_g).$$ 
When $(\zeta_g,P)\in B_\epsilon^c(\zeta_{g0},P_g^*)$, Assumption \ref{npl}\ref{npl-ii} implies that $\mathcal{T}(\zeta_g,P_g)>0$ for all $\zeta_g\neq\zeta_{g0}$. Therefore $\tau>0$. 

Define the event 
$$
\mathcal{E}_g=\left\{|\Psi_{ng}(\zeta_g;P_g)-\Psi_g(\zeta_g;P_g)|<\frac{\tau}{2}\text{ for all $(\zeta_g,P_g)\in\mathcal{A}\times\Theta\times\mathcal{P}_n$}\right\}. 
$$
Let $(\zeta_g^{(n)},P_g^{(n)})$ be an element of $\Lambda_{ng}$. On the event $\mathcal{E}_g$, we have $$\Psi_g(\zeta_g^{(n)};P_g^{(n)})<\Psi_{ng}(\zeta_g^{(n)};P_g^{(n)})+\frac{\tau}{2},$$ and $$\Psi_{ng}(\zeta_g;P_g^{(n)})<\Psi_g(\zeta_g;P_g^{(n)})+\frac{\tau}{2}.$$ 
Furthermore, from the definition of NPL fixed point, $\zeta_g^{(n)}$ minimizes $\Psi_{ng}(\cdot, P_g^{(n)})$, so $$\Psi_{ng}(\zeta_g^{(n)};P_g^{(n)})\leq \Psi_{ng}(\zeta_g;P_g^{(n)}).$$ 
Therefore, combining these three inequalities yields for any $\zeta_g\in\mathcal{A}\times\Theta$,
$$\Psi_g(\zeta_g^{(n)};P_g^{(n)})<\Psi_g(\zeta_g;P_g^{(n)})+\tau.$$ 
Minimizing the right-hand side over $\zeta_g$ gives
$$\Psi_g(\zeta_g^{(n)};P_g^{(n)})<\min_{z\in\mathcal{A}\times\Theta}\Psi_g(z;P_g^{(n)})+\tau,$$
which is equivalent to $\tau>\mathcal{T}(\zeta_g^{(n)},P_g^{(n)})$ by definition of $\mathcal{T}$.
Thus we make the following deductions: 
$$
\begin{aligned}
    \mathcal{E}_g\Rightarrow\ &\{\tau>\mathcal{T}(\zeta_g^{(n)},P_g^{(n)})\}\\
    \Rightarrow\  &\{\min_{(\zeta_g,P_g)\in B_\epsilon^c(\zeta_{g0},P_g^*)\cap\mathcal{F}}\mathcal{T}(\zeta_g,P_g)>\mathcal{T}(\zeta_g^{(n)},P_g^{(n)})\}\\
    \Rightarrow\ &\{(\zeta_g^{(n)},P_g^{(n)})\in B_\epsilon(\zeta_{g0},P_g^*)\},
\end{aligned}
$$
where the last deduction uses the fact that $(\zeta_g^{(n)},P_g^{(n)})\in\mathcal{F}$. Therefore, $\Pr(\mathcal{E}_g)\leq \Pr((\zeta_g^{(n)},P_g^{(n)})\in B_\epsilon(\zeta_{g0},P_{g0})).$ Notice that the event $D_g$ always implies $\mathcal{E}_g$ for large enough $n$ (i.e., let $C\sqrt{\log(Gn)/n}\leq\tau$). Hence, we have $\Pr(\mathcal{E}_g)\geq \Pr(D_g)$ as $n\to\infty$.  Moreover, as $\epsilon$ is an arbitrarily small constant, we have
$$\max_{g\in[G]}\sup_{(\zeta_g^{(n)},P_g^{(n)})\in\Lambda_{ng}}\|\zeta_g^{(n)}-\zeta_{g0}\|=o_p(1).$$ From the definition of $\Lambda_{ng}$, we see that $\max_{g\in[G]}\|\widetilde\zeta_g-\zeta_{g0}\|=o_p(1)$.  

\medskip
\noindent
Part \ref{first-step-probability-vector}. Define another desirable  event $$E_g=\Big\{\Big\|\frac{1}{n}\sum_{i\in[n]}\frac{\partial\psi(w_{ig}, P_{g0};\zeta_{g0})}{\partial\zeta}\Big\|_\infty\leq C\sqrt{\frac{\log(Gn)}{n}}\Big\},\quad g\in[G],$$ and the intersection $E=\cap_{g\in[G]}E_g.$ 
Similarly, we can apply Lemma \ref{lem:dev_bounds} and the union bound to show that $\Pr(E)\geq 1-2G^{-1}n^{-2}$. The following proof is conditioned on the event $E$.

The first-order condition (FOC) of the NPL estimator is
\begin{equation*}
    \frac{1}{n}\sum_{i\in[n]}\frac{\partial\psi(w_i, P(\widetilde\zeta);\widetilde{\zeta})}{\partial \zeta}=0,
\end{equation*}
where $P(\widetilde{\zeta})$ is the unique solution to $P=\Gamma(\widetilde{\zeta},P)$. By a Taylor expansion around the true value $\zeta_0$, we have 
\begin{equation*}
    \begin{aligned}
        \frac{1}{n}\sum_{i\in[n]}&\frac{\partial\psi(w_i,P(\widetilde{\zeta});\widetilde{\zeta})}{\partial \zeta}
        =\frac{1}{n}\sum_{i\in[n]}\frac{\partial\psi(w_{i} ,P(\zeta_0);\zeta_0)}{\partial\zeta}\\
        &+\frac{1}{n}\sum_{i\in[n]}\left[\frac{\partial^2\psi(w_{i} ,P(\breve\zeta);\breve\zeta)}{\partial\zeta\partial\zeta'}+\frac{\partial^2\psi(w_i, P(\breve\zeta);\breve\zeta)}{\partial\zeta\partial P'}\frac{\partial P(\breve\zeta)}{\partial\zeta'}\right](\widetilde\zeta-\zeta_0),
    \end{aligned}
\end{equation*}
where $\breve{\zeta}$ is a general mean value.\footnote{Since the mean value may differ across rows when writing the Taylor expansion for a vector-valued function, we slightly abuse the notation for conciseness.}
Hence 
\begin{align*}
    \widetilde{\zeta}-\zeta_0 =-&\left\{\frac{1}{n}\sum_{i\in[n]}\left[\frac{\partial^2\psi(w_{i} ,P(\breve\zeta);\breve\zeta)}{\partial\zeta\partial\zeta'}+\frac{\partial^2\psi(w_i, P(\breve\zeta);\breve\zeta)}{\partial\zeta\partial P'}\frac{\partial P(\breve\zeta)}{\partial\zeta'}\right]\right\}^{-1}\\&\times\frac{1}{n}\sum_{i\in[n]}\frac{\partial\psi(w_i,P_0;\zeta_0)}{\partial\zeta},
\end{align*}
by Assumption \ref{assu:invertibility_assumption1}
and the fact $P(\zeta_0)=P_0$. Therefore, under the desirable event $E$, $\|\widetilde{\zeta}-\zeta_0 \|_2\leq C\sqrt{\log(Gn)/n}$. Moreover, by Lemma \ref{lem:contraction_mapping},  we have $\left\|\frac{\partial P(\zeta)}{\partial\zeta}\right\|_2\leq C\sqrt{n}$. Hence,  $$\left\|\widetilde{P}-P_0\right\|_2\leq \left\|\frac{\partial P(\zeta)}{\partial\zeta}\right\|_2\|\widetilde\zeta-\zeta_0\|_2\leq C\sqrt{n}\sqrt{\frac{\log(Gn)}{n}}=C\sqrt{\log(Gn)},$$ and $\max_{g\in[G]}\left\|\widetilde{P}_g-P_{g0}\right\|_2 \leq C\sqrt{\log(Gn)}$ follows.
\end{proof}

\subsection{Proof of Theorem \ref{thm:post_classification_estimation}}\label{appendix:post-estimation}
Before the proof, we state some additional technical conditions to ensure that the matrices or quantities involved in the proof are invertible or nonzero.
\begin{assumption}
\label{assu:invertibility2}
    For sufficiently large $n$ and $G$, 
    \begin{enumerate}[label=(\roman*),font=\upshape]
        \item\label{inver-cond1} The expression $$\frac{1}{n}\sum_{i\in[n]}\Big[\frac{\partial^2\psi(w_{ig}, P_g(\widehat\mu_g(\theta),\theta);\widehat\mu_g(\theta),\theta)}{\partial\mu_g^2}+\frac{\partial^2\psi(w_{ig}, P_g(\widehat\mu_g(\theta),\theta);\widehat\mu_g(\theta),\theta)}{\partial\mu_g\partial P_g'}\frac{\partial P_g(\widehat\mu_g(\theta),\theta)}{\partial\mu_g}\Big]$$ is nonzero for $\theta\in\Theta$ and $g\in[G]$,
        \item\label{inver-cond2} $\Sigma_{\mu_g\mu_g0}$ defined in \eqref{Sigma_mugmug} is nonzero for $g\in[G]$, and
        \item\label{inver-cond3} $H_0$ defined in \eqref{Hk0} is invertible.
    \end{enumerate}
\end{assumption}
Let $\widehat{\mu}_g(\theta)= \widehat{\mu}_g(\theta,P_g(\theta))$, Assumption \ref{assu:invertibility2}\ref{inver-cond1} guarantees that the partial derivative $\partial\widehat\mu_g(\theta)/\partial\theta$ is well defined for $\theta\in\Theta$. $\Sigma_{\mu_g\mu_g0}$ in part \ref{inver-cond2} is the Hessian matrix for $\mu_g$ and $H_0$ in part \ref{inver-cond3} is the concentrated matrix for $\theta$, both at the true value. Therefore, these two assumptions are used in establishing asymptotic normality of the common parameters, making sure the asymptotic variance is bounded.
\begin{proof}[Proof of Theorem \ref{thm:post_classification_estimation}]
The optimality conditions for $\widehat\theta$ are characterized by the following equations:
\begin{align}
    \text{(FOC for } \mu_g\text{):} \quad 
    \frac{1}{n} \sum_{i \in [n]} \frac{\partial \psi(w_{ig}, P_g(\widehat\zeta_g); \widehat\zeta_g)}{\partial \mu_g} 
    &= 0, \label{FOC_mu} \\
    \text{(FOC for } \theta\text{):} \quad 
    \frac{1}{nG} \sum_{g \in [G]} \sum_{i \in [n]} \frac{\partial \psi(w_{ig}, P_g(\widehat\zeta_g); \widehat\zeta_g)}{\partial \theta} 
    &= 0, \label{FOC_delta} \\
    \text{(Definition of } P_g(\widehat\zeta_g)\text{):} \quad 
    P_g(\widehat\zeta_g) - \Gamma_g(\widehat\zeta_g; P_g(\widehat\zeta_g)) 
    &= 0. \label{def_P}
\end{align}
By \eqref{FOC_mu}--\eqref{def_P}, we first characterize $\partial\widehat{\mu}_g(\theta)/\partial\theta$. Taking a derivative of \eqref{FOC_mu} with respect to $\theta$, we have 
\[
\begin{aligned}
        0=\frac{1}{n}\sum_{i\in[n]}\Bigg[&\frac{\partial^2\psi(w_{ig}, P_g(\widehat\mu_g(\theta),\theta);\widehat\mu_g(\theta),\theta)}{\partial\mu_g^2}\\&+\frac{\partial^2\psi(w_{ig}, P_g(\widehat\mu_g(\theta),\theta);\widehat\mu_g(\theta),\theta)}{\partial\mu_g\partial P_g'}
        \times\frac{\partial P_g(\widehat\mu_g(\theta),\theta)}{\partial\mu_g'} \Bigg]\frac{\partial\widehat{\mu}_g(\theta)}{\partial\theta'}\\
        +\frac{1}{n}\sum_{i\in[n]}&\Bigg[\frac{\partial^2\psi(w_{ig}, P_g(\widehat\mu_g(\theta),\theta);\widehat\mu_g(\theta),\theta)}{\partial\mu_g\partial\theta'}\\&+\frac{\partial^2\psi(w_{ig}, P_g(\widehat\mu_g(\theta),\theta);\widehat\mu_g(\theta),\theta)}{\partial\mu_g\partial P_g'}
        \times\frac{\partial P_g(\widehat\mu_g(\theta),\theta)}{\partial\theta'} \Bigg].
    \end{aligned}
\]
Therefore, by Assumption \ref{assu:invertibility2}, we have
\begin{equation}
\label{partial_derivatives}
\begin{aligned}
   \frac{\partial\widehat{\mu}_g(\theta)}{\partial\theta'} =&-\Bigg\{\frac{1}{n}\sum_{i\in[n]}\Bigg[\frac{\partial^2\psi(w_{ig}, P_g(\widehat\mu_g(\theta),\theta);\widehat\mu_g(\theta),\theta)}{\partial\mu_g^2}\\&\quad\quad+\frac{\partial^2\psi(w_{ig}, P_g(\widehat\mu_g(\theta),\theta);\widehat\mu_g(\theta),\theta)}{\partial\mu_g\partial P_g'}\frac{\partial P_g(\widehat\mu_g(\theta),\theta)}{\partial\mu_g'} \Bigg]\Bigg\}^{-1}\\
        &\quad\quad\times\frac{1}{n}\sum_{i\in[n]}\Bigg[\frac{\partial^2\psi(w_{ig}, P_g(\widehat\mu_g(\theta),\theta);\widehat\mu_g(\theta),\theta)}{\partial\mu_g\partial\theta'}\\&\quad\quad\quad\quad+\frac{\partial^2\psi(w_{ig}, P_g(\widehat\mu_g(\theta),\theta);\widehat\mu_g(\theta),\theta)}{\partial\mu_g\partial P_g'}\frac{\partial P_g(\widehat\mu_g(\theta),\theta)}{\partial\theta'} \Bigg].
    \end{aligned}
\end{equation}

Define 
\begin{align*}
    H_{ig}&(\widehat{\mu}_g(\theta), \theta)\\
    &= \frac{\partial^2 \psi(w_{ig}, P_g(\widehat{\mu}_g(\theta), \theta); \widehat{\mu}_g(\theta), \theta)}
             {\partial \theta \partial \theta'} 
     + \frac{\partial^2 \psi(w_{ig}, P_g(\widehat{\mu}_g(\theta), \theta); \widehat{\mu}_g(\theta), \theta)}
                  {\partial \theta \partial \mu_g'}
                  \frac{\partial \widehat{\mu}_g(\theta)}{\partial \theta'} \\
    &\quad + \frac{\partial^2 \psi(w_{ig}, P_g(\widehat{\mu}_g(\theta), \theta); \widehat{\mu}_g(\theta), \theta)}
                  {\partial P_g'}\Bigg[
                      \frac{\partial P_g(\widehat{\mu}_g(\theta), \theta)}{\partial \theta'}
                      + \frac{\partial P_g(\widehat{\mu}_g(\theta), \theta)}{\partial \mu_g}
                          \frac{\partial \widehat{\mu}_g(\theta)}{\partial \theta'}
                  \Bigg]
\end{align*}
and
\begin{equation}
\label{H_k}
H(\widehat\mu(\theta),\theta)=\frac{1}{nG}\sum_{g\in [G]}\sum_{i\in[n]} H_{ig}(\widehat\mu_g(\theta),\theta).
\end{equation}
We define the probability limit of \eqref{H_k} as
\begin{equation}
    \label{Hk0}H_{0}=\underset{n\to\infty}{\plim}~H(\widehat\mu(\theta_{0}),\theta_{0}).
\end{equation}
Then a Taylor expansion of FOC for $\theta$, \eqref{FOC_delta}, with respect to $\theta$ around $\theta_{0}$ yields 
\begin{equation}
\label{eq:taylor-psi}
    \begin{aligned}
        -\frac{1}{nG}\sum_{g\in [G]}&\sum_{i\in[n]}\frac{\partial \psi(w_{ig}, P_g(\widehat\mu_{g}(\theta_{0});\widehat\mu_g(\theta_{0}),\theta_{0})}{\partial\theta}\\
        =\ &H(\widehat\mu_g(\breve\theta),\breve\theta)(\widehat\theta-\theta_{0})=H_0(\widehat\theta-\theta_0)+o_p((nG)^{-1/2}).
    \end{aligned}
\end{equation}

From \eqref{eq:taylor-psi}, we obtain an asymptotic representation for $\sqrt{nG}(\widehat\theta-\theta_{0})$:
\begin{align*}
 &\sqrt{nG}\big(\widehat\theta-\theta_{0}\big)\\
 =&-H_{0}^{-1}\Bigg\{\underbrace{\frac{1}{\sqrt{nG}}\sum_{g\in [G]}\sum _{i\in[n]}\frac{\partial \psi(w_{ig}, P_g(\widehat\mu_g(\theta_{0}),\theta_{0});\widehat\mu_g(\theta_{0}),\theta_{0})}{\partial\theta}}_{(I)}\Bigg\}+o_p(1).
\end{align*}
We cannot directly apply a central limit theorem to the above term (I) because of the presence of $\widehat\mu_g(\theta_{0})$. 
Instead, we expand it with respect to $\mu_g$ around $\mu_{g0}$ by a third-order Taylor expansion. Throughout the expansion, we write $\psi_{ig0}:=\psi(w_{ig},P_{g0};\zeta_{g0})$ and let the subscript ``$0$'' denote evaluation at the true values.
\[
        \begin{aligned}
            &\underbrace{\frac{1}{\sqrt{nG}}\sum_{g\in [G]}\sum_{i\in[n]}\frac{\partial \psi_{ig0}}{\partial\theta}+\frac{1}{\sqrt{nG}}\sum_{g\in [G]}\sum_{i\in[n]}\left[\frac{\partial^2\psi_{ig0}}{\partial\theta\partial\mu_g}+\frac{\partial^2\psi_{ig0}}{\partial\theta\partial P_g'}\frac{\partial P_{g0}}{\partial \mu_g}\right]\big(\widehat\mu_g(\theta_{0})-\mu_{g0}\big)}_{R_1}\\
           & +\underbrace{\frac12\frac{1}{\sqrt{nG}}\sum_{g\in [G]}\sum_{i\in[n]}\left[\frac{\partial^3 \psi_{ig0}}{\partial\theta\partial\mu_g^2}+\frac{\partial^3\psi_{ig0}}{\partial\theta\partial P_g'\partial\mu_g}\frac{\partial P_{g0}}{\partial \mu_g}+\frac{\partial^2\psi_{ig0}}{\partial\theta\partial P_g'}\frac{\partial^2P_{g0}}{\partial\mu_g^2}\right]\big(\widehat\mu_g(\theta_{0})-\mu_{g0}\big)^2}_{R_2}\\
           &+\underbrace{\frac16\frac{1}{\sqrt{nG}}\sum_{g\in [G]}\sum_{i\in[n]}\mathcal{T}_{ig0}(\widehat\mu_g(\theta_{0})-\mu_{g0})^3}_{R_3}+\text{higher order terms},
        \end{aligned}
\]
where $\mathcal{T}_{ig0}$ is the third-order derivative. We analyze $R_1$, $R_2$, and $R_3$ in turn.

For $R_1$, define
\begin{align}
\label{Sigma_mugmug}
    \Sigma_{\mu_g\mu_g0}&=\underset{n\to\infty}{\plim}~\frac{1}{n}\sum_{i\in[n]}\left(\frac{\partial^2\psi_{ig0}}{\partial\mu_g^2}+\frac{\partial^2\psi_{ig0}}{\partial \mu_g\partial P_g'}\frac{\partial P_{g0}}{\partial\mu_g}\right), \\
\label{Sigma_deltakmug}
\Sigma_{\theta\mu_g0}&=\underset{n\to\infty}{\plim}~\frac{1}{n}\sum_{i\in[n]}\left(\frac{\partial^2\psi_{ig0}}{\partial\theta\partial\mu_g}+\frac{\partial^2\psi_{ig0}}{\partial\theta\partial P_g'}\frac{\partial P_{g0}}{\partial \mu_g}\right), \\
\label{Omega1}
    \Omega_{\mu_g\mu_g0}&=\mathbb{E}\left[\frac{1}{n}\sum_{i\in[n]}\left(\frac{\partial\psi_{ig0}}{\partial\mu_g}\right)^2\right], \\
\label{Omega2}
    \Omega_{\mu_g\theta0}&= \mathbb{E}\left[\frac{1}{n}\sum_{i\in[n]}\frac{\partial\psi_{ig0}}{\partial\mu_g} \times \frac{\partial\psi_{ig0}}{\partial\theta}\right],\text{ and } \\
\label{Omega3}
\Omega_{\theta\theta0}&= \mathbb{E}\left[\frac{1}{n}\sum_{i\in[n]}\frac{\partial\psi_{ig0}}{\partial\theta}\frac{\partial\psi_{ig0}}{\partial\theta'}\right].
\end{align}
An argument parallel to the derivations in Lemma \ref{lem:dev_bounds} yields
\begin{equation}
\label{mu-expansion}
    \widehat{\mu}_g(\theta_{0})-\mu_{g0}=-\Sigma_{\mu_g\mu_g0}^{-1}\times\frac{1}{n}\sum_{i\in[n]}\frac{\partial \psi_{ig0}}{\partial \mu_g}+O_p\left(\frac{\log(Gn)}{n}\right),
\end{equation}
where we use Assumption \ref{assu:invertibility2}\ref{inver-cond2}. Then we  substitute it into $R_1$,  
\[
R_1=\frac{1}{\sqrt{nG}}\sum_{g\in [G]}\sum_{i\in[n]}\left(\frac{\partial \psi_{ig0}}{\partial\theta}-\Sigma_{\theta\mu_g0}\Sigma_{\mu_g\mu_g0}^{-1}\frac{\partial \psi_{ig0}}{\partial \mu_g}\right)+o_p(1).
\]
By the Lindeberg-Feller central limit theorem \citet[Theorem 5.6]{white2014asymptotic}, we have
\[
R_1\stackrel{d}{\to}N(0,\Omega_{0}),
\]
where the asymptotic variance is  
\begin{equation}
    \label{omegak0}
    \Omega_{0}
    = \lim_{n\to\infty}
      \left\{
        \begin{aligned}
            &\Sigma_{\theta\theta0} - \frac{1}{G} \sum_{g \in [G]}
                \Sigma_{\mu_g\mu_g0}^{-1} \\
                \times&\Big(
                    \Sigma_{\theta\mu_g0} \Omega_{\mu_g\theta0}
                    + \Omega_{\mu_g\theta0}' \Sigma_{\theta\mu_g0}'
                    - \Sigma_{\mu_g\mu_g0}^{-1}
                        \Sigma_{\theta\mu_g0}
                        \Omega_{\theta\theta0}
                        \Sigma_{\theta\mu_g0}'
                \Big)
        \end{aligned}
      \right\}.
\end{equation}

Turning to $R_2$, notice that by \eqref{mu-expansion}, we have $$\frac{1}{\sqrt{n}}\sum_{i\in[n]}\frac{\partial\psi_{ig0}}{\partial\mu_g}=O_p(1),$$  and then
\begin{align*}
 [\sqrt{n}(\widehat{\mu}_g(\theta_{0})-\mu_{g0})]^2=\ &\Sigma_{\mu_g\mu_g0}^{-2}\times\frac{1}{n}\left(\sum_{i\in[n]}\frac{\partial\psi_{ig0}}{\partial\mu_g}\right)^2\\&-O_p\left(\frac{\log(Gn)}{\sqrt{n}}\right)\times\frac{2}{\sqrt{n}}\sum_{i\in[n]}\frac{\partial\psi_{ig0}}{\partial\mu_g}+O_p\left(\frac{(\log(Gn))^2}{n}\right)\\
 =\ &\Sigma_{\mu_g\mu_g0}^{-2}\times\frac{1}{n}\left(\sum_{i\in[n]}\frac{\partial\psi_{ig0}}{\partial\mu_g}\right)^2+o_p(1)\\
 =\ &\Sigma_{\mu_g\mu_g0}^{-2}\Omega_{\mu_g\mu_g0}+o_p(1),
\end{align*}
where the last equality holds by a weak law of large numbers (WLLN) and the fact that 
\begin{align*}
    \mathbb{E}\left[\frac{1}{n}\left(\sum_{i \in [n]} \frac{\partial \psi_{ig0}}{\partial \mu_g}\right)^2\right]
    &= \mathbb{E}\left[\frac{1}{n} \left(\sum_{i \in [n]} \left(\frac{\partial \psi_{ig0}}{\partial \mu_g}\right)^2 + \sum_{i 
\neq j} \frac{\partial \psi_{ig0}}{\partial \mu_g} \cdot \frac{\partial \psi_{jg0}}{\partial \mu_g}\right)\right] \\
    &= \mathbb{E}\left[\frac{1}{n} \sum_{i \in [n]} \left(\frac{\partial \psi_{ig0}}{\partial \mu_g}\right)^2\right] \\
    &= \Omega_{\mu_g\mu_g0}.
\end{align*}
Define 
\[
  \Sigma_{\theta\mu_g\mu_g0} =\underset{n\to\infty}{\plim}~\frac{1}{n}\sum_{i\in[n]}\left(\frac{\partial^3 \psi_{ig0}}{\partial\theta\partial\mu_g^2}+\frac{\partial^3\psi_{ig0}}{\partial\theta\partial P_g'\partial\mu_g}\frac{\partial P_{g0}}{\partial \mu_g}+\frac{\partial^3\psi_{ig0}}{\partial\theta\partial P_g'}\frac{\partial^2P_{g0}}{\partial\mu_g^2}\right), 
\]
and substitute $\widehat\mu_g(\theta_{0})-\mu_{g0}$ into $R_2$,
we have 
\[
    \begin{aligned}
        R_2&=\frac12\frac{1}{\sqrt{nG}}\sum_{g\in [G]}\Sigma_{\theta\mu_g\mu_g0}\Sigma_{\mu_g\mu_g0}^{-2}\Omega_{\mu_g\mu_g0}+o_p(1)\\
        &=\frac12\sqrt{\frac{G}{n}}\frac{1}{G}\sum_{g\in [G]}\Sigma_{\theta\mu_g\mu_g0}\Sigma_{\mu_g\mu_g0}^{-2}\Omega_{\mu_g\mu_g0}+o_p(1)\\
        &\stackrel{p}\to B_{0},
    \end{aligned}
\]
where $B_{0}$  is defined as
\begin{equation}
\label{Bk0}
    B_{0}=\underset{n\to\infty}{\plim}~\frac12\sqrt{\frac{G}{n}}\frac{1}{G}\sum_{g\in [G]}\Sigma_{\theta\mu_g\mu_g0}\Sigma_{\mu_g\mu_g0}^{-2}\Omega_{\mu_g\mu_g0}
\end{equation}
in view of $G/n\to c\in(0,\infty)$.

For $R_3$, recall that $[\sqrt{n}(\widehat\mu_g(\theta_{0})-\mu_{g0})]^2=O_p(1)$. Then define
$$\Sigma_{\theta\mu_g\mu_g\mu_g0}=\underset{n\to\infty}{\plim}~\frac{1}{n}\sum_{i\in[n]}\mathcal{T}_{ig0},$$ which is finite as the third derivative of the logistic CDF $\Lambda(x)$ is bounded when $|x|<\infty$.  We have
\[
    \begin{aligned}
        R_3=\ &\frac16\frac{1}{\sqrt{nG}}\sum_{g\in [G]}\sum_{i\in[n]}\mathcal{T}_{ig0}(\widehat\mu_g(\theta_{0})-\mu_{g0})^3\\
        =\ &\frac16\frac{1}{\sqrt{nG}}\sum_{g\in [G]}\left(\frac{1}{n}\sum_{i\in[n]}\mathcal{T}_{ig0}\right)[\sqrt{n}(\widehat\mu_g(\theta_{0})-\mu_{g0})]^2(\widehat\mu_g(\theta_{0})-\mu_{g0})\\
        =\ &\frac16\frac{1}{\sqrt{nG}}\sum_{g\in [G]}\Sigma_{\theta\mu_g\mu_g\mu_g0}\times o_p(1)\times O_p(1)+o_p(1)\\
        =\ &\frac16\left(\sqrt{\frac{G}{n}}\frac{1}{G}\sum_{g\in [G]}\Sigma_{\theta\mu_g\mu_g\mu_g0}\right)\times o_p(1)+o_p(1)\\
        =\ &o_p(1),
    \end{aligned}
\]
as $G/n\to c\in(0,\infty)$. The terms higher order than $R_3$ are $o_p(1)$ by similar arguments.  We have proved part \ref{thm:post_classification_estimation-a} of the theorem.

To validate the parametric bootstrap inference procedure, it suffices to verify Assumptions 1–6 in \citet{higgins2024bootstrap}. First, since the logistic link function is infinitely differentiable, the true parameters lie in the interior of a compact parameter space, and the $\epsilon_{ig}$'s are i.i.d., our setting readily satisfies Assumptions 1–4 of \citet{higgins2024bootstrap}. Second, since we assume $G/n \to c \in (0,\infty)$, their Assumption 5 also holds in our setting. Finally, Assumptions \ref{assu:invertibility_assumption1} and \ref{assu:invertibility2} are equivalent to their Assumption 6.

To see why these invertibility conditions hold for our model, let $\eta_{ig} = \overline{P}_{ig}\overline{\beta} + X_{ig}'\beta + \mu_g$ be the linear index and $\Lambda'_{ig} = \Lambda(\eta_{ig})(1-\Lambda(\eta_{ig}))$.  For the logistic log-likelihood, $$\partial\psi_{ig}/\partial\theta = -(Y_{ig} - \Lambda(\eta_{ig}))\,\partial\eta_{ig}/\partial\theta,$$ and the Hessian is $$\Lambda'_{ig}\,(\partial\eta_{ig}/\partial\theta)(\partial\eta_{ig}/\partial\theta')'$$ plus derivatives with respect to the CCPs through the fixed point mapping.  Since $\Lambda'_{ig} \in (0,1/4]$ and $\|X_{ig}\|$ is bounded (Assumption~\ref{npl}\ref{npl-i}), the Hessian is uniformly bounded above and, under the identification condition in Assumption~\ref{npl}\ref{npl-iv}, bounded away from zero.  This allows us to complete the proof of part \ref{thm:post_classification_estimation-b} by adapting the arguments of \citet{higgins2024bootstrap} to our setting.
\end{proof}

\subsection{Proof of Theorem \ref{thm:classification_consistency}}\label{appendix:classification}
Following the same argument as in the proof of  Theorem 2.2 in \citet{su2016identifying}, the results in this theorem hold if  $P_0$ is known. Now we replace it  with $\widetilde{P}$, where $\|\widetilde{P}_g-P_{g0}\|_2=O_p(\sqrt{\log(Gn)})$ uniformly for $g\in[G]$. This substitution does not affect the result. For each $i$ and each $g$, we have
\begin{equation*}
    \sum_{i\in[n]}\frac{\partial{\psi_{ig}(w_{ig},P_g;\zeta_g)}}{\partial P_g}=\sum_{i\in[n]}\left(\frac{\overline\beta_gF^g_{i1}(Y_{ig}-P_{ig})}{N_i^g},\dots,\frac{\overline\beta_gF^g_{in}(Y_{ig}-P_{ig})}{N_i^g}\right)'.
\end{equation*}
Therefore, 
\begin{equation}
\label{eq:thm2-fact1}
    \left\|\sum_{i\in[n]}\frac{\partial{\psi_{ig}(w_{ig},P_g;\zeta_g)}}{\partial P_g}\right\|_\infty\leq2|\overline\beta_g|\max_{j\in[n]}\sum_{i\in[n]}\frac{F^g_{ij}}{N_i^g}\leq C,
\end{equation}
where the first inequality uses the fact that $|Y_{ig}-P_{ig}|\leq2$ and the second inequality holds by Assumption \ref{npl}\ref{npl-v}. We have
\begin{align*}
|Q_{nG}&(\theta;\widetilde{P})-Q_{nG}(\theta;P_0)|\\
&\leq\frac{1}{nG}\sum_{g\in[G]}\Big\|\sum_{i\in[n]}\frac{\partial{\psi_{ig}(w_{ig},\breve{P}_g;\theta_g,\widehat{\mu}_g(\theta_g,\breve{P}_g))}}{\partial P_g}\Big\|_2\|\widetilde{P}_g-P_{g0}\|_2,\\
&=\frac{1}{nG}\times G\times O_p(\sqrt{n}) \times O_p\left(\sqrt{\log(Gn)}\right)\\
&=O_p\Big(\sqrt{\frac{\log(Gn)}{n}}\Big),
\end{align*}
where the first inequality uses a Taylor expansion and the Cauchy--Schwarz (CS) inequality and the first equality holds by \eqref{eq:thm2-fact1}.
Then, the classification consistency is preserved under this perturbation, as we have verified by inspecting the proofs in \citet{su2016identifying}.

\subsection{Proof of Proposition \ref{prop:K-selection}}\label{appendix:k-selection}
The proof follows the same idea as the proof of Theorem 2.6 in \citet{su2016identifying}; we omit the details.

\subsection{Proof of Corollary \ref{cor:ape_asymptotics}}\label{appendix:ape_corollary}
\begin{proof}[Proof of Corollary \ref{cor:ape_asymptotics}]
The APE functional $\delta(\theta_k, P_k)$ in~\eqref{eq:ape} is continuously differentiable in $\theta_k$ with gradient $\nabla\delta^0_{k,p}$ given by~\eqref{eq:ape_gradient}, and continuous in $P_k$.
Theorem~\ref{thm:classification_consistency} implies $\widetilde{C}_k = C_{k0}$ with probability approaching one, so we may treat the cluster partition as known for the asymptotic analysis.
Theorem~\ref{thm:post_classification_estimation}\ref{thm:post_classification_estimation-b} establishes $$\sqrt{nG_k}(\widehat{\theta}^*_{\widetilde{C}_k} - \theta_{k0}) \stackrel{d}{\to} N(0, H_{0k}^{-1}\Omega_{0k}H_{0k}^{-1}),$$ and Proposition~\ref{prop:first_step_npl} ensures $\widehat{P}_{\widetilde{C}_k}$ is uniformly consistent for $P_{k0}$.
By the delta method,
\[
\sqrt{nG_k}\big(\widehat{\delta}_{k,p}^{\,*} - \delta^0_{k,p}\big)
= (\nabla\delta^0_{k,p})' \cdot \sqrt{nG_k}\big(\widehat{\theta}^*_{\widetilde{C}_k} - \theta_{k0}\big) + o_p(1)
\stackrel{d}{\to} N\big(0, (\nabla\delta^0_{k,p})' H_{0k}^{-1}\Omega_{0k}H_{0k}^{-1} \nabla\delta^0_{k,p}\big),
\]
which proves part~\ref{cor:ape_asymptotics-a}.
For part~\ref{cor:ape_asymptotics-b}, Theorem~\ref{thm:post_classification_estimation}\ref{thm:post_classification_estimation-c} shows that the bootstrap distribution of $\sqrt{nG_k}a'(\widehat{\theta}^b - \widehat{\theta})$ consistently estimates that of $\sqrt{nG_k}a'(\widehat{\theta} - \theta_0)$ for any non-random vector $a$.
Setting $a = \nabla\delta^0_{k,p}$ and applying the delta method to the bootstrap statistic yields
$$\sqrt{nG_k}(\widehat{\delta}_{k,p}^{\,b} - \widehat{\delta}_{k,p}^{\,*}) = (\nabla\delta^0_{k,p})' \cdot \sqrt{nG_k}(\widehat{\theta}^b - \widehat{\theta}^*) + o_p(1).$$
Taking the supremum over $a\in\mathbb{R}$ conditional on $\boldsymbol{W}$ gives the uniform approximation stated in part~\ref{cor:ape_asymptotics-b}, which implies that the bootstrap confidence interval attains the nominal coverage probability asymptotically.
\end{proof}

\section{Proofs of Lemmas}\label{appendix-lemmas}
\renewcommand{\theequation}{B.\arabic{equation}}
\setcounter{equation}{0}
In this Appendix, we present the proofs of Lemmas \ref{lem:contraction_mapping} and \ref{lem:dev_bounds} in Appendix \ref{appendix}.
\subsection{Proof of Lemma \ref{lem:contraction_mapping}}
The logistic CDF $\Lambda(x)=1/(1+\mathrm e^{-x})$ satisfies
$\frac{d}{d x}
\Lambda(x)=\Lambda(x)[1-\Lambda(x)]$.
Then, the Jacobian matrix
of $\Gamma(\zeta,P)$ with respect to $P$ , denoted by $J_n(\zeta,P)$, satisfies 
\[
    [J_n(\zeta,P)]_{ij}=\frac{\overline\beta F_{ij}}{N_i} \Lambda\Big(\mu+X_{i}'\beta+\overline{P}_i\overline{\beta}\Big)\Big[1-\Lambda\Big(\mu+X_{i}'\beta+\overline{P}_i\overline{\beta}\Big)\Big], \quad \forall\, 1\leq i,j\leq n.
\]
Therefore,  we have 
\begin{equation}
\label{eq:lem1-fact1}
    \Lambda\Big(\mu+X_{i}'\beta+\overline{P}_i\overline{\beta}\Big)\Big[1-\Lambda\Big(\mu+X_{i}'\beta+\overline{P}_i\overline{\beta}\Big)\Big]\in(0,1/4],
\end{equation}
under Assumption \ref{npl}\ref{npl-i} and the condition $\|\zeta\|_{2}<\infty$.
Recalling the definition of the social interaction matrix $F$, we have
\begin{equation}
\label{eq:lem1-fact2}
    F_{ij}\in\{0,1\} \text{ and } \frac{1}{N_i}\sum_{j\in[n]}F_{ij}\leq 1.
\end{equation}
Combining \eqref{eq:lem1-fact1} and \eqref{eq:lem1-fact2}, we have 
\begin{align*}
\left\| {J}_{n}(\zeta,{P})\right\|_{\infty} & =
\max_i \sum_{j=1}^{n}\left|[J_{n}(\zeta,P)]_{ij}\right|\ \\
 & = \max_i  \ |\overline\beta|\sum_{j\in[n]}\frac{F_{ij}}{N_i}\Lambda\Big(\mu+X_{i}'\beta+\overline{P}_i\overline{\beta}\Big)\Big[1-\Lambda\Big(\mu+X_{i}'\beta+\overline{P}_i\overline{\beta}\Big)\Big]\\
& \leq  \max_i  \frac{|\overline\beta|}{4}\sum_{j\in[n]}\frac{F_{ij}}{N_i}\leq\frac{|\overline\beta|}{4}\in(0,1)
\end{align*}
where the last inequality uses the fact $|\overline \beta| \in (0,4)$.
Therefore $\Gamma(\zeta,{P})$ is a contraction mapping. Then by the Banach fixed-point theorem, $P(\zeta)$ exists and is unique.
Moreover,  $I_n-J_n(\zeta,P)$ is invertible as it is strictly diagonally dominant. Hence,  the implicit function theorem implies
$$\frac{\partial P(\zeta)}{\partial\zeta'}=\Big[I_n-J_n(\zeta,P(\zeta))\Big]^{-1}\frac{\partial\Lambda(\zeta,P(\zeta))}{\partial\zeta'}.$$  By the Gershgorin circle theorem  \citep[Theorem 6.1.1]{horn2012matrix}, the smallest eigenvalue of $I_n-J_n(\zeta,P(\zeta))$ is  strictly positive, hence
$$
\label{lemma-existence-fact1}
\left\|\Big[I_n-J_n(\zeta,P(\zeta))\Big]^{-1}\right\|_2\leq C.
$$
Since all elements in $\frac{\partial\Lambda(\zeta,P(\zeta))}{\partial\zeta'}$ are finite, we have 
$$
\label{lemma-existence-fact2}
\left\|\frac{\partial\Lambda(\zeta,P(\zeta))}{\partial\zeta'}\right\|_2\leq \left\|\frac{\partial\Lambda(\zeta,P(\zeta))}{\partial\zeta'}\right\|_F\leq C\sqrt{n}.
$$
Combining the two bounds, we have
$$
\left\| \frac{\partial P(\zeta)}{\partial\zeta}\right\|_2\leq  \left\| \Big[I_n-J_n(\zeta,P(\zeta))\Big]^{-1}\right\|_2\left\| \frac{\partial\Lambda(\zeta,P(\zeta))}{\partial\zeta'}\right\|_2\leq C\sqrt{n}. 
$$

\subsection{Proof of Lemma \ref{lem:dev_bounds}}
For the first deviation bound, because $\zeta$  is finite-dimensional, we may, without loss of generality, consider the derivative with respect to $\mu$, namely $n^{-1}\sum_{i\in[n]}\partial\psi(w_i, 
P_0;\zeta_0)/\partial\mu$. By a direct calculation, we have $$\frac{1}{n}\sum_{i\in[n]}\frac{\partial\psi(w_i, P_0;\zeta_0)}{\partial\mu}=\frac{1}{n}\sum_{i\in[n]}(Y_i-P_{i0}).$$ Since $|Y_i|\leq1$, Hoeffding's inequality \citep{hoeffding1963probability} implies that $$\Pr\Big(\Big|\frac{1}{n}\sum_{i\in[n]}(Y_i-P_{i0})\Big|\geq t\Big)\leq 2\exp\left(-2nt^2\right).$$
The result follows by setting $t=\sqrt{\log(Gn)/n}$.

For the second deviation bound, notice that $$\left|\frac{1}{n}\sum_{i\in[n]}\{\psi(w_i,P;\zeta)-\mathbb{E}[\psi(w_i,P;\zeta)]\}\right|=\left|\frac{1}{n}\sum_{i\in[n]}(Y_{i}-P_{i0})\left(X_{i}'\beta+\mu+\overline{P}_i\overline{\beta}\right)\right|.
$$
By Assumption \ref{npl}, we have $$|(Y_i-P_{i0})(X_{i}'\beta+\mu+\overline{P}_i\overline{\beta})|\leq |X_{i}'\beta+\mu+\overline{P}_i\overline{\beta}|<C$$ uniformly over $(\zeta,P)\in\mathcal{A}\times\Theta\times\mathcal{P}_n$. Then the second bound follows by applying an extended  Hoeffding inequality for empirical processes \citep[Theorem 12.1]{boucheron_concentration_2013}.

\end{document}